\newcommand{\verbatimfont}[1]{\renewcommand{\verbatim@font}{\ttfamily#1}}
\newcommand{\hol}{HOL\xspace}
\newcommand{\dhol}{DHOL\xspace}
\newcommand{\dhole}{DHOL\xspace}
\newcommand{\lambdaFun}[2]{\lambda \varname{#1}\reserved{\ofT}#2.~}
\newcommand{\piType}[2]{\reserved{\Pi} \varname{#1}\reserved{\ofT}\typeC{#2}.~}
\newcommand{\ofT }{\ensuremath{\mathord{:}}}
\newcommand{\subtype}[2]{#1|_{#2}\hspace{0.4mm}}
\newcommand{\quot}[2]{\sfrac{\scaleobj{1.2}{#1}}{\scaleobj{1.2}{#2}}}
\newcommand{\univQuant}[2]{\ensuremath{\reserved{\forall} \,\varname{#1}\reserved{\ofT}\typeC{#2}.~}}
\newcommand{\implC}{\impl}
\newcommand{\judg}{\Gamma\dedT}
\newcommand{\existQuant}[2]{\ensuremath{\exists\,\varname{#1}\reserved{\ofT}\typeC{#2}.~}}
\newcommand{\Thy}[1]{\ensuremath{#1\ \mathsf{Thy}}}
\newcommand{\Ctx}[1]{\ensuremath{#1\ \mathsf{Ctx}}}
\newcommand{\namedax}[2]{\assert#2}
\newcommand{\namedass}[2]{\assertL#2}
\newcommand{\typingAxName}[1]{\ensuremath{#1^*}}
\newcommand{\typingAx}[2]{\namedax{\typingAxName{#2}}{\PredPhi{#1}{#2}}}
\newcommand{\typingAssName}[1]{\ensuremath{#1^*}}
\newcommand{\typingAss}[2]{\namedass{\typingAssName{#2}}{\PredPhi{#1}{#2}}}
\newcommand{\concatCtx}[2]{#1,\;#2}
\newcommand{\concatThy}[2]{#1,\;#2}
\newcommand{\emptyThy}{\circ}
\newcommand{\emptyCtx}{.}
\newcommand{\ctxIn}[2]{\ensuremath{#1\text{ in }#2}}
\newcommand{\thyIn}[2]{\ensuremath{#1\text{ in }\theorycolor{#2}}}
\newcommand{\Type}[1]{\ensuremath{#1\ \type}}
\newcommand{\defaultTerm}[1]{\ensuremath{\termC{w}_{#1}}}
\newcommand\reallywidetilde[1]{%
	\savestack{\tmpbox}{\stretchto{%
			\scaleto{%
				\scalerel*[\widthof{\ensuremath{#1}}]{\kern-.6pt\sim\kern-.6pt}%
				{\rule[-\textheight/2]{1ex}{\textheight}}
			}{\textheight}%
		}{0.5ex}}%
	\stackon[1pt]{#1}{\tmpbox}%
}
\newcommand\widevec[1]{%
\savestack{\tmpbox}{\stretchto{%
		\scaleto{%
			\scalerel*[\widthof{\ensuremath{#1}}]{\kern-.6pt\kern-.6pt}%
			{\rule[-\textheight/2]{1ex}{\textheight}}
		}{\textheight}%
	}{0.5ex}}%
\stackon[1pt]{#1}{\tmpbox}%
}
\newcommand{\PredPhi}[2]{\PredPhiName{#1}\ #2\ #2}
\newcommand{\PredPhiName}[1]{{\typeC{#1}}^{\textcolor{\translationColor}{*}}}
\newcommand{\PhiAppl}[1]{\colorlet{temp}{.}\color{\translationColor}\overline{\color{temp}#1}\color{temp}}
\newcommand{\quasiImage}[1]{\reallywidetilde{#1}}
\newcommand{\termEqT}[3]{\PredPhiName{#1}\ #2\ #3}
\newcommand{\pbool}{\PredPhiName{\bool}}
\newcommand{\boolPred}[1]{\pbool\ #1\ #1}
\newcommand{\betaEtaRed}[1]{\ensuremath{#1^{\beta\eta}}}
\newcommand{\norm}[1]{\ensuremath{\mathsf{repl}\left[#1\right]}}
\newcommand{\sRed}[1]{\ensuremath{\mathsf{sRed}\left(#1\right)}}
\newcommand{\isEqRel}[1]{\mathrm{EqRel}(#1)}
\newcommand{\reserved}[1]{\ensuremath{\textcolor{\reservedColor}{\mathsf{#1}}}}
\definecolor{backcolor}{gray}{.92}
\newcommand{\symbolFont}[1]{\ensuremath{\mathtt{#1}}}
\newcommand{\type}{\reserved{tp}}
\newcommand{\bool}{\reserved{bool}\xspace}
\newcommand{\T}{\reserved{true}\xspace}
\newcommand{\F}{\reserved{false}\xspace}
\newcommand{\dedH}{\mathbin{\vdash^H}}
\newcommand{\dedT}[1][]{\mathbin{\vdash^{#1}_{\theorycolor{T}}}}
\newcommand{\dedPT}[1][]{\mathbin{\vdash^{#1}_{\PsiAppl T}}}
\newcommand{\ded}{\reserved{\ensuremath{\vdash\,}}}
\newcommand{\termEquals}[1]{\ensuremath{\,\reserved{=}_{#1}\,}} 
\newcommand{\assert}{\triangleright\,}
\newcommand{\assertL}{\triangleright\,}
\newcommand{\thy}{\theorycolor{T}}
\newcommand{\thyT}{\theorycolor{\PhiAppl{T}}}
\newcommand{\ctx}{\contextcolor{\Gamma}}
\newcommand{\ctxT}{\contextcolor{\PhiAppl{\Gamma}}}
\newcommand{\ctxD}{\contextcolor{\Delta}}
\newcommand{\termF}{\termC{F}}
\newcommand{\termG}{\termC{G}}
\newcommand{\termFp}{\termC{F'}}
\newcommand{\termf}{\termC{f}}
\newcommand{\termg}{\termC{g}}
\newcommand{\termfp}{\termC{f'}}
\newcommand{\x}{\ensuremath{\varname{x}}\xspace}
\newcommand{\y}{\ensuremath{\varname{y}}\xspace}
\newcommand{\z}{\ensuremath{\varname{z}}\xspace}
\newcommand{\tm}{\ensuremath{\termC{t}}\xspace}
\newcommand{\s}{\ensuremath{\termC{s}}\xspace}
\newcommand{\A}{\ensuremath{\typeC{A}}\xspace}
\newcommand{\typB}{\ensuremath{\typeC{B}}\xspace}
\newcommand{\typC}{\ensuremath{\typeC{C}}\xspace}
\newcommand{\p}{\ensuremath{\termC{p}}\xspace}
\newcommand{\q}{\ensuremath{\termC{q}}\xspace}
\renewcommand{\r}{\ensuremath{\termC{r}}\xspace}
\renewcommand{\a}{\ensuremath{\tpdeclname{a}}\xspace}
\renewcommand{\c}{\ensuremath{\constname{c}}\xspace}
\newcommand{\varn}{\varname{n}\xspace}
\newcommand{\varm}{\varname{m}\xspace}
\newcommand{\varl}{\varname{l}\xspace}
\newcommand{\xp}{\ensuremath{\varname{x'}}}
\newcommand{\zp}{\ensuremath{\varname{z'}}}
\newcommand{\termtp}{\ensuremath{\termC{t'}}\xspace}
\renewcommand{\sp}{\ensuremath{\termC{s'}}}
\newcommand{\Ap}{\ensuremath{\typeC{A'}}}
\newcommand{\Bp}{\ensuremath{\typeC{B'}}}
\newcommand{\pp}{\ensuremath{\termC{p'}}}
\newcommand{\rp}{\ensuremath{\termC{r'}}}
\newcommand{\PhiA}{\ensuremath{\PhiAppl{\A}}\xspace}
\newcommand{\PhiB}{\ensuremath{\PhiAppl{\typB}}\xspace}
\newcommand{\termEqB}{\termEquals{\bool}}
\newcommand{\typeEquals}{\ensuremath{\mathbin{\equiv}}}
\newcommand{\subtyping}{\ensuremath{\mathbin{\prec:}}}
\newcommand{\supertyping}{\ensuremath{\mathbin{:\succ}}}
\newcommand{\substOp}[2]{\ensuremath{[\sfrac{\varname{#1}}{#2}]}}
\newcommand{\subst}[3]{\ensuremath{#1\substOp{#2}{#3}}}
\newcommand{\NDLine}[3]{#1\ded& #2 &&\text{#3}}
\newcommand{\NDLineH}[3]{#1\dedH& #2 &&\text{#3}}
\newcommand{\NDLinePT}[3]{#1\dedPT& #2 &&\text{#3}}
\newcommand{\NDLinePTG}[2]{\PhiAppl{\ctx}\dedPT& #1 &&\text{#2}}
\newcommand{\NDLinePTD}[2]{\NDLinePT{\ctxD}{#1}{#2}}
\newcommand{\NDLineT}[3]{#1\dedT& #2 &&\text{#3}}
\newcommand{\NDLineTG}[2]{\ctx\dedT& #1 &&\text{#2}}
\newcommand{\IH}{Induction hypothesis}
\newcommand{\byAss}{By Assumption}
\newcommand{\termColor}{blue}
\newcommand{\typeColor}{violet}
\newcommand{\translationColor}{purple}
\newcommand{\reservedColor}{black}
\newcommand{\varname}[1]{\ensuremath{\textcolor{\termColor}{#1}}}
\newcommand{\constname}[1]{\ensuremath{\textcolor{\termColor}{\symbolFont{#1}}}}
\newcommand{\tpdeclname}[1]{\ensuremath{\textcolor{\typeColor}{\symbolFont{#1}}}}
\newcommand{\axname}[1]{\ensuremath{\textcolor{orange}{\symbolFont{#1}}}}
\newcommand{\contextcolor}[1]{\ensuremath{\textcolor{red}{#1}}}
\newcommand{\theorycolor}[1]{\contextcolor{#1}}
\newcommand{\bnfcolor}[1]{\ensuremath{\textcolor{black}{#1}}}
\newcommand{\metavar}[1]{\ensuremath{\textcolor{magenta}{#1}}}
\newcommand{\termC}[1]{\color{\termColor}{#1}}
\newcommand{\typeC}[1]{\color{\typeColor}{#1}}
	\renewenvironment{grammar}{\[\begin{array}{l@{\ \bbc\quad}l@{\quad}l}}{\end{array}\]}}{
	\newenvironment{grammar}{\[\begin{array}{lll}}{\end{array}\]}
	\newcommand{\bnf}[1]{\bnfcolor{#1}}
	\newcommand{\bbc}{\bnf{::=}}
\newcommand{\rulelabelAppendix}[2]{%
	\protected@write \@auxout {}{\string \newlabel {#1}{{#2}{\thepage}{#2}{#1}{}} }%
	\hypertarget{#1}{}%
}
\newcommand{\rulelabel}[2]{}
\newcommand{\namedRule}[3]{\rul[\text{}]{#3}{#2}}
\newcommand{\rnamedRule}[4]{\rul[\text{}]{#3}{#2}}
\newcommand{\snamedRule}[3]{\rul[\text{}]{#3}{#2}}
\newcommand{\refnamedRule}[3]{\rul[\text{#1}]{#3}{#2}}
\newcommand{\refrnamedRule}[4]{\rul[\text{#1}]{#3\rulelabelAppendix{#4}{#1}}{#2}}
\newcommand{\refsnamedRule}[3]{\rul[\text{#1}]{#3\rulelabelAppendix{#1}{#1}}{#2}}
\newcommand{\ruleRef}[1]{(\ref{#1})}
\newcommand{\PrefLabelsSuff}[4]{&\label{#1}\addtocounter{#3}{1}\tag{#4\arabic{#3}#2}}
\newcommand{\sredlabel}[1]{\PrefLabelsSuff{#1}{}{SRlabel}{SR}}
\newcommand{\plabel}[1]{\ifbool{inAppendix}{\PrefLabelsSuff{#1}{}{transDefnLabelsP}{PT}}{}}
\newcommand{\Impl}{\tb\text{implies}\tb}
\newcommand{\Mand}{\;\text{and}\;}
\newcommand{\negSp}{\!\!\!\!}
\newcommand{\QNegSp}{\negSp\negSp}
\newcommand{\Quad}{\quad\ \ \!\!\,}
\newcommand{\QQNegSp}{\QNegSp\QNegSp}
\newcommand{\QQuad}{\Quad\Quad}
\newcommand{\QQQNegSp}{\QQNegSp\QQNegSp}
\newcommand{\QQQuad}{\QQuad\QQuad}
\newcommand{\QQQQuad}{\QQQuad\QQQuad}
\newcommand{\QQQQNegSp}{\QQQNegSp\QQQNegSp}
\renewcommand{\obj}{\tpdeclname{\symbolFont{obj}}\xspace}
\newtheorem{thm}{Theorem}
\newtheorem{rem}[thm]{Remark}
\begin{document}
	
\title{Subtyping in DHOL -- Extended preprint}
	
\author{Colin Rothgang\inst{1,2}\orcidID{0000-0001-9751-8989} \and
Florian Rabe\inst{3}\orcidID{0000-0003-3040-3655}}
\authorrunning{Rothgang, Rabe}
\institute{Imdea Software Institute, Madrid, Spain\and Universidad Politécnica de Madrid, Madrid, Spain \and Computer Science, FAU Erlangen-N\"urnberg, Germany}
\maketitle

\begin{abstract}
The recently introduced dependent typed higher-order logic (DHOL) offers an interesting compromise between expressiveness and automation support.
It sacrifices the decidability of its type system in order to significantly extend its expressiveness over standard HOL.
Yet it retains strong automated theorem proving support via a sound and complete translation to HOL.

We leverage this design to extend DHOL with refinement and quotient types.
Both of these are commonly requested by practitioners but rarely provided by automated theorem provers.
This is because they inherently require undecidable typing and thus are very difficult to retrofit to decidable type systems.
But with DHOL already doing the heavy lifting, adding them is not only possible but elegant and simple.

Concretely, we add refinement and quotient types as special cases of subtyping.
This turns the associated canonical inclusion resp. projection maps into identity maps and thus avoids costly changes in representation.
We present the syntax, semantics, and translation to HOL for the extended language, including the proofs of soundness and completeness.
\end{abstract}

\section{Introduction and Related Work}\label{sec:intro}
\paragraph*{Motivation}
Recently dependently typed higher-order logic (DHOL) was introduced \cite{RRB:dhol:23}.
It is a variant of HOL \cite{churchtypes,andrews_truthproof} that uses dependent function types $\piType{x}{\A}\typB$ instead of simple function types $\A\to \typB$.
It is designed to remain as simple and as close to HOL and ATPs as possible while meeting the frequent user demand of dependent types.
Notably, contrary to typical formulations of dependent type theory, DHOL features a straightforward equality and classical Booleans at the cost of making typing undecidable.

Concretely, DHOL uses a type $\bool$ of propositions in the style of HOL, and equality $\s\termEquals{\A}\tm\ofT\,\bool$ of typed terms is a proposition, whose truth may depend on axioms in the theory or assumptions in the context.
Equality $\A\typeEquals \typB$ of types (which is not a proposition but a meta-level judgment) uses a straightforward congruence rule: if a dependent type constructor is applied to equal arguments, it produces equal types.
Thus, equality of types and typing are undecidable.
To yield practical tool support, DHOL reduces typing judgments to a series of proof obligations, and \cite{RRB:dhol:23} gives a sound and complete translation to HOL that allows using existing automated theorem provers for HOL to discharge these.

While undecidable typing is not used by most current ATPs or ITPs, it is justified by the pragmatic consideration that the ultimate task of theorem proving is undecidable anyway, and the difficulty of typing-related proof obligations is often small in comparison.
Follow-up work on DHOL includes a native ATP for DHOL \cite{dhol_satallax} and extensions with a choice operator \cite{dhol_choice} and polymorphism  \cite{dhol_poly}.


\paragraph*{Contribution}
The present paper leverages this key design choice and extends DHOL's expressivity at low cost by adding refinement and quotient types.
Both work elegantly in languages with undecidable typing so that DHOL, for which the necessary meta-theory and infrastructure already exist, is a good base to support them.
Indeed, the necessary changes to DHOL's syntax and semantics turned out to be few and simple --- only the extension of the proof was difficult.
We see DHOL as an intermediate between the automation support of HOL and the more expressive type theories of interactive provers such as those based on richer dependent type theories.
In this sense, the present paper pushes the boundary of ATP-near languages a little further.

\textbf{Refinement types} $\subtype{\A}{\p}$ is the subtype of $\A$ consisting of the terms satisfying the predicate $\p\ofT\A\to\bool$.
They correspond to comprehension in set theory.
They were already proposed in \cite{RRB:dhol:23} (with ad-hoc subtyping rules), and here we give a systematic treatment.
Critically, our refinement types avoid a change in representation: the injection $\subtype{\A}{\p}\to\A$ is always a no-op, i.e., an identity map.
This is in contrast to encodings of refinement types in decidable type theories, such as using the type $\Sigma \x\ofT\A.\p\,x$, or in set-theory, where, e.g., the injection $(\A\to\typB) \to (\subtype\A\p\to\typB)$ is not a no-op.

\textbf{Quotient types} $\quot{\A}{\r}$, intuitively, consist of all equivalence classes of the equivalence relation $\r\ofT\A\to \A\to\bool$.
Again we avoid representation changes:
We use all terms of type $\A$ as terms of type $\quot{\A}{\r}$ and adjust the equality $\termEquals{\quot{\A}{\r}}$ to obtain the quotient semantics.
Thus, the projection $\A\to\quot\A\r$ is a no-op and we have the subtyping relation $\A\subtyping\quot{\A}{\r}$.
In contrast, the usual definitions in set theory (via equivalence classes) or in decidable type theories (via setoids) require explicit\ednote{CR@FR: Reviewer 2 doesn't think this is really expensive, hence I changed expensive to explicit} changes in representation.

The statement $\A\subtyping\quot{\A}{\r}$ may look odd.
It is sound because we use a different equality relation at the two types: $\x\termEquals{\A}\y$ implies $\x\termEquals{\quot{\A}{\r}}\y$ but not the other way round.
This approach captures the mathematical practice of using elements of $\A$ as elements of the quotient, often to the point that readers do not even notice anymore they are technically working with equivalence classes.

\newcommand{\stdots}{\!\subtyping\!\!\ldots\!\!\subtyping\!}
Together, this yields a subtype hierarchy of refinements and quotients of $A$:
$\subtype{\A}{\lambdaFun{x}{\A}\F}
  \stdots
  \subtype{\A}{\p}
  \stdots
  \subtype{\A}{\lambdaFun{x}{\A}\T}\!\typeEquals\!\A\!\typeEquals\!\quot{\A}{\termEquals{\A}}
  \stdots
  \quot{\A}{\r}
  \stdots
  \quot{\A}{\lambdaFun{x,y}{\A}\T}
$
from initial (empty) type to terminal (singleton) type.

\paragraph{Related Work}
Refinement types have been considered on top of HOl in \cite{DBLP:conf/esop/Kuncar017}, but that logic has no subtyping, unlike our extension of DHOL.
Definitional quotient types have been considered in HOL as well in \cite{designHOQ} or \cite{designQIsabelleHol}, but in those logics the quotients are completely new types leading to explicit changes of representations. Even their combination has been considered before in HOL \cite{typesToPERs}, using PERs to represent types, similarly to our own approach, but again with an explicit change of representation rather than subtyping as in our approach for DHOL.

In formal systems, the approach of quotients as supertypes has been adopted only occasionally, e.g., in NuPRL's quotients \cite{nuprl} or in Quotient Haskell \cite{quotHaskell}.
NuPRL's type theory in particular features refinement and quotient types similar to ours and uses essentially the same PER semantics \cite{nuprl_per}.
The main difference to our approach is that DHOL tries to be as close to HOL (and thus HOL ATPs) as possible whereas NuPRL uses a very rich type theory.

PVS \cite{pvs} subsumes DHOL and refinement types with polymorphism.
It does not support quotients, but its record subtyping resembles our quotient subtyping.
Notably, it treats refinement subtyping and record subtyping as two separate judgments with slightly different rules.

In soft type systems, all types are refinements of a fixed universe of objects.
For example, Mizar's \cite{mizar} type system is inherently undecidable and supports dependent types and refinement types.
It supports quotient types but only with a change in representation and not as supertypes.

Most ITPs allow users to construct refinement and quotient types at the cost of representation changes.
Systems based on decidable dependent type theory like Coq \cite{coq} or Lean \cite{lean} typically use $\Sigma$-types for refinement and setoids for quotients.
Lean provides some kernel support for quotients.
Systems based on HOL \cite{hol} use the subtype definition principle to introduce types isomorphic to $\subtype{\A}{\p}$.

\cite{dhol_poly} adds polymorphism to DHOL, and in future work we want to combine both features.
The key challenge will be to support \emph{subtype-bounded} polymorphism.

\paragraph*{Overview}
We give a self-contained definition of DHOL in Sect.~\ref{sec:prelim}.
Then we add subtyping in Sect.~\ref{sec:sub}, refinements in Sect.~\ref{sec:ref}, and quotients in Sect.~\ref{sec:quot}.
We develop the meta-theory in Sect.~\ref{sec:interact} (type normalization) and Sect.~\ref{sec:meta} (soundness/completeness).
We present an application to typed set theory in Sect.~\ref{sec:soft}.

\section{Preliminaries: Dependently Typed Higher-Order Logic}\label{sec:prelim}
  The DHOL \cite{RRB:dhol:23} \textbf{grammar} uses $\termC{\text{terms}}$ and $\typeC{\text{types}}$.
A theory $\thy$ declares typed constants $\c\ofT\A$, axioms $\assert \termF$, and dependent type symbols $\a\ofT\piType{x_1}{\metavar{A_1}}\ldots\piType{x_n}{\metavar{A_n}}\type$, which are applied to terms to obtain base types $\a\ \tm_1\ \ldots\ \tm_n$.
Contexts declare typed variables $\x\ofT\A$ and local assumptions $\assertL\termF$ (but no new types).
Dependent functions $\lambdaFun{x}{\A}\tm$ of type $\piType{x}{\A}\typB$ (written $\A\to \typB$ if \x does not occur free in \typB) map terms $\x\ofT\A$ to terms of type $\typB(\x)$.
We recover HOL as the fragment in which all base types $\a$ have arity $0$, in which case all function types are simple.
\begin{grammar}
	\thy & \emptyThy \alt \thy,\,\a\ofT (\piType{x}{\A}\!\!)^*\,\type \alt \thy,\,\c\ofT \A \alt \thy,\,\namedax{\axname{ax}}{\termF} &\text{theories}\\
	\ctx & \emptyCtx\alt \ctx,\,\x\ofT \A \alt \ctx,\,\assertL\termF	&\text{contexts} \\
	\A, \typB	& \a\ \metavar{\tm}^* \alt \piType{x}{\A} \typB \alt \bool &\text{types}\\
	\s,\tm,\termF,\termG & \c\alt \x \alt \lambdaFun{x}{\A} \tm \alt \s\ \tm\alt \s\termEquals{\A}\tm \alt \termF\impl \termG &\text{terms (incl. propositions)}
\end{grammar}

Following typical HOL-style \cite{andrews_truthproof}, DHOL defines all connectives and quantifiers from the equality connective $\s\termEquals{\A}\tm$.
For example, forall is defined as
$
\univQuant{x}{\A}\termF := \lambdaFun{x}{\A}\termF\termEquals{\piType{x}{\A}\bool}\lambdaFun{x}{\A}\T.
$
In particular, we can define the usual connectives and quantifiers and derive their usual proof rules.
The only subtlety is that, DHOL needs \emph{dependent} binary connectives: in an implication $\termF\implC \termG$, the well-formedness of $\termG$ may depend on the truth of $\termF$, and accordingly for conjunction $\termF\land \termG$ which is defined as $\neg(\termF\implC \neg \termG)$.
Because we see no way to define these from equality, we make dependent implication an additional primitive.

DHOL uses axiomatic equality $\s\termEquals{\A}\tm$ in the style of FOL and HOL with a straightforward congruence rule for base types: type equality $\a\ \s_1\ \ldots\ \s_n\typeEquals\a\ \tm_1\ \ldots\ \tm_n$ holds if each $\s_i$ is equal to $\tm_i$.
This makes type equality and thus typing undecidable.
In line with HOL's simplicity and unlike dependent type theories based on Martin-L\"of type theory \cite{martinlof}, there is no support for type-valued computation like large elimination.


\newcommand{\nat}{\ensuremath{\typeC{\mathrm{nat}}}\xspace}
\newcommand{\zero}{\ensuremath{\termC{\mathrm{zero}}}\xspace}
\newcommand{\suc}{\ensuremath{\termC{\mathrm{succ}}}\xspace}
\newcommand{\plus}{\ensuremath{\termC{\mathrm{plus}}}\xspace}
\newcommand{\List}{\ensuremath{\typeC{\mathrm{list}}}\xspace}
\newcommand{\nil}{\ensuremath{\termC{\mathrm{nil}}}\xspace}
\newcommand{\cons}{\ensuremath{\termC{\mathrm{cons}}}\xspace}
\newcommand{\conc}{\ensuremath{\termC{\mathrm{conc}}}\xspace}
\newcommand{\noConf}{\ensuremath{\typeC{\mathrm{noConf}}}\xspace}
\newcommand{\indList}{\ensuremath{\termC{\mathrm{indList}}}\xspace}
\newcommand{\indPrfList}{\ensuremath{\typeC{\mathrm{indPrfList}}}\xspace}
\newcommand{\ofLength}{\ensuremath{\termC{\mathrm{length}}}\xspace}
\newcommand{\numOcc}{\ensuremath{\termC{\mathrm{numOcc}}}\xspace}
\newcommand{\ite}[3]{\ensuremath{\termC{\mathrm{if}\ #1\ \mathrm{then}\ #2\ \mathrm{else}\ #3}}\xspace}
\newcommand{\nList}{\ensuremath{\typeC{\mathrm{llist}}}\xspace}
\newcommand{\nnil}{\ensuremath{\termC{\mathrm{lnil}}}\xspace}
\newcommand{\ncons}{\ensuremath{\termC{\mathrm{lcons}}}\xspace}
\newcommand{\nconc}{\ensuremath{\termC{\mathrm{lconc}}}\xspace}
\newcommand{\contains}{\ensuremath{\termC{\mathrm{contains}}}\xspace}
\newcommand{\multiset}{\ensuremath{\typeC{\mathrm{multiset}}}\xspace}
\newcommand{\set}{\ensuremath{\typeC{\mathrm{set}}}\xspace}
\newcommand{\Ptp}{\ensuremath{\typeC{\mathrm{P}}}\xspace}
\newcommand{\Pprop}{\ensuremath{\termC{\mathrm{P}}}\xspace}

\begin{example}[Lists]\label{ex:lists}
As an accessible running example, we show a formalization of lists over some type $\obj$, both plain lists $\List$ and lists $\nList\ \varn$ with fixed length.
Notably, the well-typedness of the statement of associativity of $\nconc$ now requires the associativity of $\plus$.
\[\mathll{
  \nat \ofT\;\type,\tb
  \zero\ofT\;\nat,\tb
  \suc \ofT\;\nat\to\nat,\tb
  \plus \ofT\;\nat\to\nat\to\nat,\\
 
	\obj\ofT\;\type,\tb
	\List\ofT\; \type,\tb
	\nil\ofT\; \List,\tb
	\cons\ofT\; \obj\to\List\to\List,\tb
	\conc\ofT\; \List\to\List\to\List,\\

	\nList\ofT\; \nat\to\type,\tb
	\nnil\ofT\; \nList\ \zero,\\
	\ncons\ofT\; \piType{n}{\nat}\obj\to\nList\ \varn\to\nList\ (\suc\ \varn),\\
	\nconc\ofT\; \piType{m,n}{\nat}\nList\ \varm\to\nList\ \varn\to\nList\ (\plus\ \varm \ \varn)
}\]
\end{example}

\begin{figure}[hbt]
\begin{center}
	\begin{tabular}{|c|c|c|}
		\hline 
		Name & Judgment & Intuition \\ 
		\hline
		theories & $\ded \Thy{\thy}$ & $\thy$ is a well-formed theory\\
		contexts & $\dedT \Ctx{\ctx}$ & $\ctx$ is a well-formed context\\
		types & $\ctx\dedT \A\, \type$ & $\A$ is a well-formed type \\ 
		typing & $\ctx\dedT \tm\ofT\A$ & $\tm$ is a well-formed term of well-formed type $\A$ \\
		validity & $\ctx\dedT \termF$ & well-formed Boolean $\termF$ is derivable \\ 
		equality of types & $\ctx\dedT \A\typeEquals \typB$ & well-formed types $\A$ and $\typB$ are equal \\ 
		\hline 
	\end{tabular}
\caption{DHOL Judgments}\label{fig:judge}
\end{center}
\end{figure}

\begin{figure}[hbtp]
{\scriptsize
	Theories and contexts:
	\[
	\snamedRule{thyEmpty}{\ded\Thy{\emptyThy}}{}\tb
	\snamedRule{thyType'}{\ded\Thy{\concatThy{\thy}{\a\ofT \piType{x_1}{\typeC{A_1}}\ldots\piType{x_n}{\typeC{A_n}}\type}}}
	{\dedT \Ctx{\varname{x_1}\ofT \typeC{A_1},\,\ldots,\varname{x_n}\ofT \typeC{A_n}}} \tb
	\snamedRule{thyConst}{\ded\Thy{\concatThy{\thy}{\c\ofT\A}}}{\dedT \Type{\A}}\tb
	\snamedRule{thyAxiom}{\ded\Thy{\concatThy{\thy}{\namedax{c}{\termF}}}}{\dedT \termF\ofT \bool}
	\]
	\[
	\snamedRule{ctxEmpty}{\dedT\Ctx{\emptyCtx}}{\ded\Thy{\thy}}\tb
	\snamedRule{ctxVar}{\dedT\Ctx{\concatCtx{\ctx}{\x\ofT \A}}}{\ctx\dedT \Type{\A}}\tb
	\snamedRule{ctxAssume}{\dedT\Ctx{\concatCtx{\contextcolor{\ctx}}{\namedass{x}{\termF}}}}{\contextcolor{\ctx}\dedT \termF\ofT\bool}
	\]\\
	Well-formedness and equality of types:
	\[
	\snamedRule{type'}{\Gamma \vdash_T\ \a\ \termC{t_1}\ \ldots\ \termC{t_n}\;\type}
	{\begin{array}{c}\a\ofT \piType{x_1}{\typeC{A_1}}\ldots\piType{x_n}{\typeC{A_n}}\type \text{ in }T\\
			\ctx \dedT \termC{t_1}\ofT \typeC{A_1} \ \;\ldots\ \; \ctx \dedT \termC{t_n}\ofT \subst{\typeC{A_n}}{x_1}{t_1}\ldots\substOp{x_{n-1}}{t_{n-1}}\end{array}}\tb
	\rnamedRule{bool}{\ctx\dedT \Type{\bool}}{\dedT\Ctx\ctx}{bool'}\tb
	\snamedRule{pi}{\Gamma\dedT \piType{x}{\A}\typB\ \type}{\ctx\dedT \Type{\A} \quad \concatCtx{\ctx}{\x\ofT \A\dedT \Type{\typB}}}\tb
	\]
	\[
	\snamedRule{congBase'}{\ctx\dedT\a\ \termC{s_1}\ \ldots\ \termC{s_n}\typeEquals \a\ \termC{t_1}\ \ldots \termC{t_n}}
	{\begin{array}{c}\a\ofT \piType{x_1}{A_1}\ldots\piType{x_n}{A_n}\type\text{ in }\thy\\
			\ctx \dedT \termC{s_1}\termEquals{A_1}\termC{t_1}\, \ldots\, \ctx \dedT \termC{s_n}\termEquals{\subst{\typeC{A_n}}{x_1}{t_1}\ldots\substOp{x_{n-1}}{t_{n-1}}}\termC{t_n}\end{array}}\quad
  \rnamedRule{cong$\bool$}{\ctx\dedT \bool\typeEquals\bool}{\dedT\Ctx\ctx}{bool'}\quad
	\rnamedRule{cong$\Pi$}{\ctx\dedT\piType{x}{\A} \typB\typeEquals \piType{x}{\Ap}\Bp}{\ctx\dedT \A\typeEquals \Ap \ \;  \ctx,\x\ofT \A\dedT \typB\typeEquals \Bp}{congPi}
	\]
	\\Typing:
		\[
	\rnamedRule{const'}{\ctx\dedT \constname{c}\ofT \A}{c\ofT \Ap\thyIn{}{T}\tb \ctx\dedT\Ap\typeEquals\A}{const''}\tb
	\snamedRule{lambda'}{\ctx\dedT (\lambdaFun{x}{\A} \tm)\ofT  \piType{x}{\Ap}\typB}{\concatCtx{\ctx}{\x\ofT \A}\dedT \tm\ofT \typB\tb \Ap\typeEquals\A}\tb
	\rnamedRule{$\impl$type'}{\ctx\dedT \termF\Rightarrow \termC{G}\ofT\bool}{\ctx\dedT \termF\ofT\bool\tb \concatCtx{\ctx}{\namedass{x}{\termF}}\dedT \termC{G}\ofT\bool}{implType'}\tb
	\]
	\[
	\rnamedRule{var'}{\ctx\dedT \x\ofT \A}{\x\ofT \Ap\ctxIn{}{\ctx}\tb \ctx\dedT\Ap\typeEquals\A}{var''}\tb
	\snamedRule{appl'}{\ctx\dedT \termf\,\tm\ofT\subst{\typB}{x}{\typeC{t}}}{\ctx\dedT \termf\ofT \piType{x}{\A} \typB \tb \ctx\dedT \tm\ofT \A}\tb
	\rnamedRule{$=$type}{\ctx\dedT \s\termEquals{\A}\typeC{t}\ofT\bool}{\ctx\dedT \s\ofT\A\tb \ctx\dedT \tm\ofT \A}{eqType'}
	\]
	Equality: congruence, reflexivity, symmetry, $\beta$, $\eta$ (transitivity and extensionality are derivable):
	\[
	\rnamedRule{cong$\lambda$'}{\ctx\dedT \lambdaFun{x}{\A} t\termEquals{\piType{x}{\A}\typB} \lambdaFun{x}{\Ap} t'}{\ctx\dedT \A\typeEquals \Ap \ \;\concatCtx{\Gamma}{\x\ofT \A} \dedT \tm\termEquals{\typB} \termtp}{congLam'}
	\quad	
	\snamedRule{congAppl'}{\ctx\dedT \termf\ \tm\termEquals{\typB} \termfp\ \termtp}{\ctx\dedT \tm\termEquals{\A} \termtp\quad \ctx\dedT \termf\termEquals{\piType{x}{\A} \typB} \termfp}
	\]
	\[
	\snamedRule{refl}{\ctx\dedT \tm\termEquals{\A} \tm}{\ctx\dedT \tm\ofT \A}\tb
	\snamedRule{sym}{\ctx\dedT \s\termEquals{\A} \tm}{\ctx\dedT \tm \termEquals{\A}\s}\tb
	\snamedRule{beta}{\ctx\dedT (\lambdaFun{x}{\A} \s)\ \tm \termEquals{B} \subst{\s}{x}{\tm} }{\ctx\dedT (\lambdaFun{x}{\A} \s)\ \tm\ofT \typB}\tb
	\snamedRule{etaPi}{\ctx\dedT \tm\termEquals{\piType{x}{\A} \typB} \lambdaFun{x}{\A}\tm\ \x}{\ctx\dedT \tm\ofT \piType{x}{\A} \typB}
	\]\\
	Rules for validity: lookup, implication, Boolean equality and Boolean extensionality
	\[
	\snamedRule{axiom}{\ctx\dedT \termF}{\thyIn{\namedax{c}{\typeC{F}}}{\thy} \tb  \dedT\Ctx{\ctx}}\tb
	\rnamedRule{$\impl$I}{\ctx\dedT \termF\Rightarrow \termC{G}}{\concatCtx{\ctx}{\namedass{x}{\termF}}\dedT \termC{G}}{implI}\tb
	\rnamedRule{cong$\ded$}{\ctx\dedT \termF}{\ctx\dedT \termF\termEquals{\bool} \termC{F'}\tb \ctx\dedT \termC{F'}}{congDed}
	\]
	\[
	\snamedRule{assume}{\ctx\dedT \termF}{\ctxIn{\namedass{x}{\termF}}{\ctx} \tb  \dedT\Ctx{\ctx}}\tb
	\rnamedRule{$\impl$E}{\ctx\dedT \termC{G}}{\ctx\dedT \termF\Rightarrow \termC{G}\tb \ctx\dedT \termF}{implE} \tb
	\snamedRule{boolExt}{\ctx,\x\ofT \bool\dedT \p\ \x}{\ctx\dedT \p\ \T \tb \ctx\dedT \p\ \F}
	\]
}
\caption{DHOL Rules}\label{fig:rules}
\end{figure}

DHOL uses the \textbf{judgments} given in Fig.~\ref{fig:judge} and the \textbf{rules} listed in Fig.~\ref{fig:rules}.
Note that equality of terms is a special case of validity, whereas equality of types is not a Boolean but a separate judgment.
Thus, users cannot state axioms equating types, and type equality is defined only by congruence.
The rules are straightforward.
In particular, type equality is checked structurally and reduced to a set of term equalities, which must then be discharged by an ATP.

Furthermore many well-known admissible HOL rules are also admissible in DHOL, see Appendix~\ref{sec:meta-thm:1}.


\begin{figure}[hbtp]\scriptsize
	\begin{gather*}
	\text{Theories and contexts:}\qquad \PhiAppl{\emptyThy} \;:=\;\emptyThy \tb
	\PhiAppl{\concatThy{\thy}{\metavar{D}}}\;:=\;\concatThy{\PhiAppl{\thy}}{\PhiAppl{\metavar{D}}}\tb
	\PhiAppl{\emptyCtx}\;:=\;\emptyCtx \tb
	\PhiAppl{\concatCtx{\ctx}{\metavar{D}}}\;:=\;\concatThy{\ctxT}{\PhiAppl{\metavar{D}}}
	\\
	\mathll{\PhiAppl{\a\ofT  \piType{x_1}{A_1}\ldots\piType{x_n}{A_n}\type} \;:=\;
  \a\ofT  \type, \;\;
  \PredPhiName{a}\ofT  \PhiAppl{\typeC{A_1}}\to\ldots\to \PhiAppl{\typeC{A_n}} \to \a\to \a\to \bool,\;\;\\
  \tb\namedax{a_{PER}}{} \univQuant{x_1}{\PhiAppl{\typeC{A_1}}}\ldots\univQuant{x_n}{\PhiAppl{\typeC{A_n}}}\univQuant{u,v}{\a} \PredPhiName{a} \ \varname{x_1}\ \ldots\ \varname{x_n}\ \varname{u}\ \varname{v}\impl \varname{u}\termEquals{\typeC{a}}\varname{v}}
	\\
%
	\PhiAppl{\constname{c}\ofT \A} \;:=\; \constname{c}\ofT \PhiAppl{\A},\;\namedax{\typingAxName{c}}{\PredPhi{A}{\constname{c}}} \tb\tb
	\PhiAppl{\x\ofT \A} \;:=\; \x\ofT \PhiAppl{\A},\;\namedass{\typingAssName{x}}{\PredPhi{A}{\x}}
	\\
	\PhiAppl{\namedax{c}{\termF}} \;:=\; \namedass{c}{\PhiAppl{\termF}} \tb\tb
	\PhiAppl{\namedass{x}{\termF}}\;:=\; \namedass{x}{\PhiAppl{\termF}}
	\\
	\text{Types:}\qquad\PhiAppl{\a\ \termC{t_1}\ \ldots \ \termC{t_n}} \;:=\; \a \tb\tb 
	\termEqT{(\a\ \termC{t_1}\ \ldots \ \termC{t_n})}{\s}{\tm} \;:=\; \PredPhiName{a}\ \PhiAppl{\termC{t_1}}\ \ldots\ \PhiAppl{\termC{t_n}}\ \s\ \tm
	\\
	\PhiAppl{\piType{x}{\A}\typB} \;:=\; \PhiAppl{\A} \to \PhiAppl{\typB}\tb\tb
	\termEqT{(\piType{x}{\A}\typB)}{\termf}{\termC{g}} \;:=\; \univQuant{x,y}{\PhiAppl{\A}}
	\termEqT{\A}{\x}{\y}\impl \termEqT{\typB}{\left(\termf\ \x\right)}{\left(\termC{g}\ \y\right)}
	\\
	\PhiAppl{\bool} \;:=\; \bool \tb\tb
	\termEqT{\bool}{\s}{\tm}\;:=\; \s\termEqB \tm
	\\
	\text{Terms:}\qquad\PhiAppl{\constname{c}} \;:=\; \constname{c} \tb 
	\PhiAppl{\x} \;:=\; \x \tb
	\PhiAppl{\lambdaFun{x}{\A} \tm} \;:=\; \lambdaFun{x}{\PhiAppl{\A}} \PhiAppl{\tm} \tb
	\PhiAppl{\termf\ \tm} \;:=\; \PhiAppl{\termf}\ \PhiAppl{\tm}
	\\
	\PhiAppl{\s\termEquals{\A}\tm} \;:=\; \termEqT{A}{\PhiAppl{\s}}{\PhiAppl{\tm}}\tb
	\PhiAppl{\termF\Rightarrow \termC{G}} \;:=\; \PhiAppl{\termF} \impl \PhiAppl{\termC{G}}
	\end{gather*}
\begin{center}
\normalsize
	\begin{tabular}{|l|l|}
		\hline
		if in DHOL & then in HOL \\
		\hline
		type $\A$ & type $\PhiAppl{\A}$ and PER $\PredPhiName{\A}\ofT \PhiAppl{\A}\to\PhiAppl{\A}\to\bool$ \\
		term $\tm\ofT \A$ & term $\PhiAppl{\tm}\ofT \PhiAppl{\A}$ satisfying $\PredPhi{\A}{\PhiAppl{\tm}}$\\
		\hline
	\end{tabular}
\end{center}
\caption{Definition of the Translation DHOL$\to$HOL}\label{fig:trans2}
\end{figure}

The \textbf{semantics} for DHOL and a \textbf{practical ATP workflow} are given by a sound and complete translation to HOL.
The translation is dependency erasure, e.g., translating dependent types $\a\ \termC{t_1}\ \ldots\ \termC{t_n}$ to simple types $\a$, effectively ```merging''' all instances of a dependent type into a large simple type.
Fig.~\ref{fig:trans2} shows the details.

Typing and equality at $\A$ are recovered by generating a partial equivalence relation (PER) $\PredPhiName{\A}$ for every HOL-type $\PhiAppl{\A}$.
A PER is a symmetric-transitive relation and the same as an equivalence relation on a subtype of $\PhiAppl{\A}$.
Thus, $\A$ corresponds in HOL to the quotient of the appropriate subtype of $\PhiAppl{\A}$ by $\PredPhiName{\A}$.

DHOL terms are translated to their HOL analogues except that equality is translated to the respective PER: $\PhiAppl{\s\termEquals{\A}\tm}:=\PredPhiName{\A}\ \PhiAppl \s\ \PhiAppl \tm$.
In particular, the predicate $\PredPhi{A}{\PhiAppl{\tm}}$ captures whether $\tm$ is a term of type $\A$.
For $n$-ary type symbols $\a$, the translation generates an $n+2$-ary predicate $\PredPhiName{a}$ such that $\PredPhiName{a}\ \PhiAppl{\termC{t_1}}\ \ldots\ \PhiAppl{\termC{t_n}}$ is the PER for $\a\ \termC{\termC{t_1}}\ \ldots\ \termC{\termC{t_n}}$.
For function types, the PER is the usual condition for logical relations: functions are related if they map related inputs to related outputs.

\section{Subtyping}\label{sec:sub}
The treatment of quotients as supertypes and the use of different equality relations at different types are subtly difficult.
Thus, we first introduce subtyping by its defining extensional property, from which we will derive all subtyping rules:

\newcommand{\STequiv}{STantisym}

\begin{definition}[Subtyping]\label{def:st}
$\ctx\dedT \A\subtyping \typB$ abbreviates $\ctx,\x\ofT\A\dedT \x\ofT\typB$.
\end{definition}
\begin{lemma}\label{lem:stder}
In any extension of DHOL, $\ctx\dedT \A\subtyping\typB$ iff $\rul{\ctx\dedT \tm\ofT \A}{\ctx\dedT \tm\ofT \typB}$ is derivable.
\end{lemma}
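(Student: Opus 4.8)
The plan is to unfold Definition~\ref{def:st} --- so that $\ctx\dedT\A\subtyping\typB$ means literally $\ctx,\x\ofT\A\dedT\x\ofT\typB$ --- and then prove the two directions of the equivalence separately. Both directions reduce to a single application of the variable typing rule, wrapped in an appeal to a structural meta-rule: weakening for one direction and substitution for the other. These are admissible in DHOL and, being purely structural, persist in the extensions we consider, so the argument is uniform. Throughout I treat $\tm$ as the schematic metavariable of the rule while $\ctx$, $\A$, and $\typB$ are fixed, and I use that the notation $\A\subtyping\typB$ presupposes $\ctx\dedT\A\,\type$, so that the context $\ctx,\x\ofT\A$ is well-formed.

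For the direction from the derivable rule to the subtyping judgment, I would work in the context $\ctx,\x\ofT\A$. There the variable rule yields $\ctx,\x\ofT\A\dedT\x\ofT\A$, since $\x\ofT\A$ is the last declaration and $\A\typeEquals\A$ by reflexivity of type equality. Weakening the assumed derivable rule from $\ctx$ to $\ctx,\x\ofT\A$ and instantiating its schematic term with $\x$ then transforms this premise into the conclusion $\ctx,\x\ofT\A\dedT\x\ofT\typB$, which is exactly $\ctx\dedT\A\subtyping\typB$ by Definition~\ref{def:st}.

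For the converse I would assume $\ctx\dedT\A\subtyping\typB$, i.e.\ $\ctx,\x\ofT\A\dedT\x\ofT\typB$, and derive the rule. Given any $\tm$ with $\ctx\dedT\tm\ofT\A$, the admissible substitution rule lets me replace $\x$ by $\tm$ throughout the derivation of $\ctx,\x\ofT\A\dedT\x\ofT\typB$, producing a judgment of the form $\ctx\dedT\tm\ofT\subst{\typB}{x}{\tm}$ (the subject $\x$ becoming $\tm$). Because $\typB$ is a type over $\ctx$, it does not contain $\x$, so $\subst{\typB}{x}{\tm}=\typB$ and we obtain $\ctx\dedT\tm\ofT\typB$, establishing the rule. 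The only real care needed is in these two appeals to the structural meta-rules --- in particular checking that weakening and substitution are genuinely available in the extension at hand --- together with the elementary observation that the supertype $\typB$ is free of $\x$; the rest is routine.
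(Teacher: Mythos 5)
Your first direction (from the derivable rule to $\ctx\dedT\A\subtyping\typB$) is exactly the paper's argument, with the implicit weakening of the rule's derivation to the context $\ctx,\x\ofT\A$ spelled out; that part is fine. The converse is where you diverge, and where there is a genuine gap. You derive $\ctx\dedT\tm\ofT\typB$ by invoking an \emph{admissible} substitution meta-rule to rewrite the derivation of $\ctx,\x\ofT\A\dedT\x\ofT\typB$. But the lemma asserts that the rule is \emph{derivable}, and the distinction is not pedantic here: the paragraph immediately following the lemma relies on it, pointing out that for ``incidental'' subtype instances the rule is admissible but not derivable, which is precisely what makes Def.~1 well-behaved under theory and language extensions. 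A substitution meta-theorem gives you, for each derivation of the premise, some derivation of the conclusion; turning that into a single uniform derivation tree with $\ctx\dedT\tm\ofT\A$ as an open leaf is an extra step you do not supply. Worse, the lemma is claimed for \emph{any} extension of DHOL, and admissibility of substitution (unlike derivability from the core rules) is exactly the kind of global property that can be destroyed by adding new rules; you acknowledge this caveat but do not discharge it, so the proof does not actually cover the stated generality.

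The paper avoids both problems by staying at the object level: from $\ctx,\x\ofT\A\dedT\x\ofT\typB$ the lambda rule gives $\ctx\dedT(\lambdaFun{x}{\A}\x)\ofT\A\to\typB$; the application rule then yields $(\lambdaFun{x}{\A}\x)\ \tm\ofT\typB$ from the open premise $\ctx\dedT\tm\ofT\A$, and $\beta$-reduction together with congruence of typing transfers this to $\tm\ofT\typB$. This is a single schematic derivation built only from rules present in every extension, so it establishes derivability directly. If you want to salvage your route, you would have to either prove that your substitution argument produces a uniform derivation (and that substitution remains admissible in all the extensions considered), or switch to the identity-function construction.
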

\begin{proof}
Left-to-right: We construct the function $(\lambdaFun{x}{\A}\x) \ofT \A\to\typB$ and derive the desired rule using the typing rule for function application.\\
Right-to-left: We start with $\ctx,\x\ofT\A\dedT \x\ofT\A$ and apply the derivable rule.
\end{proof}

This subtyping relation prevents \emph{incidental} subtype instances, for which the rule from Lem.~\ref{lem:stder} is admissible but not derivable.
For example, $\subtype{\A}{\lambdaFun{x}{\A}\F}$ is a subtype of all refinements of $\A$, but not of all types.
More generally, this definition precludes using induction on the terms of $\A$ to conclude $\A\subtyping \typB$.
This restriction ensures that subtyping is preserved under, e.g., theory extensions, substitution, or language extensions.
Importantly, subtyping preserves equality:

\begin{lemma}\label{lem:steq}
Consider some extension of DHOL with productions and rules. Assume ($\ast$) that $\ctx\dedT \x\termEquals{\typB} \x$ implies $\ctx\dedT\x:\typB$.
Let $F:=\univQuant{x}{\A}\univQuant{y}{\A}\x\termEquals{\A}\y\implC \x\termEquals{\typB}\y$.
Then $\ctx\dedT \A\subtyping\typB$ iff $\ctx\dedT F:\bool$; and if these hold, then also $\ctx\dedT F$.
\end{lemma}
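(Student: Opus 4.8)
The plan is to establish the two directions of the equivalence and the validity claim as three separate arguments; all but one step are bookkeeping with the rules of Fig.~\ref{fig:rules}, and that one step is exactly what hypothesis $(\ast)$ is meant to bridge. For the direction $\ctx\dedT\A\subtyping\typB\Rightarrow\ctx\dedT F\ofT\bool$, I would use that by Lem.~\ref{lem:stder} the rule $\rul{\ctx\dedT\tm\ofT\A}{\ctx\dedT\tm\ofT\typB}$ is derivable, so in $\ctx,\x\ofT\A,\y\ofT\A$ both $\x$ and $\y$ have type $\typB$. Then $\x\termEquals{\A}\y$ and $\x\termEquals{\typB}\y$ are well-typed Booleans by the equality-typing rule, their implication is well-typed by the implication-typing rule (whose assumption $\x\termEquals{\A}\y$ is added to the context but is not needed to type the consequent), and the two quantifiers are handled by the admissible $\forall$-typing rule. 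No appeal to $(\ast)$ is required here.

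For the validity claim, assume $\ctx\dedT\A\subtyping\typB$. By the left-to-right part of Lem.~\ref{lem:stder} the inclusion $\lambdaFun{x}{\A}\x$ is a well-typed term of type $\A\to\typB$. Working in the context $\ctx,\x\ofT\A,\y\ofT\A,\assertL\x\termEquals{\A}\y$, I would take reflexivity of this inclusion at $\A\to\typB$ and apply congruence for application to it together with the assumption $\x\termEquals{\A}\y$, obtaining $(\lambdaFun{x}{\A}\x)\ \x\termEquals{\typB}(\lambdaFun{x}{\A}\x)\ \y$. Two $\beta$-reductions rewrite the two sides to $\x$ and $\y$, so $\x\termEquals{\typB}\y$ follows by transitivity. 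Discharging the assumption by implication introduction and introducing the two universal quantifiers then yields $\ctx\dedT F$, again without using $(\ast)$.

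The remaining direction $\ctx\dedT F\ofT\bool\Rightarrow\ctx\dedT\A\subtyping\typB$ is the crux. Inverting the well-typedness of $F$ through the $\forall$-, implication- and equality-typing rules yields $\ctx,\x\ofT\A,\y\ofT\A,\assertL\x\termEquals{\A}\y\dedT\x\ofT\typB$. I expect this to be the main obstacle: a validity assumption such as $\x\termEquals{\A}\y$ cannot be discharged from a \emph{typing} judgment the way implication introduction/elimination discharge it from a \emph{validity} judgment. The way around is to move to the validity world. Reflexivity turns $\x\ofT\typB$ into $\x\termEquals{\typB}\x$ in the same context; substituting $\y:=\x$ (admissible, since $\ctx,\x\ofT\A\dedT\x\ofT\A$) rewrites the assumption into the derivable $\x\termEquals{\A}\x$; and implication introduction followed by reflexivity and implication elimination discharges it, leaving $\ctx,\x\ofT\A\dedT\x\termEquals{\typB}\x$. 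Finally $(\ast)$ converts this equality back into the typing $\ctx,\x\ofT\A\dedT\x\ofT\typB$, which by Def.~\ref{def:st} is exactly $\ctx\dedT\A\subtyping\typB$. This conversion is the sole use of $(\ast)$, and it is what lets the otherwise-blocked discharge of the assumption go through.
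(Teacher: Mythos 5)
Your proposal is correct and takes essentially the same route as the paper's: the identity map $\lambdaFun{x}{\A}\x$ together with congruence of application, reflexivity and $\beta$-reduction for the forward direction and the validity of $F$, and for the converse the detour through the validity judgment (reflexivity, substituting $\y:=\x$, discharging the now-derivable assumption by $\implC$-introduction/elimination) followed by the single application of $(\ast)$ --- which is exactly what the paper compresses into ``instantiating $F$ and applying modus ponens, from which we get $\x\ofT\typB$''. Your version merely spells out the steps the paper leaves implicit, so nothing further is needed.
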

($\ast$) is a very mild assumption and satisfied by all extensions given in this paper.
\begin{proof}
Left-to-right: The assumption yields $\ctx,\x\ofT\A,\,\y\ofT\A,\, \assertL\x\termEquals{\A}\y\dedT (\lambdaFun{x}{\A}\x)\ofT\A\to \typB$.
We get $\ctx,\x\ofT\A,\,\y\ofT\A,\,\assertL \x\termEquals{\A}\y\dedT$ $ (\lambdaFun{x}{\A}\x)\ \x\termEquals{\typB}(\lambdaFun{x}{\A}\x)\ \y$ from congruence of function application and reflexivity and $\x\termEquals{\typB} \y$ by $\beta$-reduction.\\
Right-to-left: Assume $\x:\A$. Instantiating $\termF$ twice with $\x$ and applying modus ponens with $\x\termEquals\A\x$ yields $\x\termEquals\typB\x$, from which we get $\x:\typB$.
\end{proof}

\begin{lemma}[Preorder of Types]
In any extension of DHOL, subtyping is reflexive (in the sense that $\ctx\dedT \A\typeEquals \typB$ implies $\ctx\dedT \A\subtyping \typB$) and transitive.	
\end{lemma}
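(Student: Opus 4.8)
The plan is to reduce both properties to Definition~\ref{def:st} together with the structural admissible rules of DHOL (weakening and substitution) and the characterization in Lemma~\ref{lem:stder}. By Definition~\ref{def:st}, $\ctx\dedT\A\subtyping\typB$ is literally the judgment $\ctx,\x\ofT\A\dedT\x\ofT\typB$, so in each case I only have to produce a typing derivation for a bound variable in a suitably extended context.

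For reflexivity, I assume $\ctx\dedT\A\typeEquals\typB$ and must show $\ctx,\x\ofT\A\dedT\x\ofT\typB$. First I weaken the type equality into the extended context, obtaining $\ctx,\x\ofT\A\dedT\A\typeEquals\typB$; weakening of type equality is admissible, since adding the fresh hypothesis $\x\ofT\A$ cannot affect an already-derivable judgment. In that context the declaration $\x\ofT\A$ is available, so the variable typing rule $(\mathsf{var}')$ --- which ascribes to $\x$ any type equal to its declared type --- yields $\ctx,\x\ofT\A\dedT\x\ofT\typB$ directly. This is exactly $\ctx\dedT\A\subtyping\typB$.

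For transitivity, I assume $\ctx\dedT\A\subtyping\typB$ and $\ctx\dedT\typB\subtyping\typC$, i.e.\ $\ctx,\x\ofT\A\dedT\x\ofT\typB$ and $\ctx,\y\ofT\typB\dedT\y\ofT\typC$, and must derive $\ctx,\x\ofT\A\dedT\x\ofT\typC$. The cleanest route is to compose by substitution: from the second judgment, substituting the term $\x$ --- which, by the first judgment, has type $\typB$ in the context $\ctx,\x\ofT\A$ --- for the variable $\y$ gives $\ctx,\x\ofT\A\dedT\x\ofT\typC$, using admissibility of substitution (cut). Equivalently, via Lemma~\ref{lem:stder} I can build the identity coercions $\lambdaFun{x}{\A}\x\ofT\A\to\typB$ and $\lambdaFun{y}{\typB}\y\ofT\typB\to\typC$, apply them in turn to $\x$ in context $\ctx,\x\ofT\A$, and $\beta$-reduce to recover $\x\ofT\typC$.

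The only real obstacle is the bookkeeping of contexts: both arguments rely on weakening and on substitution being admissible in the extension under consideration. These are standard admissible rules of DHOL (cf.\ the admissible-rule discussion in the preliminaries), and Definition~\ref{def:st} was deliberately phrased --- insisting on derivable rather than merely admissible subtyping, as discussed after Lemma~\ref{lem:stder} --- precisely so that subtyping is stable under such context manipulations. Consequently the argument goes through uniformly in any extension of DHOL, and no genuinely hard step remains.
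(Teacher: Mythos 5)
Your proof is correct and follows essentially the same route as the paper's: both arguments reduce everything to Definition~\ref{def:st} and Lemma~\ref{lem:stder}. The only cosmetic differences are that for reflexivity you apply the variable rule directly to the declared $\x\ofT\A$ after weakening the type equality, where the paper takes the equivalent detour of typing the coercion $\lambdaFun{x}{\A}\x$ at $\A\to\typB$, applying it, and $\beta$-reducing; and for transitivity both proofs chain the derived rule of Lemma~\ref{lem:stder}, with you making explicit the weakening/substitution bookkeeping that the paper leaves implicit in ``follows immediately''.
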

\begin{proof}
Reflexivity: The assumption yields $\ctx\dedT (\lambdaFun{x}{\A}\x)\ofT \A\to \typB$.
Applying both to a term $\tm$ of type $\A$ and $\beta$-reducing yields the rule from Lem.~\ref{lem:stder}.
Transitivity follows immediately from Lem.~\ref{lem:stder}.
\end{proof}

We also want to make subtyping an order.
Anti-symmetry with respect to $\typeEquals$ is not derivable directly, i.e., we might have $\ctx\dedT \A\subtyping \typB$ and $\ctx\dedT \typB\subtyping \A$, in which case $\A$ and $\typB$ would have the same terms, without being equal.
Therefore, we \emph{add} the anti-symmetry rule
\[
\refsnamedRule{\STequiv}{\ctx\dedT \A\typeEquals \typB}{\ctx\dedT \A\subtyping \typB\tb \ctx\dedT \typB\subtyping \A}
\]
Notably, this is the only \emph{change} made to DHOL so far --- everything before has just been abbreviations.
This change is conservative in the following sense:

\begin{theorem}[Conservativity]
  For	DHOL as defined so far (without the extension we introduce below), we have
	$\A \subtyping \typB$ iff $\A\typeEquals \typB$.
\end{theorem}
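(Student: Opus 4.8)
The plan is to prove the two implications separately. The direction $\A\typeEquals\typB \Rightarrow \A\subtyping\typB$ is immediate from the reflexivity of subtyping already established in the preorder lemma, so all the content lies in the converse. For that I would unfold Def.~\ref{def:st} and argue directly about a derivation of $\ctx,\x\ofT\A\dedT\x\ofT\typB$.

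The key device is an inversion principle. Since the subject of this typing judgment is a bare variable $\x$ --- not an application, a $\lambda$-abstraction, an equality, or an implication --- the only typing rule in Fig.~\ref{fig:rules} that can conclude it is the variable rule. Reading off its premises, the type under which $\x$ is declared in $\ctx,\x\ofT\A$ is exactly $\A$, so the rule forces $\ctx,\x\ofT\A\dedT\A\typeEquals\typB$. Note that this inversion is completely unaffected by the freshly added anti-symmetry rule, which concludes a type equality rather than a typing of a variable; it is also irrelevant \emph{how} the equality premise $\A\typeEquals\typB$ was obtained, since I only need to transport it to the smaller context.

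From $\ctx,\x\ofT\A\dedT\A\typeEquals\typB$ I would then conclude $\ctx\dedT\A\typeEquals\typB$ by strengthening, i.e. by deleting the hypothesis $\x\ofT\A$, which is legitimate only because $\x$ occurs free in neither $\A$ nor $\typB$ (both are already well-formed in $\ctx$). I expect this strengthening step to be the main obstacle. It cannot be read off any single rule but requires a structural induction over derivations, and it is delicate precisely because typing, validity, and type equality are mutually recursive: eliminating $\x$ from a type-equality derivation forces its elimination from the term-equality subgoals produced by the congruence rule for base types, hence from arbitrary validity subderivations. I would therefore formulate an admissible weakening/strengthening lemma covering all judgment forms simultaneously, prove it by induction (tracking it also through the anti-symmetry rule), and verify that no subderivation genuinely depends on the eigenvariable $\x$. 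Once strengthening is in hand, combining it with the inversion step yields $\ctx\dedT\A\typeEquals\typB$ and closes the argument. Particular care is warranted around types that need not be inhabited, where the presence of a free $\x\ofT\A$ can in principle license conclusions unavailable without it; ensuring the strengthening lemma is stated so as to remain sound in that situation is the subtle point of the whole proof.
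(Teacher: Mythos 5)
Your proposal follows essentially the same route as the paper: the easy direction is reflexivity of $\subtyping$ from the preorder lemma, and the hard direction rests on the fact that a variable has a unique type up to $\typeEquals$ --- the paper proves this as a general unique-typing lemma by induction on derivations, of which your inversion on the variable rule (the only typing rule whose subject is a bare variable) is exactly the relevant instance. The one place you go beyond the paper is the context-strengthening step from $\ctx,\x\ofT\A\dedT\A\typeEquals\typB$ to $\ctx\dedT\A\typeEquals\typB$: the paper's one-line proof is silent on it, and the worry you raise about non-inhabited types is legitimate, since a provably empty $\A$ lets the extended context derive arbitrary term equalities by explosion and hence, via the congruence rule for base types, type equalities that need not survive the removal of $\x\ofT\A$; so your more cautious formulation is, if anything, more honest than the paper's own sketch.
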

\begin{proof}
We show by induction on derivations that each term has a unique type up to type equality and that all term equality axioms preserve typing.
\end{proof}

\begin{theorem}[Variance and Congruence for Function Types]\label{lem:funeq}
The usual rules for function types are derivable:
\[
\rnamedRule{$\subtyping\!\!$Pi}{\ctx\dedT \piType{x}{\A}\typB\subtyping \piType{x}{\Ap}\Bp}{\ctx\dedT \Ap\subtyping \A\tb \concatCtx{\ctx}{\x\ofT \Ap}\dedT \typB\subtyping \Bp}{subtPi}\tb
\rnamedRule{$\subtyping\!\!$Pi}{\ctx\dedT \piType{x}{\A}\typB\typeEquals \piType{x}{\Ap}\Bp}{\ctx\dedT \Ap\typeEquals \A\tb \concatCtx{\ctx}{\x\ofT \Ap}\dedT \typB\typeEquals \Bp}{congPi}
\]
The second rule is primitive in DHOL but derivable in DHOL with subtyping.
\end{theorem}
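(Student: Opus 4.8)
My plan is to establish the variance rule first and then read the congruence rule off from it via the newly added anti-symmetry rule \STequiv.

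For the variance rule, by Lemma~\ref{lem:stder} it suffices to derive, for an arbitrary term $\tm$, the rule taking $\ctx\dedT\tm\ofT\piType{x}{\A}\typB$ to $\ctx\dedT\tm\ofT\piType{x}{\Ap}\Bp$. First I would pass to the extended context $\ctx,\x\ofT\Ap$. Weakening the domain premise $\ctx\dedT\Ap\subtyping\A$ into this context (subtyping is preserved under context extension) and applying the derived rule of Lemma~\ref{lem:stder} to the variable $\x$ gives $\ctx,\x\ofT\Ap\dedT\x\ofT\A$; hence by function application $\ctx,\x\ofT\Ap\dedT\tm\ \x\ofT\typB$. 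Now the codomain premise $\ctx,\x\ofT\Ap\dedT\typB\subtyping\Bp$ together with Lemma~\ref{lem:stder} upgrades this to $\ctx,\x\ofT\Ap\dedT\tm\ \x\ofT\Bp$. Abstracting with the $\lambda$-typing rule, whose annotation $\Ap$ now matches the target domain, yields the key intermediate judgment $\ctx\dedT(\lambdaFun{x}{\Ap}\tm\ \x)\ofT\piType{x}{\Ap}\Bp$.

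The remaining, and genuinely delicate, step is to transfer this typing from the $\eta$-expansion $\lambdaFun{x}{\Ap}\tm\ \x$ back to $\tm$ itself, and this is where I expect the main obstacle --- exactly the point the section flags as subtly difficult. The $\eta$-rule only delivers $\tm\termEquals{\piType{x}{\A}\typB}\lambdaFun{x}{\A}\tm\ \x$ at the \emph{source} type and with annotation $\A$, whereas the typing I have lives at the \emph{target} type with annotation $\Ap$; since $\Ap$ and $\A$ differ by a strict subtyping rather than an equality, neither $\lambda$-congruence nor a direct $\eta$-step bridges them, and one cannot even form the equation $\tm\termEquals{\piType{x}{\Ap}\Bp}\lambdaFun{x}{\Ap}\tm\ \x$ without already presupposing the goal $\ctx\dedT\tm\ofT\piType{x}{\Ap}\Bp$. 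I would resolve this by invoking the admissible functionality/$\eta$-principle for typing collected among the admissible HOL-style rules in the appendix, namely that $\tm$ and $\lambdaFun{x}{\Ap}\tm\ \x$ are interchangeable for typing at any $\Pi$-type. Semantically this is precisely the statement that the logical-relation (PER) interpretation of a $\Pi$-type only observes a function on its domain, so it also follows cleanly from completeness of the translation to HOL; either route then gives $\ctx\dedT\tm\ofT\piType{x}{\Ap}\Bp$ and hence the variance rule.

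For the congruence rule I would argue purely on the type level. From $\ctx\dedT\Ap\typeEquals\A$ the preorder lemma (reflexivity of subtyping from type equality, together with symmetry of type equality) gives both $\ctx\dedT\Ap\subtyping\A$ and $\ctx\dedT\A\subtyping\Ap$, and likewise $\ctx,\x\ofT\Ap\dedT\typB\typeEquals\Bp$ yields subtyping of the codomains in both directions, using context conversion along $\A\typeEquals\Ap$ to move between the contexts $\x\ofT\Ap$ and $\x\ofT\A$. Feeding the two matching instances into the variance rule just established produces $\ctx\dedT\piType{x}{\A}\typB\subtyping\piType{x}{\Ap}\Bp$ and $\ctx\dedT\piType{x}{\Ap}\Bp\subtyping\piType{x}{\A}\typB$, whence the anti-symmetry rule \STequiv delivers $\ctx\dedT\piType{x}{\A}\typB\typeEquals\piType{x}{\Ap}\Bp$. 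The only bookkeeping to watch is this context-conversion step for the codomain, which is routine once type equality is known to be a congruence.
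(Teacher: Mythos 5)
Your construction of the variance rule matches the one the paper intends: unfold Def.~\ref{def:st} with a fresh $\termf\ofT\piType{x}{\A}\typB$, use the domain premise to obtain $\ctx,\x\ofT\Ap\dedT\termf\ \x\ofT\typB$, the codomain premise to upgrade this to $\Bp$, and $\lambda$-abstract to get $\lambdaFun{x}{\Ap}\termf\ \x\,\ofT\,\piType{x}{\Ap}\Bp$; and your derivation of the congruence rule from variance, reflexivity of subtyping with respect to $\typeEquals$, context conversion, and \STequiv{} is exactly the paper's second sentence. You have also correctly located the one non-trivial step, namely transferring the typing from the $\eta$-expansion $\lambdaFun{x}{\Ap}\termf\ \x$ back to $\termf$ itself, which is indeed where all the content of ``$\eta$-expansion'' in the paper's one-line proof sits.

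The problem is how you discharge that step. There is no ``admissible functionality/$\eta$-principle for typing'' of the form you invoke anywhere in the appendix: the closest candidates are the rule cong$\ofT$ of Lemma~\ref{meta-thm:1}, which requires the equality $\termf\termEquals{\piType{x}{\A}\typB}\lambdaFun{x}{\A}\termf\ \x$ to live at a type \emph{equal} to the target type (here it lives only at the source type, which is a strict supertype-of-domain away) and which is moreover proved only for HOL and not among the rules carried over to DHOL in Remark~\ref{rem:analogueDerivedRulesDHOLP}; and extensionality, which presupposes that $\termf$ already has the target type. Your fallback via the PER semantics and completeness of the translation also proves the wrong thing: it would establish at best that the rule is \emph{admissible}, whereas the theorem asserts it is \emph{derivable}, and the paper is explicit that this distinction is load-bearing for its notion of subtyping (see the discussion following Def.~\ref{def:st}, where incidental instances are admissible but deliberately not derivable). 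To close the step in the spirit of the paper you need an actual derivation from the rule etaPi together with the equality-transfer machinery the paper does provide --- symmetry/transitivity of $\termEquals{}$ and the standing assumption $(\ast)$ of Lemma~\ref{lem:steq} that $\termf\termEquals{\typC}\termf$ entails $\termf\ofT\typC$ --- rather than a citation to a lemma that is not there. As written, the crux you yourself flagged remains open.
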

\begin{proof}
The first rule follows from the definition of subtyping and $\eta$-expansion.
The second rule is derived by \ruleRef{\STequiv}, establishing the hypotheses using the variance rule and reflexivity of subtyping.
\end{proof}

%

\section{Refinement types}\label{sec:ref}
To add refinement types, we add only one production for types.
We do not add productions for terms --- refinement types only provide new typing properties for the existing terms.
Then we add rules for, respectively, formation, introduction, elimination (two rules), and equality:
\begin{grammar}
	\A & \subtype{\A}{\p} &\text{type $\A$ refined by predicate $\p$ on $\A$}
\end{grammar}
	\[\mathll{
	\rnamedRule{$\subtype{}{\p}\type$}{\ctx\dedT \subtype{\A}{\p}\; \type}{\ctx\dedT \p\ofT \A\to\bool}{psubType}\tb
	\rnamedRule{$\subtype{}{\p}$I}{\ctx\dedT \tm\ofT \subtype{\A}{\p}}{\ctx\dedT \tm\ofT \A \tb  \ctx\dedT \p\ \tm}{psubI}\\[4mm]
	\rnamedRule{$\subtype{}{\p}$E1}{\ctx\dedT \tm\ofT\A}{\ctx\dedT \tm\ofT \subtype{\A}{\p}}{psubE1}\tb
	\rnamedRule{$\subtype{}{\p}$E2}{\ctx\dedT \p\ \tm}{\ctx\dedT \tm\ofT \subtype{\A}{\p}}{psubE2}\tb
	\rnamedRule{$\subtype{}{\p}$Eq}{\ctx\dedT \s\termEquals{\subtype{\A}{\p}} \tm}{\tb\ctx\dedT \s\termEquals{\A}\tm\tb \ctx\dedT \p\ \s}{psubEq}
	}\]

\begin{example}[Refining Lists by Length]
We extend Ex.~\ref{ex:lists} by defining fixed-length lists as a refinement of lists.
First, we axiomatize a predicate \ofLength on lists:
\[\mathll{
	\ofLength \ofT\; \List\to\nat \tb\tb \assert\;\ofLength\ \nil\termEquals\nat \zero\\
	\assert\;\univQuant x\obj\univQuant l\List \ofLength\ (\cons\ x\ \varl)\termEquals\nat \suc\ (\ofLength\ \varl)
}\]
Then we define $\nList\ n\; :=\; \subtype{\List}{\lambdaFun l\List \ofLength\ l\termEquals\nat \varname{n}}$, and we can derive
\begin{align*}
	\ded\nil \ofT\nList\ \zero \tb\tb  \varn\ofT\nat\ded\cons:\piType x\obj\piType l{\nList \ \varn}\nList\ (\suc\ \varn)
\end{align*}
\end{example}


\begin{theorem}[Congruence and Variance]\label{lem:varRefinements}
The following rules are derivable if the involved types are well-formed:
	\[\mathll{
	\rnamedRule{$\subtyping\!\! \subtype{}{\p}$}{\ctx\dedT \subtype{\A}{\p}\subtyping \subtype{\Ap}{\pp}}{\ctx\dedT \A\subtyping \Ap\tb \concatCtx{\ctx}{\x\ofT \A,\;\assertL\p\ \x}\dedT \pp\ \x}{subtPsubCong}\tb
	\rnamedRule{$\subtyping\!\! $Top}{\ctx\dedT \A\typeEquals \subtype{\A}{\lambdaFun{x}{\A}\T}}{\ctx\dedT\A\,\type}{subtPsubTriv}
	\\[1em]
	\rnamedRule{$\subtyping\!\! \subtype{}{\p}$}{\ctx\dedT \subtype{\A}{\p}\typeEquals \subtype{\Ap}{\pp}}{\ctx\dedT \A\typeEquals \Ap\tb \ctx\dedT \p\termEquals{\A\to\bool}\pp}{eqPsubCong}\tb
  \rnamedRule{$\subtyping\!\! $Top}{\ctx\dedT \subtype{\A}{\p}\subtyping \A}{\ctx\dedT \subtype\A\p\,\type}{subtPsubRef}
  }\]
\end{theorem}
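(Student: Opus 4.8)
The plan is to derive all four rules from the refinement introduction and elimination rules together with Def.~\ref{def:st}, Lem.~\ref{lem:stder}, the anti-symmetry rule \ruleRef{\STequiv}, and reflexivity of subtyping. I would treat them in order of increasing difficulty: first the inclusion $\subtype{\A}{\p}\subtyping\A$, then the variance rule, and finally the two type-equalities, each obtained by anti-symmetry from a pair of subtypings.

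For $\subtype{\A}{\p}\subtyping\A$, unfolding Def.~\ref{def:st} leaves the goal $\ctx,\x\ofT\subtype{\A}{\p}\dedT\x\ofT\A$, which is exactly the elimination rule $\subtype{}{\p}\mathrm{E1}$ applied to the context variable $\x$. For the variance rule, Def.~\ref{def:st} reduces the goal to $\ctx,\x\ofT\subtype{\A}{\p}\dedT\x\ofT\subtype{\Ap}{\pp}$, which I would establish with the introduction rule $\subtype{}{\p}\mathrm{I}$, instantiated with $\tm:=\x$. This leaves two subgoals. The typing $\x\ofT\Ap$ follows by combining $\subtype{}{\p}\mathrm{E1}$ (which gives $\x\ofT\A$) with the first hypothesis $\A\subtyping\Ap$ via Lem.~\ref{lem:stder}. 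The validity $\pp\ \x$ comes from the second hypothesis: in the context $\ctx,\x\ofT\subtype{\A}{\p}$ we have both $\x\ofT\A$ (by $\subtype{}{\p}\mathrm{E1}$) and the assumption $\p\ \x$ (by $\subtype{}{\p}\mathrm{E2}$), so applying $\impl$I to the hypothesis to get $\ctx,\x\ofT\A\dedT\p\ \x\impl\pp\ \x$, then weakening, substituting $\x$ for the bound variable, and discharging $\p\ \x$ by $\impl$E yields $\pp\ \x$.

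For $\A\typeEquals\subtype{\A}{\lambdaFun{x}{\A}\T}$ I would apply anti-symmetry \ruleRef{\STequiv}. One direction is the inclusion rule just proved; for the converse $\A\subtyping\subtype{\A}{\lambdaFun{x}{\A}\T}$, Def.~\ref{def:st} and $\subtype{}{\p}\mathrm{I}$ reduce the goal to the validity $(\lambdaFun{x}{\A}\T)\ \x$, which holds since it $\beta$-reduces to $\T$. For the congruence rule $\subtype{\A}{\p}\typeEquals\subtype{\Ap}{\pp}$ I would again use anti-symmetry, reducing to the two subtypings $\subtype{\A}{\p}\subtyping\subtype{\Ap}{\pp}$ and its converse, each an instance of the variance rule. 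Its hypotheses follow from the assumptions: $\A\subtyping\Ap$ from $\A\typeEquals\Ap$ by reflexivity of subtyping, and $\ctx,\x\ofT\A,\assertL\p\ \x\dedT\pp\ \x$ from $\p\termEquals{\A\to\bool}\pp$ by congruence of application (using $\x\termEquals{\A}\x$) followed by cong$\ded$ against the assumption $\p\ \x$. The converse subtyping is symmetric, using symmetry of both equalities.

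The step I expect to be the main obstacle is the validity $\pp\ \x$ in the variance rule, i.e.\ transporting the second hypothesis from its context $\ctx,\x\ofT\A,\assertL\p\ \x$ into the goal context $\ctx,\x\ofT\subtype{\A}{\p}$. This is precisely where the defining feature of our refinement types is used: since $\subtype{\A}{\p}$ and $\A$ share the same terms, a variable of the refined type simultaneously witnesses the supertyping and the predicate, so the hypothesis' context is subsumed. Making this rigorous relies on the admissible weakening and substitution lemmas, which I would invoke rather than reprove.
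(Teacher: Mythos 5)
Your proposal is correct and follows essentially the same route as the paper: the variance rule via the refinement elimination/introduction rules, the two type equalities via the anti-symmetry rule \ruleRef{\STequiv}, and the inclusion $\subtype{\A}{\p}\subtyping\A$ as a direct consequence of elimination. The paper's proof is merely a terser statement of the same argument, and your more detailed treatment of transporting the hypothesis $\pp\ \x$ into the context $\ctx,\x\ofT\subtype{\A}{\p}$ (via $\impl$I, substitution, and $\impl$E) correctly fills in the step the paper leaves implicit.
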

\begin{proof}
To derive the first rule, we assume the hypotheses and $\x\ofT\subtype\A\p$.
The elimination rules yield $\x\ofT\A$ and $\p\ \x$, then the hypotheses yield $\x\ofT\Ap$ resp. $\pp\ \x$, finally the introduction rule yields $\x\ofT\subtype\Ap\pp$.\\
To derive the second rule, we apply {\ruleRef\STequiv} and use the introduction/elimination rules to show the two subtype relationships.\\
These then imply the other rules.
\end{proof}		

\section{Quotient types}\label{sec:quot}
To add quotient types we also extend the grammar with only one production for the type and rules for formation, introduction, elimination, and equality, where $\isEqRel{r}$ abbreviates that $\r$ is an equivalence relation:

\begin{grammar}
	\A & \quot{\A}{\r} &\text{quotient of $\A$ by equivalence relation $\r$}
\end{grammar}
	\[
	\rnamedRule{Q}{\ctx\dedT \Type{\quot{\A}{\r}}}{\ctx\dedT \Type{\A}\tb \ctx\dedT \r\ofT \A\to\A\to\bool\tb \ctx\dedT\isEqRel{\r}}{Qtype}\tb
	\rnamedRule{QI}{\ctx\dedT \tm\ofT\quot{\A}{\r}}{\ctx\dedT \tm\ofT\A\tb \ctx\dedT \quot\A\r\,\type}{Qintro}\]
	\[\rnamedRule{QE}{\ctx\dedT \tm\substOp\x\s \ofT\typB\substOp\x\s}{
	  \ctx\dedT \s\ofT\quot\A\r\quad
	  \ctx,\,\x\ofT\A,\,\assertL\x\termEquals{\quot\A\r}\s\dedT \tm\ofT \typB\quad
	  \ctx,\,\x\ofT\A,\,\xp\ofT\A,\,\assert\x\termEquals{\quot\A\r}\s,\,\assertL\xp\termEquals{\quot\A\r}\s\dedT \tm\termEquals{\typB} \tm\substOp\x\xp
	}{quotE} \]
\[\rnamedRule{Q$\termEquals{}$}{\ctx\dedT (\s\termEquals{\quot{\A}{\r}}\tm)\termEquals\bool(\r\ \s\ \tm)}
	{\ctx\dedT \s\ofT\A\tb \ctx\dedT \tm\ofT\A \tb \ctx\dedT\r\ofT\A\to\A\to\bool\tb\ctx\dedT\isEqRel{\r}}{QEq}
	\]

\begin{example}[Sets]\label{ex:sets}
We extend Ex.~\ref{ex:lists} by obtaining sets as a quotient of lists.
First, we axiomatize a predicate for containing an element:
\[\mathll{
%
\contains\ofT\; \List\to\obj\to\bool\tb\tb
\assert\; \univQuant x\obj \neg(\contains\ \nil\ \x)\\
\assert\; \univQuant x\obj \univQuant y\obj\univQuant l\List (\contains\ (\cons\ \y\ \varl)\ \x)\termEquals\bool (\x\termEquals\obj \y \vee  \contains\ \varname{l}\ \x)
}\]

Now we can define $\set:=\quot{\List}{\lambdaFun l\List\lambdaFun m\List\univQuant x\obj \contains\ \varname{l}\ \x\termEquals\bool\contains\ \varname{m}\ \x}$ as the type of lists containing the same elements.
The equality at \set immediately yields extensionality $\ded \univQuant{\x,\y}{\set} \x\termEquals{\set}\y\Leftrightarrow \left(\univQuant{z}{\obj}\contains\ \x\ \z\termEqB \contains\ \y\ \z\right)$.

Any $l\ofT\List$ can be used as a representative of the respective equivalence class in $\set$, and operations on sets can be defined via operations on lists, e.g., we can establish $\vdash \conc:\set\to\set\to\set$.
To derive this, we assume $\termC{u}\ofT\set$ and apply the elimination rule twice.
First we apply it with $\typB=\List\to\set$ and $\tm=\conc\ \termC{u}$;
we have to show $\conc\ \x\termEquals{\List\to\set}\conc\ \xp$ under the assumption that $\x$ and $\xp$ are equal as sets.
That yields a term $\conc\ \termC{u}:\List\to\set$.
We assume $\termC{v}\ofT\set$ and apply the elimination rule again with $\typB=\set$ to obtain $\conc\ \termC{u}\ \termC{v}\ofT\set$, and then conclude via $\lambda$-abstraction and $\eta$-reduction.
\end{example}

The elimination rule above looks overly complex.
It can be understood best by comparing it to the following, simpler and more intuitive rule
\begin{align*}
\snamedRule{quotES}{\ctx,\, \x\ofT\quot\A\r\dedT \tm \ofT\typB}{\ctx,\, \x\ofT\A\dedT \tm\ofT \typB\tb \ctx,\,\x\ofT\A,\,\xp\ofT\A,\,\assertL\r\ \x\ \xp\dedT \tm\termEquals{\typB} \tm\substOp \x\xp}(\ast)
\end{align*}
This rule captures the well-known condition that a function $\tm$ on $\A$ may be used as a function on $\quot\A\r$ if $\tm$ maps equivalent representatives $\x,\xp$ equally.
It follows from our elimination rule by putting $\s=\x$, but is subtly weaker: 

\begin{example}
Continuing Ex.~\ref{ex:sets}, assume a total order on $\obj$ and a function $\termg\ofT\subtype\List{\mathrm{nonEmpty}}\to\obj$ picking the maximum from a non-empty list.
We should be able to apply $\termg$ to some $\s\ofT\set$ that we know to be non-empty.
But if we try to apply $(\ast)$ to obtain $\termg\ \s\ofT\,\obj$, we get stuck trying to prove $\termg\ \x\termEquals\obj \termg\ \xp$ for any $\x,\xp$ that are representatives of an \emph{arbitrary} equivalence class of lists.
We cannot use the condition that $\s$ is non-empty and thus only non-empty lists need to be considered.
Thus, we cannot derive the well-formedness of $\termg\ \x$.

Our elimination rule remedies that: here we need to show $\termg\ \x\termEquals\obj \termg\ \xp$ for any $\x,\xp$ that are representatives of \emph{the class of} $\s$.
Thus, we can use that $\x$ and $\xp$ are non-empty and that thus $\termg\ \x$ is well-formed.
\end{example}

In dependent type theory, the two elimination rules are equivalent because we have a type $\s=\x$ and can use $\piType{s}{\quot{A}{r}}\s=\x \to \typB$ as the return type.
This is not possible in DHOL where $\s=\x$ is not a type but a Boolean. 

\begin{theorem}[Congruence and Variance]\label{lem:varQuotients}
The following rules are derivable if the involved types are well-formed and $\r,\rp$ are equivalence relations:
	\[\mathll{
	\rnamedRule{}{\ctx\dedT \tb\quot{\A}{\r}\subtyping \quot{\Ap}{\rp}}{\ctx\dedT \A\subtyping \Ap\tb \concatCtx{\ctx}{\x\ofT \A,\;\y\ofT\A,\;\assertL\r\ \x\ \y}\dedT \rp\ \x\ \y}{subtQuotCong}\tb
	\rnamedRule{}{\ctx\dedT \A\typeEquals \quot{\A}{\lambdaFun{x}{\A}\lambdaFun{y}{\A}\x\termEquals\A\y}}{\ctx\dedT \A\,\type}{subtQuotTriv}
	\\[1em]
	\rnamedRule{}{\ctx\dedT \A\subtyping \quot\A\r}{\ctx\dedT \quot\A\r\,\type}{subtQuotRef}\tb
	\rnamedRule{}{\ctx\dedT \quot{\A}{\r}\typeEquals \quot{\Ap}{\rp}}{\ctx\dedT \A\typeEquals\Ap\tb \ctx\dedT \r\termEquals{\A\to\A\to\bool}\rp}{subtQuotCong}}
  \]
\end{theorem}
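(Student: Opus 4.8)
The plan is to mirror the proof of the refinement case (Thm.~\ref{lem:varRefinements}): first establish the two subtyping rules (variance and the reflexive inclusion), and then derive the two type-equality rules from them by anti-symmetry \ruleRef{\STequiv}. The one structural difference I must respect is that, because quotients are \emph{super}types, an assumption $\x\ofT\quot\A\r$ does \emph{not} give $\x\ofT\A$; hence the ``unpack the hypotheses'' argument available for refinements does not apply here, and every step that needs to move a quotient element down to a representative must instead be routed through the elimination rule.

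For the variance rule I would use Lem.~\ref{lem:stder} to reduce $\quot\A\r\subtyping\quot{\Ap}\rp$ to the derivable rule sending $\s\ofT\quot\A\r$ to $\s\ofT\quot{\Ap}\rp$. I then apply the elimination rule with the term taken to be the bound variable, $\tm:=\x$, and target type $\typB:=\quot{\Ap}\rp$ (independent of $\x$), so that its conclusion is exactly $\s\ofT\quot{\Ap}\rp$. Of the three premises, the first is the assumption on $\s$; the second follows because under $\x\ofT\A$ the hypothesis $\A\subtyping\Ap$ yields $\x\ofT\Ap$ and then the introduction rule yields $\x\ofT\quot{\Ap}\rp$; the third, which asks for $\x\termEquals{\quot{\Ap}\rp}\xp$, is obtained by first combining the two assumptions $\x\termEquals{\quot\A\r}\s$ and $\xp\termEquals{\quot\A\r}\s$ into $\x\termEquals{\quot\A\r}\xp$ via symmetry and transitivity, then using the equality rule to turn this into $\r\ \x\ \xp$, feeding it to the relation hypothesis to get $\rp\ \x\ \xp$, and applying the equality rule once more to reach $\x\termEquals{\quot{\Ap}\rp}\xp$. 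The reflexive inclusion $\A\subtyping\quot\A\r$ is immediate: via Lem.~\ref{lem:stder} it is precisely the introduction rule.

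For the trivial-quotient rule I would set $\r_0:=\lambdaFun{x}{\A}\lambdaFun{y}{\A}\x\termEquals\A\y$, observe that $\r_0$ is an equivalence relation (so $\quot\A{\r_0}$ is well-formed), and apply \ruleRef{\STequiv}. The inclusion $\A\subtyping\quot\A{\r_0}$ is the reflexive rule just derived; the converse $\quot\A{\r_0}\subtyping\A$ again goes through the elimination rule with $\tm:=\x$ and $\typB:=\A$, where the well-definedness premise reduces to $\x\termEquals\A\xp$, which follows from $\x\termEquals{\quot\A{\r_0}}\xp$ by the equality rule followed by $\beta$-reduction of $\r_0$. Finally the congruence rule $\quot\A\r\typeEquals\quot{\Ap}\rp$ follows from $\A\typeEquals\Ap$ and $\r\termEquals{\A\to\A\to\bool}\rp$ by applying \ruleRef{\STequiv} to the variance rule in both directions: type equality supplies $\A\subtyping\Ap$ and $\Ap\subtyping\A$, and the pointwise equality of $\r$ and $\rp$ supplies the relation premise each way.

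The step I expect to be the main obstacle is the third premise of the elimination rule in the variance argument, namely the well-definedness obligation stated relative to the \emph{specific} representative $\s$ rather than to arbitrary $\r$-related elements. The care required is to chain the two ``$\termEquals{\quot\A\r}\s$'' assumptions into $\x\termEquals{\quot\A\r}\xp$ \emph{before} invoking the equality rule, and to track which PER ($\r$ or $\rp$) is being unfolded at each stage. Once the passage from $\x\termEquals{\quot\A\r}\xp$ to $\r\ \x\ \xp$, then to $\rp\ \x\ \xp$, then to $\x\termEquals{\quot{\Ap}\rp}\xp$ is arranged correctly, the rest is routine bookkeeping.
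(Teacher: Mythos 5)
Your proposal is correct and follows essentially the same route as the paper: establish the variance rule by applying the quotient elimination rule with $\tm=\x$ and $\typB=\quot{\Ap}{\rp}$, discharging the well-definedness premise by chaining the two equalities with $\s$ into $\x\termEquals{\quot\A\r}\xp$ and converting between the PERs via the quotient equality rule, then obtain the remaining rules from \ruleRef{\STequiv} and the introduction/elimination rules. Your write-up merely spells out in more detail the steps the paper compresses into ``by the equality rule (using $\A\subtyping\Ap$ and the second assumption).''
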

\begin{proof}
For the first rule: Assume the hypotheses and $\s\ofT\quot\A\r$.
Apply the elimination rule with $\typB=\quot\Ap\rp$ and $\tm=\x$.
 $\x\ofT\A,\;\y\ofT\A,\;\assertL\x\termEquals{\quot\A\r}\s,\;\assertL\xp\termEquals{\quot\A\r}\s\dedT\x\termEquals{\quot\Ap\rp}\xp$
by the equality rule (using $\A\subtyping\Ap$ and the second assumption). 
For the second rule, apply {\ruleRef\STequiv} and use the introduction/elimination rules to show the two subtype relationships.
The other rules follow from those two.
\end{proof}

\section{Normalizing Types}\label{sec:interact}
To build a type-checker, we derive normalization rules that reduce subtyping conditions to validity conditions that can then be discharged via an ATP.
We prove rules for merging consecutive refinements and quotients and for the $4$ possible combinations of a function type with a refinement or quotient:

\begin{theorem}[Repeated Refinement/Quotient]\label{lem:normSub}
The following are derivable whenever the LHS is well-formed
\begin{align}
&\dedT\subtype{(\subtype\A\p)}\pp \typeEquals \subtype\A{\lambdaFun x\A \p\ \x\wedge \pp\ \x} \tag{RR}\label{RR}\\
&\dedT\quot{(\quot\A\r)}\rp \typeEquals \quot\A{\lambdaFun x\A\lambdaFun y\A \rp\ \x\ \y}\tag{QQ}\label{QQ}
\\
&\dedT\subtype{(\quot\A\r)}\p \typeEquals \quot{(\subtype\A\p)}\r\tag{RQ}\label{RQ}
\end{align}
\end{theorem}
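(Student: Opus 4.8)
The plan is to derive all three equalities with the anti-symmetry rule \ruleRef{\STequiv}: to prove $\typeC{A}\typeEquals\typeC{B}$ I show the two subtypings and unfold each, by Definition~\ref{def:st}, to a membership derivation $\ctx,\x\ofT\typeC{A}\dedT\x\ofT\typeC{B}$. A preliminary step is to derive well-formedness of each right-hand side from that of the left-hand side. For \eqref{QQ} and \eqref{RQ} this uses that the relation resp.\ predicate on the right is the one on the left restricted along $\A\subtyping\quot\A\r$ (Theorem~\ref{lem:varQuotients}); since subtyping preserves equality (Lemma~\ref{lem:steq}), an equivalence relation resp.\ predicate on $\quot\A\r$ restricts to one on $\A$.

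Equality \eqref{RR} is routine and needs no quotient reasoning: from $\x\ofT\subtype{(\subtype{\A}{\p})}{\pp}$ the refinement elimination rules yield $\x\ofT\A$, $\p\ \x$ and $\pp\ \x$, and after $\beta$-reduction the introduction rule gives $\x\ofT\subtype{\A}{\lambdaFun{x}{\A}\p\ \x\wedge\pp\ \x}$; the converse splits the conjunction symmetrically. The point is that refinement membership can always be eliminated back to the base type.

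For the quotient equalities the only way to consume a hypothesis $\s\ofT\quot\A\r$ is the elimination rule, which substitutes a base-type representative and imposes a respect condition. Three of the four quotient inclusions are still comfortable. In \eqref{RQ} the inclusion $\quot{(\subtype{\A}{\p})}{\r}\subtyping\subtype{(\quot\A\r)}{\p}$ is obtained by eliminating the quotient, whose representatives have type $\subtype{\A}{\p}$ and hence type $\A$, so the quotient and refinement equality rules apply at the base level and the respect condition reduces to $\r$-relatedness of representatives; the reverse inclusion also goes through by quotient elimination, now using that a predicate typed on $\quot\A\r$ respects $\r$ to transport $\p$ along representatives. In \eqref{QQ} the inclusion $\quot\A\rp\subtyping\quot{(\quot\A\r)}{\rp}$ (abbreviating the right-hand side as $\quot\A\rp$) is immediate from the quotient variance rule (Theorem~\ref{lem:varQuotients}) together with $\A\subtyping\quot\A\r$; here one also notes that well-formedness forces $\rp$ to be coarser than $\r$, i.e.\ $\r\ \x\ \y$ implies $\rp\ \x\ \y$.

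The main obstacle is the collapsing inclusion $\quot{(\quot\A\r)}\rp\subtyping\quot\A\rp$ of \eqref{QQ}. Eliminating the outer quotient peels only its top layer, so the representatives $\x,\xp$ have type $\quot\A\r$, not $\A$. Re-introducing them into the target succeeds by a second, nested elimination down to $\A$, but the respect condition of the outer elimination demands $\x\termEquals{\quot\A\rp}\xp$ for $\quot\A\r$-typed $\x,\xp$, where the quotient equality rule cannot fire: it rewrites $\termEquals{\quot\A\rp}$ only between terms of the base type $\A$, and nothing lets us descend from $\quot\A\r$-membership to $\A$-membership. I would bridge this with an auxiliary lemma: for $\x,\xp\ofT\quot\A\r$, already $\rp\ \x\ \xp$ implies $\x\termEquals{\quot\A\rp}\xp$. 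Its proof peels $\x$ (and then $\xp$) to $\A$-representatives by nested quotient elimination and encodes the target equality as the typing goal $\x\ofT\subtype{(\quot\A\rp)}{\lambdaFun{z}{\quot\A\rp}\z\termEquals{\quot\A\rp}\xp}$, so that the elimination rule (which concludes a typing) can discharge it and the second refinement elimination rule then returns the equality; at the innermost level both sides finally have type $\A$ and the quotient equality rule applies. This trick of casting an equality as a refinement-membership in order to route it through the quotient eliminator is the delicate step, made necessary precisely because in DHOL an equality is a Boolean rather than a type. Alternatively, the inclusion can be recovered after the fact from the completeness theorem.
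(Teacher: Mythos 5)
Your proof is correct, and for \eqref{RR} and \eqref{RQ} it coincides with the paper's argument (same use of the refinement elimination/introduction rules, and the same two quotient eliminations for \eqref{RQ}, including the observation that the representative-relative hypotheses are what let $\p$ transport to the representative). Where you genuinely diverge is \eqref{QQ}. You attack both inclusions by unfolding Def.~\ref{def:st} to a membership derivation, and you correctly identify that the collapsing direction $\quot{(\quot\A\r)}{\rp}\subtyping\quot{\A}{\rp}$ then gets stuck: the outer elimination's respect condition asks for $\x\termEquals{\quot\A\rp}\xp$ at representatives of type $\quot\A\r$, where \ruleRef{QEq} cannot fire, forcing your auxiliary lemma and the encoding of an equality as membership in a refinement so it can pass through \ruleRef{quotE}. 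The paper avoids this entirely: it proves \eqref{QQ} via Lemma~\ref{lem:steq}, which reduces each subtyping direction to the well-typedness of the formula ``equality at one type implies equality at the other,'' and then simply observes that both types induce the same equality (both reduce to $\rp$, using that the type of $\rp$ forces it to subsume $\r$). That route never needs a membership derivation for the collapsing inclusion, so the nested-elimination trick becomes unnecessary. Your workaround does appear viable --- the equality-as-refinement detour is exactly the kind of move the logic permits, and your diagnosis of why it is needed (equalities are Booleans, not types) is the same point the paper makes when motivating the strong elimination rule --- but it is considerably heavier than the intended argument, and each nested elimination carries its own respect conditions that you would still have to discharge. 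One caveat: your fallback of ``recovering the inclusion from the completeness theorem'' does not work as stated; completeness goes the wrong way, and even soundness (Thm.~\ref{thm:soundPaper}) only transports validity statements $\termF\ofT\bool$, not subtyping or typing judgments, so it cannot substitute for an object-level derivation here.
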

\begin{proof}
\emph{\eqref{RR}}: 
well-formedness of the LHS yields $\p\ofT\A\to\bool$ and $\pp\ofT\subtype\A\p\to\bool$, so $\pp\ \x$ is well-formed as $\wedge$ is a \emph{dependent} conjunction and $\p\ \x$ can be assumed while checking $\pp\ \x$.
The well-formedness of the right-hand side (RHS) follows. 
Verifying the equality is straightforward by showing subtyping in both directions.\\
\emph{\eqref{QQ}}: 
well-formedness of the LHS yields $\r\ofT\A\to\A\to\bool$ and $\rp\ofT\quot\A\r\to\quot\A\r\to\bool$, so $\A\subtyping \quot\A\r$ implies $\rp\ \x\ \y$ (and thus the RHS) is well-formed.
The relation on the RHS is an equivalence relation since $\rp$ is.
To verify the type equality, we use Lem.~\ref{lem:steq} and show that both types induce the same equality.
In particular, the type of $\rp$ already guarantees that it subsumes $\r$.\\
\emph{\eqref{RQ}}: well-formedness of the LHS yields $\r\ofT\A\to\A\to\bool$ and $\p\ofT\quot\A\r\to\bool$, implying  $\r\ofT\subtype\A\p\to\subtype\A\p\to\bool$ and $\p\ofT\A\to\bool$. The well-formedness of the RHS follows. (Note the other direction does not hold in general.)
To show the equality, we show both subtyping directions.
For LHS$\subtyping$RHS, we assume $\x\ofT\quot\A\r$ and $\p\ \x$ and apply the elimination rule for quotients using $\tm=\x$ and $\typB=\quot{(\subtype\A\p)}\r$.
(Critically, this step would not go through if we had only used the weaker rule $\ast$ in Sect.~\ref{sec:quot}.)
For RHS$\subtyping$LHS, we assume $\x\ofT\quot{(\subtype\A\p)}\r$ and apply the elimination rule for quotients using $\tm=\x$.
\end{proof}

\begin{theorem}[Refinement/Quotient in a Function Type]\label{lem:piRQ}
The following are derivable if either side is well-formed:
\begin{align}
&\dedT
\piType{x}{\A}(\subtype{\typB}{\p})\typeEquals
\subtype{(\piType{x}{\A}\typB)}{\lambdaFun{f}{(\piType{x}{\A}\typB)}\univQuant{x}{\A}\p\ \left(\termf\ \x\right)}&&\tag{RCod}\label{RCod}
\\
&\dedT
\piType{x}{\quot{\A}{\r}}\typB\typeEquals
\subtype{(\piType{x}{\A}\typB)}{\lambdaFun{f}{\piType{x}{\A}\typB}\univQuant{x,y}{\A}\r\ \x\ \y\implC \left(\termf\ \x\right)\termEquals{\typB}\left(\termf\ \y\right)}&&\tag{QDom}\label{QDom}
\\
&\dedT
\piType{x}\A\quot\typB\r \supertyping
\quot{(\piType x\A\typB)}
  {\lambdaFun{f,g}{\piType{x}{\A}\typB}\univQuant{x}{\A}\r\ \left(\termf\ \x\right)\left(\termg\ \x\right)}&&\tag{QCod}\label{QCod}\\
\intertext{The following is derivable if the RHS is well-formed:}
&\dedT
\piType{x}{\subtype\A\p}\!\!\typB \supertyping
\quot{(\piType{x}{\A}\!\!\typB)}
  {\lambdaFun{\termf,\termg}{\piType{x}{\A}\!\!\typB}\!\univQuant{\x}{\A}\!\p\ \x\implC \left(\termf\ \x\right)\termEquals{\typB}\!\left(\termg\ \x\right)}\!\!\!\!&&\tag{RDom}\label{RDom}
\end{align}
\end{theorem}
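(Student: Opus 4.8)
The plan is to prove all four claims by one uniform recipe: each (sub)typing assertion is reduced to its defining form via Def.~\ref{def:st} and Lemma~\ref{lem:stder}, i.e.\ I assume a variable of the candidate subtype and derive that it also inhabits the candidate supertype. For the two genuine type equalities \eqref{RCod} and \eqref{QDom} I establish subtyping in both directions and conclude by the antisymmetry rule \ruleRef{\STequiv}; for \eqref{QCod} and \eqref{RDom}, where $\supertyping$ asserts only that the right-hand side is a subtype of the left-hand side, a single direction suffices. The well-formedness hypotheses are transferred between the two sides using the basic facts $\subtype{\A}{\p}\subtyping\A$ and $\A\subtyping\quot{\A}{\r}$ (Thms.~\ref{lem:varRefinements} and~\ref{lem:varQuotients}) together with the variance of function types (Thm.~\ref{lem:funeq}). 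This also explains the asymmetric hypothesis of \eqref{RDom}: only right-hand side well-formedness forces $\typB$ to be well-formed over all of $\A$, hence over $\subtype{\A}{\p}$, whereas left-hand side well-formedness would not.

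The refinement-only claim \eqref{RCod} is pointwise in both directions. Assuming $\termf$ of the left type $\piType{x}{\A}\subtype{\typB}{\p}$, I apply it to an arbitrary $\x\ofT\A$, split $\termf\ \x\ofT\subtype{\typB}{\p}$ into $\termf\ \x\ofT\typB$ and $\p\ (\termf\ \x)$ by the refinement elimination rules, and reassemble $\termf$ into the right type by $\eta$ and the refinement introduction rule; the converse direction just undoes these steps. The easy direction of \eqref{QDom} (its left type a subtype of its right type) is similar: from $\termf\ofT\piType{x}{\quot{\A}{\r}}\typB$ I obtain $\termf\ofT\piType{x}{\A}\typB$ by domain contravariance and $\A\subtyping\quot{\A}{\r}$, and I obtain the refinement predicate by turning $\r\ \x\ \y$ into $\x\termEquals{\quot{\A}{\r}}\y$ with the quotient equality rule and then applying congruence of application to get $\termf\ \x\termEquals{\typB}\termf\ \y$.

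The remaining, harder directions all go through the quotient elimination rule, exactly as in the proof of \eqref{RQ} in Thm.~\ref{lem:normSub}, in two patterns. For \eqref{QCod} and \eqref{RDom} the quotient sits on the whole function type: given $\termf$ of that quotient, I pick a representative variable $h$ of the base function type $\piType{x}{\A}\typB$, instantiate the elimination rule with $\tm=h$ and the intended supertype as target type, discharge the premise that $h$ has that target type by codomain covariance ($\typB\subtyping\quot{\typB}{\r}$) resp.\ domain contravariance ($\subtype{\A}{\p}\subtyping\A$), and discharge its respect-for-equivalence premise by converting the quotient relation on $h,h'$ into the pointwise condition with the quotient equality rule and lifting it to an equality of functions by extensionality (derivable in DHOL). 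For the hard direction of \eqref{QDom} the quotient sits on the domain instead: to build $\termf$ as a function on $\quot{\A}{\r}$ I apply the elimination rule at each argument $\s\ofT\quot{\A}{\r}$ with representative $\x\ofT\A$ and $\tm=\termf\ \x$ (well-typed since $\termf\ofT\piType{x}{\A}\typB$ by refinement elimination), discharging the respect premise $\termf\ \x\termEquals{\typB}\termf\ \x'$ from $\x\termEquals{\quot{\A}{\r}}\x'$ via the quotient equality rule and the refinement predicate on $\termf$.

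The main obstacle is this respect-for-equivalence premise. First, it is precisely where the weaker rule $(\ast)$ of Sect.~\ref{sec:quot} fails: the two representatives must be known to lie in the \emph{specific} class of $\termf$, not an arbitrary class, so I must keep the hypotheses $h\termEquals{}\termf$ (resp.\ $\x\termEquals{}\s$) live while proving the equation. Second, in \eqref{QDom} the codomain $\typB$ may depend on the bound variable, so both the statement of the respect premise and the substitutions produced by the elimination rule require a congruence between the instances of $\typB$ at $\x$ and at $\y$, which I obtain from $\x\termEquals{\quot{\A}{\r}}\y$ and the base-type congruence rule, using $\A\subtyping\quot{\A}{\r}$ to keep $\typB$ well-formed over $\A$ throughout. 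Finally, I expect to confirm that \eqref{QCod} and \eqref{RDom} are genuinely only subtypings: the missing direction would require, from a function into $\quot{\typB}{\r}$ resp.\ out of $\subtype{\A}{\p}$, a uniform choice of representatives resp.\ a total extension --- a choice principle unavailable here, which is why both are stated with $\supertyping$.
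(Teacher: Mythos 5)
Your proposal is correct and follows essentially the same route as the paper: antisymmetry of subtyping for the two equalities, pointwise refinement introduction/elimination for \eqref{RCod}, and the strong quotient elimination rule applied to a representative of the function's equivalence class for \eqref{QCod}, \eqref{RDom}, and the harder direction of \eqref{QDom}, with the same explanation of why \eqref{RDom} needs RHS well-formedness. Your write-up is in fact more explicit than the paper's four-line sketch (e.g.\ about the role of extensionality in discharging the respect-for-equivalence premise and about the dependency of $\typB$ on the bound variable), but these are elaborations of the same argument rather than a different one.
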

\begin{proof}
\emph{\eqref{RCod}}:
Both subtyping directions are straightforward, as terms on either side are given by $\lambdaFun x\A\tm$ where $\tm$ has type $\typB$ and satisfies $\p$.\\
\emph{\eqref{QDom}}:
Both subtyping directions are straightforward, as both sides are subtypes of $\piType x\A\typB$ so their elements must preserve $\r$.\\
\emph{\eqref{QCod}}:
Assume a term $\termf$ of RHS-type and show $\x\ofT\A\ded \termf\ \x\ofT\quot\typB\r$ using the rules for quotients.\\
\emph{\eqref{RDom}}:
Assume a term $\termf$ of RHS-type and show $\x\ofT\subtype\A\p\ded \termf\ \x\ofT\typB$ using the quotient elimination rule.
The well-formedness of the LHS does not imply the well-formedness of the RHS since the well-formedness of $\typB$ can rely on $\p\ \x$.
\end{proof}

Maybe surprisingly, two of the subtyping laws in Thm.~\ref{lem:piRQ} are not equalities.
The law for the refined domain \emph{must not} be an equality:

\begin{example}[Refined Domain \eqref{RDom}]\label{ex:refdom}
The assumption $\p\ \x$ makes more terms well-typed, thus there may be functions $\piType{x}{\subtype{\A}{\p}}\typB$ that are not a restriction of a function $\piType{x}{\A}\typB$.
Consider the theory $\a\ofT\bool\to\type,\;\c\ofT \a\,\T$.
Then $\a\,\F$ is empty and so are $\piType{x}{\bool}\a\,\x$ and its quotients.
But with $\p=\lambdaFun{x}{\bool}\x$, we have
$\ded \lambdaFun{x}{\subtype{\bool}{\p}}\c\;\ofT\;\piType{x}{\subtype{\bool}{\p}}\a\ \x$.
\end{example}

The law for the quotiented codomain \emph{may or may not} be an equality.
This is related to the axiom of choice.
Consider the two statements
\begin{align*}
\dedT \existQuant{repr}{\quot{\typB}{\r}\to\typB} \termC{repr}\termEquals{\quot\typB\r\to\quot\typB\r}\lambdaFun x{\quot\typB\r}\x\\[0.5em]
\termf\ofT\piType x\A\quot\typB\r\dedT \existQuant{\termg}{\piType{x}\A\typB} \termf\termEquals{\piType x\A\quot\typB\r}\termg
\end{align*}
(Note that the first one is well-typed because $\typB\subtyping\quot\typB\r$.) 
Both have a claim to be called the axiom of choice: The first one expresses that every equivalence relation has a system of representatives.
The second generalizes this to a family of equivalence relations.
The latter implies the former (put $\A:=\quot\typB\r$ and $\termf:=\lambdaFun x{\quot\typB\r}\x$).
In the simply-typed case the former also implies the latter (pick $\termC{repr}\circ\termf$ for $\termg$); but in the dependently-typed case, $\typB$ and $\r$ may depend on $\x$ and the implication might not hold. 

Both statements construct a new term from an existing one ($\termC{repr}$ behaves like the identity, and $\termg$ like $\termf$) that has a different type but behaves the same up to quotienting.
If the direction $\subtyping$ were to hold in the law for the refined codomain, it would not only imply the existence of $\termg$ from $\termf$ but also allow using $\termf$ as a representative of the equivalence class of possible values for $\termg$.
That is in keeping with our goal of avoiding changes of representation. Therefore:

\begin{definition}[Quotiented Codomain]\label{ax:quotcod}
We adopt the rule below (which is an equality with Thm.~\ref{lem:piRQ}) as an axiom whenever either side is well-formed:
\begin{align}
	\dedT
	\piType{x}\A\quot\typB\r \subtyping
	\quot{(\piType x\A\typB)}
	{\lambdaFun{f,g}{\piType{x}{\A}\typB}\univQuant{x}{\A}\r\ \left(\termf\ \x\right)\left(\termg\ \x\right)}\tag{**}\label{quotAx}
\end{align}
\end{definition}

Aggregating the above laws, we obtain a normalization algorithm for types:
\begin{theorem}[Normalizing Types]\label{thm:normalizing}
Every type is equal to a type of the form $\quot{(\subtype\A\p)}\r$ where $\A,\typB::=\bool\alt\a\ \metavar{t}^* \alt \piType{x}{\subtype\A\p}\typB$.
In particular, if a type does not use refined domains, it is equal to a quotient of a refinement of a DHOL type.
\end{theorem}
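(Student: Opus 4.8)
The plan is to prove the first claim by structural induction on type formation, maintaining the invariant that every well-formed type is provably equal to one in the three-layer form $\quot{(\subtype{\A}{\p})}{\r}$ whose innermost part $\A$ lies in the core grammar of the statement (namely $\bool$, base types $\a\ \metavar{t}^*$, and function types $\piType{x}{\subtype{\A}{\p}}\typB$ with refined-core domain and core codomain). The base and the refinement/quotient cases use the collapsing laws \eqref{RR}, \eqref{QQ}, \eqref{RQ} of Thm.~\ref{lem:normSub}, while the function-type case uses the four laws of Thm.~\ref{lem:piRQ} together with the axiom \eqref{quotAx}. Throughout, all rewrites are equalities of well-formed types, so well-formedness is preserved along the induction and every side condition of the form ``if either side is well-formed'' is automatically met.

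For the base cases $\bool$ and $\a\ \metavar{t}^*$ the type is already core, and I would pad it with trivial layers: the trivial-refinement and trivial-quotient laws (Thm.~\ref{lem:varRefinements} and Thm.~\ref{lem:varQuotients}) give $\A\typeEquals\quot{(\subtype{\A}{\lambdaFun{x}{\A}\T})}{\lambdaFun{x}{\A}\lambdaFun{y}{\A}\x\termEquals{\A}\y}$. For a refinement $\subtype{\A}{\p}$, I first replace $\A$ by its normal form using the induction hypothesis (rewriting under the new refinement by the refinement-congruence rule of Thm.~\ref{lem:varRefinements}), then move that refinement inside the quotient with \eqref{RQ}, and finally merge the two stacked refinements with \eqref{RR}. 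For a quotient $\quot{\A}{\r}$, I likewise normalize $\A$ (under the quotient-congruence rule of Thm.~\ref{lem:varQuotients}) and then collapse the two stacked quotients with \eqref{QQ}; the outer relation subsumes the inner one, which is exactly what \eqref{QQ} records.

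The heart of the argument, and the step I expect to be the main obstacle, is the function-type case $\piType{x}{\A}\typB$. Here I would first normalize the domain and the codomain by the induction hypothesis --- rewriting the codomain under the binder with the $\Pi$-congruence rule of Thm.~\ref{lem:funeq} --- so that $\A\typeEquals\quot{(\subtype{\typC}{\q})}{\s}$ and $\typB\typeEquals\quot{(\subtype{\Bp}{\qp})}{\rp}$. I then peel the layers outward in a fixed order, each time rewriting under the already-extracted outer layers via the refinement- and quotient-congruence rules: \eqref{QCod} (an equality by \eqref{quotAx}) lifts the codomain quotient to an outermost quotient; \eqref{RCod} lifts the codomain refinement to a refinement; and \eqref{QDom} turns the remaining domain quotient into a further refinement, leaving the core function type $\piType{x}{\subtype{\typC}{\q}}\Bp$, whose domain $\subtype{\typC}{\q}$ is a refined core type exactly as the grammar demands. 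A final use of \eqref{RR} merges the two refinement layers into one, yielding a quotient of a single refinement of $\piType{x}{\subtype{\typC}{\q}}\Bp$, as required. The delicacy is that each peeling step must be applied only in the direction where its side is well-formed, and in particular \eqref{QDom} must be used rather than \eqref{RDom}: the domain refinement genuinely cannot be eliminated (\eqref{RDom} is a strict subtyping, as Ex.~\ref{ex:refdom} shows), whereas the codomain quotient is extractable as an equality only because we adopted \eqref{quotAx} in Def.~\ref{ax:quotcod}.

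Finally, the ``in particular'' clause follows by inspecting the sole source of the production $\piType{x}{\subtype{\A}{\p}}\typB$ with a nontrivial refinement: such a domain refinement is produced only to accommodate refinement (or, via \eqref{QDom}, quotient) structure occurring under a domain position of the input, this being the one place the laws above cannot remove by an equality. Hence, when the input uses no refined domains, every refinement attached to a domain in the normal form is trivial and collapses by the trivial-refinement law of Thm.~\ref{lem:varRefinements} to a plain domain, so the resulting core type is an ordinary DHOL type and the whole type is a quotient of a refinement of a DHOL type.
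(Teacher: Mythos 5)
Your proposal is correct and follows essentially the same route as the paper's (much terser) proof: push refinements and quotients out of function types via Thm.~\ref{lem:piRQ} together with the axiom \eqref{quotAx}, leaving only a refined domain, and then collapse the accumulated layers with Thm.~\ref{lem:normSub}; your structural induction and explicit peeling order merely make the paper's sketch precise. One tiny inaccuracy in your ``in particular'' discussion: a quotient in domain position does \emph{not} leave a refined domain behind (\eqref{QDom} converts it into a refinement of the whole function type), so only genuine domain refinements in the input survive as refined domains --- but this only strengthens your conclusion.
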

\begin{proof}
Using Thm.~\ref{lem:piRQ} with the axiom from Def.~\ref{ax:quotcod}, all refinements and quotients can be pushed out of all function types except for a single refinement of the domain; if there is no such refinement, we can use $\p:=\lambdaFun x\A\T$.
And using Thm.~\ref{lem:normSub}, those can be collected into a single quotient+refinement.
\end{proof}

Together with the equality and variance rules from Thm.~\ref{lem:funeq}, \ref{lem:varRefinements},~\ref{lem:varQuotients}, this induces a \textbf{subtype-checking algorithm} that reduces subtyping to validity without ever expanding Def.~\ref{def:st}.
The latter is important because expanding Def.~\ref{def:st} would recurse into a computationally expensive problem.
This algorithm is obviously sound as it only chains derived rules.
We are confident, but have not proved yet, that it is also complete in the sense that subtyping can always be derived by using only our derived rules (i.e., without Def.~\ref{def:st}) and a sound and complete theorem prover (which we obtain in Sect.~\ref{sec:meta}).

It may be surprising and certainly complicates subtype-checking that we need to allow for refined domains in the normal forms.
This limitation echoes an observation first made in \cite{pvs_predicatesets} about combining dependent types and refinements.
Effectively, the culprits are partial dependent functions that cannot be extended to total functions because the return type is well-defined and non-empty only for the refined domain.
For future work, it would be interesting to use a higher-order logic with partial functions as a translation target, like the one of \cite{imps_kernel}.
But we have not considered that option due to the lack of ATP support for such logics.


%

\section{Soundness and Completeness}\label{sec:meta}
We extend the translation from Fig.~\ref{fig:trans2} with cases for our two new productions:
\begin{align*}
\mathll{
\PhiAppl{\subtype{\A}{\p}} := \PhiAppl{\A}\plabel{PTPStype}\tb\tb
	\termEqT{\left(\subtype{\A}{\p}\right)}{\s}{\tm} := \termEqT{\A}{\s}{\tm}\land \PhiAppl{\p}\ \s\land \PhiAppl{\p}\ \tm \\[0.2em] 
\PhiAppl{\quot{\A}{\r}} := \PhiAppl{\A}\tb\tb
\termEqT{\left(\quot{\A}{\r}\right)}{\s}{\tm} := \PhiAppl{\r}\ \s\ \tm\land\PredPhi{A}{\s}\land\PredPhi{A}{\tm}
}\end{align*}
These definitions are not surprising as PERs in HOL are known to be closed under refinements and quotients.

\begin{example}[PERs for a Quotiented Codomain]\label{exam:persQuotCod}
We calculate the PERs for both sides of the law for quotiented codomains in Thm.~\ref{lem:piRQ}:
\[
\mathll{\termEqT{\left(\piType{x}\A\quot\typB\r\right)}{\termf}{\termg}\\
=\univQuant{\x,\y}{\PhiAppl{\A}} \termEqT{A}{\x}{\y}\implC
(\PhiAppl{\r}\ (\termf\ \x)\ (\termg\ \y)\land \PredPhi{B}{(\termf\ \x)}\land \PredPhi{B}{(\termg\ \y)})
}\]
Both this and
$\termEqT{\left(\quot{(\piType x\A\typB)}
{\lambdaFun{f,g}{\piType{x}{\A}\typB}\univQuant{x}{\A}\r\ \left(\termf\ \x\right)\left(\termg\ \x\right)}\right)}{\termf}{\termg}$
simplify to
$\univQuant{x}{\PhiAppl{\A}}\PredPhi{A}{\x}\implC \PhiAppl{\r}\ (\termf\ \x)\ (\termg\ \x)\land\PredPhi{B}{(\termf\ \x)}\land\PredPhi{B}{(\termg\ \x)}$.
This justifies adopting axiom \eqref{quotAx}.
The simplification uses the substitution lemma from the appendix, the well-definedness of $\r$, and the transitivity of $\PhiAppl{\r}$.

\end{example}

\begin{example}[PERs for a Refined Domain]
We calculate the PERs for both sides of the law for refined domains in Thm.~\ref{lem:piRQ}:
\[\termEqT{\left(\piType{x}{\subtype\A\p}\typB\right)}{\termf}{\termg}=
	\univQuant{\x,\y}{\PhiAppl{\A}}\termEqT{A}{\x}{\y}\land\PhiAppl{\p}\ \x\land\PhiAppl{\p}\ \y\implC \termEqT{B}{(\termf\ \x)}{(\termg\ \y)}
\]\vspace*{-1.em}
\[\mathll{\termEqT{\left(\quot{(\piType x\A\typB)}
		{\lambdaFun{f,g}{\piType{x}{\A}\typB}\univQuant{x}{\A}\p\ \x\implC \left(\termf\ \x\right)\termEquals{\typB}\left(\termg\ \x\right)}\right)}{\termf}{\termg} =\\
	\univQuant{x}{\PhiAppl{\A}}
	\left(\PredPhi{A}{\x}\implC \PhiAppl{\p}\ \x\implC \termEqT{B}{(\termf\ \x)}{(\termg\ \x)}\right)\land \nonumber\\
	\left(\univQuant{\x,\y}{\PhiAppl{\A}}\termEqT{A}{\x}{\y}\implC
	\termEqT{B}{(\termf\ \x)}{(\termf\ \y)} \right)\land \nonumber\\
	\left(\univQuant{\x,\y}{\PhiAppl{\A}}\termEqT{A}{\x}{\y}\implC
	\termEqT{B}{(\termg\ \x)}{(\termg\ \y)} \right)
}\]
These are indeed not equivalent in line with our observation from Ex.~\ref{ex:refdom}.
\end{example}

Like in \cite{RRB:dhol:23}, this translation yields a sound (defined as in the previous paper) and complete theorem prover:
\renewcommand{\dedPT}[1][]{\ensuremath{\vdash^{#1}_{\PhiAppl{\theorycolor{T}}}}}
\begin{theorem}[Completeness]\label{thm:completePaper}
We have the invariants from Fig.~\ref{fig:complete}.

\begin{figure}
\begin{tabular}{|l@{\;\;}|@{\;\;}l@{\tb and \tb}l|}
			\hline 
			if in \dhol & \multicolumn{2}{c|}{then in HOL} \\ 
			\hline 
			$\phantom{\ctx}\ded \Thy{\theorycolor{T}}$ & \multicolumn{2}{l|}{$\phantom{\PhiAppl{\ctx}}\ded\Thy{\PhiAppl{\theorycolor{T}}}$}  \\ 
			$\phantom{\ctx}\dedT \Ctx{\ctx}$ &  \multicolumn{2}{l|}{$\phantom{\PhiAppl{\ctx}}\dedPT\Ctx{\PhiAppl{\ctx}}$}  \\ 
			$\ctx\dedT \Type{\A}$ & $\PhiAppl{\ctx}\dedPT\Type{\PhiAppl{\A}}$ & $\PhiAppl{\ctx}\dedPT\PredPhiName{\A}: \PhiAppl{\A}\to\PhiAppl{\A}\to \bool$ and $\PredPhiName{\A}$ is a PER \\ 
			$\ctx\dedT \A \typeEquals \typB$ & $\PhiAppl{\ctx}\dedPT\PhiAppl{\A} \typeEquals \PhiAppl{\typB}$ & $\concatCtx{\PhiAppl{\ctx}}{\varname{x},\y\ofT\PhiAppl{\A}}\dedPT\PredPhiName{\A}\ \varname{x}\ \y \termEqB \PredPhiName{\typB}\ \varname{x}\ \y$ \\ 
			$\ctx\dedT \A \subtyping \typB$ & $\ctxT\dedPT \PhiAppl{\A} \typeEquals \PhiAppl{\typB}$ & $\concatCtx{\ctxT}{\x ,\y\ofT\PhiAppl{\typB}}\dedPT \termEqT{A}{\x}{\y}\impl \termEqT{B}{\x}{\y}$\\
			$\ctx\dedT \tm\ofT\A$ & $\PhiAppl{\ctx}\dedPT\PhiAppl{\tm}\ofT\PhiAppl{\A}$ & $\PhiAppl{\ctx}\dedPT\PredPhi{\A}{\PhiAppl{\tm}}$ \\ 
			$\ctx\dedT \termF$ & \multicolumn{2}{l|}{$\PhiAppl{\ctx}\dedPT\PhiAppl{\termF}$} \\
			\hline 
\end{tabular}
\caption{Invariants of the Translation}\label{fig:complete}
\end{figure}
\end{theorem}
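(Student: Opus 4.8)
The plan is to prove all rows of Fig.~\ref{fig:complete} by a single simultaneous induction on \dhol derivations, in the same style as \cite{RRB:dhol:23}. Because the translation acts on the pre-existing syntax verbatim and the two new productions only add fresh clauses to the PER-generating part of the translation (those for $\termEqT{(\subtype\A\p)}{\s}{\tm}$ and $\termEqT{(\quot\A\r)}{\s}{\tm}$ defined just above the theorem), every legacy inference rule is discharged exactly as in the original proof: its induction hypotheses deliver the same HOL facts, and none of those cases inspects a refinement or quotient. Hence the new work is confined to the genuinely new primitive rules --- the five refinement rules, the four quotient rules, the antisymmetry rule \ruleRef{\STequiv}, and the added subtyping axiom \eqref{quotAx} --- together with the subtyping row of Fig.~\ref{fig:complete}, which needs separate handling since $\subtyping$ is only an abbreviation.

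The refinement and quotient formation, introduction, elimination, and equality rules are routine once these clauses are unfolded. For the two formation rules I would verify that the newly defined relation is a PER: its symmetry and transitivity reduce to those of $\PredPhiName{\A}$, together with --- in the quotient case --- the symmetry and transitivity of $\PhiAppl{\r}$ that the validity row yields from $\isEqRel{\r}$. For the introduction and first elimination rules the membership predicate $\PredPhi{\subtype\A\p}{\PhiAppl{\tm}}$ unfolds to $\PredPhi{\A}{\PhiAppl{\tm}}\land\PhiAppl{\p}\ \PhiAppl{\tm}\land\PhiAppl{\p}\ \PhiAppl{\tm}$, each conjunct being supplied by an induction hypothesis; the quotient case is analogous, with the reflexivity of $\PhiAppl{\r}$ on proper elements again coming from $\isEqRel{\r}$. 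The refinement equality rule needs in addition $\PhiAppl{\p}\ \PhiAppl{\tm}$ deduced from $\PhiAppl{\p}\ \PhiAppl{\s}$ and $\termEqT{\A}{\PhiAppl{\s}}{\PhiAppl{\tm}}$, i.e.\ the well-definedness of the translated predicate (a consequence of the typing row for $\p$), whereas the quotient equality rule collapses because its two surplus membership conjuncts are precisely the typing facts furnished by its hypotheses on $\s$ and $\tm$.

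The subtyping row is obtained from Lem.~\ref{lem:steq}. If $\ctx\dedT \A\subtyping\typB$, that lemma produces the \dhol validity $\ctx\dedT F$ with $F=\univQuant{x}{\A}\univQuant{y}{\A}\x\termEquals{\A}\y\implC\x\termEquals{\typB}\y$; applying the validity row to $F$ and simplifying the relativised quantifiers via PER-reflexivity yields exactly the sub-PER implication $\termEqT{\A}{\x}{\y}\impl\termEqT{\typB}{\x}{\y}$, while uniqueness of HOL typing upgrades the typing row into $\PhiAppl{\A}\typeEquals\PhiAppl{\typB}$. The \ruleRef{\STequiv} case then follows: the PER biconditional $\PredPhiName{\A}\ \x\ \y\termEqB\PredPhiName{\typB}\ \x\ \y$ demanded by the type-equality row is just the conjunction of the two sub-PER implications coming from its two subtyping premises. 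The subtlety to watch is that the $F$-derivation supplied by Lem.~\ref{lem:steq} is not literally a sub-derivation of the subtyping derivation, so the induction must be set up on a measure that absorbs the constant-size overhead Lem.~\ref{lem:steq} introduces (equivalently, the subtyping row is read off the validity row once the latter holds for all strictly smaller derivations).

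I expect the two remaining cases to be where the difficulty concentrates. The first is quotient elimination: to translate the conclusion $\tm\substOp\x\s\ofT\typB\substOp\x\s$ I would instantiate the induction hypothesis of the second premise at the concrete representative $\PhiAppl{\s}$. The guard $\termEqT{\quot\A\r}{\PhiAppl{\s}}{\PhiAppl{\s}}$ required for this instantiation is reflexivity of the quotient PER, available from $\PredPhi{\A}{\PhiAppl{\s}}$ (first premise) and $\isEqRel{\r}$; the substitution lemma from the appendix then rewrites $\PhiAppl{\tm\substOp\x\s}$ and $\PhiAppl{\typB\substOp\x\s}$ into the instantiated forms. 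The dependence of $\typB$ on $\x$ makes this bookkeeping the most error-prone step, and it is exactly the point where the stronger elimination rule (rather than the weaker variant $(\ast)$ of Sect.~\ref{sec:quot}) is needed. The second case is the axiom \eqref{quotAx}: its translated subtyping row cannot be derived from earlier rules and must instead be shown HOL-valid outright, which is precisely the PER computation carried out in Ex.~\ref{exam:persQuotCod}, where both sides simplify to the same formula using transitivity and well-definedness of $\PhiAppl{\r}$. Together, the quotient-elimination bookkeeping and the non-derivability of \eqref{quotAx} are the hard core of the argument.
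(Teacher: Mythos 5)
Your overall strategy coincides with the paper's: a simultaneous induction on \dhol derivations that reuses the cases of \cite{RRB:dhol:23} verbatim for the legacy rules, adds cases for the refinement/quotient formation, introduction, elimination and equality rules by unfolding the two new PER clauses, and verifies the axiom \eqref{quotAx} outright via the PER computation of Ex.~\ref{exam:persQuotCod} rather than deriving it. Your identification of quotient elimination (instantiating the second premise's induction hypothesis at $\PhiAppl{\s}$, with the substitution lemma handling $\typB\substOp\x\s$) and of \eqref{quotAx} as the two hard cases matches the paper exactly, as does your treatment of \ruleRef{psubEq} via well-definedness of $\PhiAppl{\p}$.

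The one place you genuinely diverge is the subtyping row. You obtain it from Lem.~\ref{lem:steq}: build the \dhol validity $F$, push it through the validity row, and simplify. The paper instead proves, \emph{before} the derivation induction and by induction on the grammar, a strengthened substitution lemma (its clause \eqref{correct:substRelatTerms}: $\concatCtx{\ctx}{\x\ofT\A}\dedT\tm\ofT\typB$ implies $\termEqT{B}{\PhiAppl{\tm}}{\subst{\PhiAppl{\tm}}{x}{\xp}}$ under the hypothesis $\termEqT{A}{\x}{\xp}$), together with symmetry and transitivity of all the PERs; the two-variable implication $\termEqT{A}{\x}{\y}\impl\termEqT{B}{\x}{\y}$ is then read off by applying this lemma to the term $\x$ in the expanded definition of $\A\subtyping\typB$, and the type equality $\PhiAppl{\A}\typeEquals\PhiAppl{\typB}$ comes from uniqueness of HOL types. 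This buys exactly what your approach lacks: since the substitution lemma is proved by induction on syntax rather than on derivations, no induction-measure gymnastics are needed, whereas the $F$-derivation produced by Lem.~\ref{lem:steq} is strictly larger than the subtyping derivation it is built from and is consumed as an induction hypothesis in the \ruleRef{\STequiv} case, so your ``measure that absorbs the constant overhead'' has to be set up with some care (e.g.\ by weighting \ruleRef{\STequiv} appropriately, or by extracting the sub-PER implication directly in HOL from the typing invariant of $\lambdaFun{x}{\A}\x$ rather than lifting the whole $F$-derivation). Note also that the same preliminary lemma \eqref{correct:substRelatTerms} is what the paper uses inside the \ruleRef{lambda'} and \ruleRef{congLam'} cases, so you will need it (or an equivalent) in any event; making it an upfront grammar induction, as the paper does, is the cleaner decomposition.
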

\begin{proof}
The subtyping claim is a slightly strengthened version of the claim obtained by expanding the definition of $\subtyping$.
The proof, given in Appendix~\ref{appendix:Complete}, adapts the proof from \cite{RRB:dhol:23} with additional cases for new productions and rules.
\end{proof}


As in \cite{RRB:dhol:23}, the converse of Thm.~\ref{thm:completePaper} is much harder to state and prove.
First we need a technical assumption: We call a type symbol $\tpdeclname{a}$ \emph{inhabited} if at least one of its instances is provably non-empty.

\begin{theorem}[Soundness]\label{thm:soundPaper}
In a well-formed DHOL-theory $\ded\Thy{\thy}$ in which every type symbol is inhabited:%
	\begin{align*}\text{If}\;\;\ctx\dedT[DHOL] \termF:\bool \;\;\color{black}\text{and}\;\; \ctxT\dedPT[HOL]\PhiAppl{\termF}, \;\;\color{black}\text{then}\;\; \ctx\dedT[DHOL]\termF\end{align*}
\end{theorem}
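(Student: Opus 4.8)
The plan is to argue model-theoretically and contrapositively, adapting the soundness argument of \cite{RRB:dhol:23} to the extended calculus. Since HOL is sound with respect to its standard (general/Henkin) models, the hypothesis $\ctxT\dedPT \PhiAppl{\termF}$ tells us that $\PhiAppl{\termF}$ holds in \emph{every} HOL-model of $\PhiAppl{\thy}$. I would then invoke completeness of the DHOL calculus with respect to its class of models --- the fact that a well-formed Boolean holding in all DHOL-models of $\thy$ (and $\ctx$) is derivable --- which is what ultimately converts validity back into a derivation. Thus it suffices to fix an arbitrary DHOL-model $M$ of $\thy$ and $\ctx$ and show $M\models\termF$.

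First I would build from $M$ a HOL-model $N$ of $\PhiAppl{\thy}$ by semantic dependency erasure, mirroring Fig.~\ref{fig:trans2}: each erased base type $\a$ is interpreted as the union of the carriers of all DHOL-instances $\a\ \termC{t_1}\ \ldots\ \termC{t_n}$ in $M$, and the generated relation $\PredPhiName{a}$ as ``same instance and equal there'', which is automatically a PER and validates the generated PER axiom for $\a$. Here the \emph{inhabitation} hypothesis is exactly what is needed and exactly where it is needed: HOL-models demand non-empty domains, and the erased carrier of $\a$ is non-empty precisely because some instance of $\a$ is provably, hence in $M$ actually, non-empty. Function types are handled by the logical-relations clause, and the two new productions reuse the carrier of $\A$ unchanged: for $\subtype{\A}{\p}$ the interpreting PER is that of $\A$ strengthened by $\PhiAppl{\p}$, and for $\quot{\A}{\r}$ it is coarsened to $\PhiAppl{\r}$ intersected with the PER of $\A$, exactly as in the extended translation clauses above. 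One then checks that $N$ satisfies all translated axioms of $\PhiAppl{\thy}$ and the assumptions $\PhiAppl{\ctx}$.

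Next I would prove the \emph{fundamental lemma}: by induction on DHOL terms and types, the interpretation of every $\tm\ofT\A$ in $M$ corresponds, up to $\PredPhiName{\A}$, to the interpretation of $\PhiAppl{\tm}$ in $N$; in particular a proposition $\termF$ is true in $M$ iff $\PhiAppl{\termF}$ is true in $N$. The refinement and quotient cases fall out of the two PER clauses, and the subtyping $\A\subtyping\quot{\A}{\r}$ becomes the evident fact that the $\A$-PER refines the $\quot{\A}{\r}$-PER on the common carrier. Applying the lemma to $\termF$ and using that $\PhiAppl{\termF}$ holds in $N$ yields $M\models\termF$; as $M$ was arbitrary, $\termF$ is DHOL-valid and hence derivable, which is the strengthened converse of Thm.~\ref{thm:completePaper}.

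The main obstacle is twofold. On the model side, the genuinely new work is validating the added rules: the anti-symmetry rule \ruleRef{\STequiv}, which forces $\subtyping$ to be an order, and above all the strong quotient elimination rule QE, whose class-local side condition (quantifying over representatives of the class of $\s$ rather than over all $\r$-related pairs, unlike the weaker $(\ast)$) must be discharged --- so the fundamental lemma has to carry enough information about equivalence classes to cover it. On the bookkeeping side, possibly-\emph{empty} refinements interact delicately with HOL's non-emptiness requirement: a type such as $\subtype{\A}{\lambdaFun{x}{\A}\F}$ is empty in $M$ yet is erased to the inhabited carrier of $\A$ in $N$, so the correspondence is only ever up-to-PER and never up-to-carrier, and it is the inhabitation hypothesis alone that keeps the erased \emph{base} carriers non-empty. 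I expect re-establishing the model-existence (completeness) half of the argument for the extended calculus, honouring these new rules, to be where the real difficulty lies.
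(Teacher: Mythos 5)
Your proposal hinges on a step that the paper (and the prior work it builds on) does not provide and that is nowhere near free: a \emph{completeness theorem for the DHOL calculus with respect to a class of DHOL models}. The paper defines the semantics of DHOL \emph{by} the translation to HOL; there is no independent model theory for DHOL, and \cite{RRB:dhol:23} proves soundness proof-theoretically, not model-theoretically. So there is no ``model-existence half of the argument'' to re-establish --- you would have to invent the model class, prove the calculus sound for it (including the anti-symmetry rule, the strong quotient elimination rule with its class-local side condition, and the quotiented-codomain axiom), and then prove a Henkin-style model-existence/completeness theorem for a calculus with undecidable typing, refinements, and quotients. That last step is the entire content of the theorem in disguise: without it, ``$\termF$ holds in every DHOL model'' does not yield $\ctx\dedT\termF$. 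Your own closing sentence concedes this is ``where the real difficulty lies,'' but the proposal does not address it, so the argument is circular at its load-bearing joint.

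The paper's actual route is entirely syntactic and avoids models altogether: it shows the translation is term-wise injective per type, assigns DHOL type indices to every term in a given HOL derivation of $\ctxT\dedPT\PhiAppl{\termF}$, transforms that derivation step by step into an \emph{admissible} one in which every term is ``almost proper'' (replacing unnormalizably spurious applications by default terms --- this is where inhabitation is actually used, not for non-empty carriers), and then lifts the normalized derivation rule by rule to a DHOL derivation of a quasi-preimage of each statement. If you want to salvage a model-theoretic argument, you would first have to supply the DHOL model theory and its completeness proof as a standalone development of comparable or greater difficulty; as written, the proposal reduces the theorem to an unproven and harder claim.
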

\vspace{-1.5em}
\begin{proof}
The key idea is to transform a HOL-proof of $\PhiAppl{\termF}$ into one that is in the image of the translation, at which point we can read off a \dhol-proof of $\termF$.
The full proof is given in Appendix~\ref{appendix:Sound}.
We expect the inhabitation requirement to be redundant, but have not been able to complete the proof without it yet.
In any case, it is harmless because it is satisfied by all practical examples.
\end{proof}

It is now straightforward to extend the DHOL implementation we gave in \cite{RRB:dhol:23}:
First run a bidirectional type-checker for DHOL, using the subtyping-checker sketched in Sect.~\ref{sec:interact}, to establish well-typedness of theory and conjecture.
Then translate conjecture and generated proof obligations and apply a HOL ATP.

\section{Application to Typed Set Theory}\label{sec:soft}
\newcommand{\isList}{\ensuremath{\termC{\mathrm{isList}}}\xspace}
\newcommand{\elem}{\ensuremath{\termC{elem}}\xspace}
\newcommand{\isSubset}{\ensuremath{\termC{\subseteq}}\xspace}
\newcommand{\tuple}{\ensuremath{\termC{tuple}}\xspace}
\newcommand{\card}{\ensuremath{\termC{card}}\xspace}
\newcommand{\pair}{\ensuremath{\termC{pair}}\xspace}
\newcommand{\tpair}{\ensuremath{\termC{tpair}}\xspace}
\newcommand{\fun}{\ensuremath{\termC{fun}}\xspace}
\newcommand{\lam}{\ensuremath{\termC{lam}}\xspace}
\newcommand{\tlam}{\ensuremath{\termC{tlam}}\xspace}
\newcommand{\isTuple}{\ensuremath{\termC{isTuple}}\xspace}
\newcommand{\cartProd}{\ensuremath{\typeC{Prod}}\xspace}
\newcommand{\Relats}{\ensuremath{\typeC{Rels}}\xspace}
\newcommand{\Functs}{\ensuremath{\typeC{Functions}}\xspace}
\newcommand{\Injs}{\ensuremath{\typeC{Injs}}\xspace}

DHOL with subtyping enables a novel formalization of typed set-theory: 
\[\set\ofT\,\type,\tb\in\ofT\,\set\to\set\to\bool,\tb\elem\ \s:=\subtype\set{\lambdaFun x\set\x\in\s}\]
The key idea is that $\elem\,\s$ is the DHOL \emph{type} of set-theoretical elements of the \emph{set} $\s$.
Leveraging that refinements and quotients do not require change of representation, we obtain a powerful combination of elegant high-level typed formalization and efficient low-level reasoning.
All the routine constructions of untyped set theory can be lifted to their typed counterparts.
For example, for products, we use $\times\ofT\set\to\set\to\set$ and $\pair\ofT\set\to\set\to\set$ and the property $\assert\;\univQuant{x,y,s,t}\set(\x\in\s)\wedge(\y\in\tm)\implC\pair\ \x\ \y\in \s\times\tm$, from which we can show that $\pair\;\ofT\; \elem\ \s\to\elem\ \tm\to\elem\ (s\times t)$.


We can also use DHOL-functions $\set\to\set$ as set-theoretical functions between sets $\s$ and $\tm$ without a change in representation as the type
$\Functs\,\s\,\tm:= \quot{\subtype{(\set\to\set)}{\p}}\r$
where
$\p\ \termf=\univQuant x\set \x\in\s\implC(\termf\ \x)\in\tm$ and
$\r\ \termf\ \termg=\univQuant x\set \x\in\s\implC(\termf\ \x)\termEquals\set(\termg\ \x)$.
This allows us to represent set-theoretical function application and composition $\circ$ directly as DHOL application/composition.

%

Consequently, theorem proving in typed set theory becomes very strong because a large share of the proving workload can be outsourced into typing-obligations.
For example, the property that the composition of functions $f:\Functs\ \s\ \tm$ and $g:\Functs\ \tm\ \termC{u}$ has type $\Functs\ \s\ \termC u$ becomes
\[
\assertL \univQuant{s,t,u}{\set}\univQuant{f}{\Functs\ \varname{s}\ \varname{t}}\univQuant{g}{\Functs\ \varname{t}\ \varname{u}}\univQuant{x}{\set}\x\in\s\implC((\varname{g}\ \circ\ \varname{f})\ \x)\in \varname{u}
\]
which yields in HOL the conjecture below that current HOL ATPs solve easily.
\newcommand{\setrel}{\ensuremath{\termC{set\_rel}}\xspace}
\newcommand{\termh}{\ensuremath{\termC{h}}\xspace}
\begin{align*}
	&\assert \univQuant{s}{\set}\setrel\ \s\ \s\implC \univQuant{t}{\set}\setrel\ \tm\ \tm\implC \univQuant{u}{\set}\setrel\ \termC{u}\ \termC{u}\implC\\
	&\qquad \univQuant{f}{\set\to\set}
	\univQuant{x}{\set}x\in \s\implC \setrel (\termf\ \x) (\termf\ \x)%
	\land (\univQuant{x}{\set}x\in \s\implC (\termf\ \x)\in \tm) \implC\\
	&\qquad \univQuant{g}{\set\to\set} %
	\univQuant{x}{\set}x\in \tm\implC \setrel (\termg\ \x) (\termg\ \x)%
	\land (\univQuant{x}{\set}x\in \tm\implC (\termg\ \x)\in  \termC{u}) \implC\\
	&\qquad \univQuant{x}{\set}\setrel\  \termC{u}\implC%
	\x\in\s\implC((\termg\ (\termf\ \x))\in \varname{u})
\end{align*}
Below is the HOL translation of the conjecture that function composition is associative. It is similarly easily proved by current ATPs:
\begin{align*}
	&\assert \univQuant{s}{\set}\setrel\ \s\ \s\implC \univQuant{t}{\set}\setrel\ \tm\ \tm\implC \univQuant{u}{\set}\setrel\ \termC{u}\ \termC{u}\implC \univQuant{v}{\set}\setrel\ \termC{v}\ \termC{v}\implC\\
	&\qquad \univQuant{f}{\set\to\set}
	\univQuant{x}{\set}x\in \s\implC \setrel (\termf\ \x) (\termf\ \x)%
	\land (\univQuant{x}{\set}x\in \s\implC (\termf\ \x)\in \tm) \implC\\
	&\qquad \univQuant{g}{\set\to\set} %
	\univQuant{x}{\set}x\in \tm\implC \setrel (\termg\ \x) (\termg\ \x)%
	\land (\univQuant{x}{\set}x\in \tm\implC (\termg\ \x)\in  \termC{u}) \implC\\
	&\qquad \univQuant{h}{\set\to\set} %
	\univQuant{x}{\set}x\in \termC{u}\implC \setrel (\termh\ \x) (\termh\ \x)%
	\land (\univQuant{x}{\set}x\in \tm\implC (\termh\ \x)\in  \termC{v}) \implC\\
	&\qquad \univQuant{x}{\set}\setrel\ \termC{x}\ \termC{x}\implC%
	\x\in\s\implC( \setrel\ (\termh\ (\termg\ (\termf\ \x))) \ (\termh\ (\termg\ (\termf\ \x))))
\end{align*}
The corresponding TPTP files are available at
\url{https://gl.mathhub.info/MMT/LATIN2/-/tree/master/source/casestudies/2025-FroCos}.

\section{Conclusion and Future Work}\label{sec:conc}
DHOL combines higher-order logic with dependent types, obtaining an intuitive and expressive language, albeit with undecidable typing.
We double down on this design by elegantly extending DHOL with two practically important type constructors that thrive in that setting: refinement and quotient types.
Like dependent function types, these two require terms occurring in types.
Both are near-impossible to add as an afterthought to a type theory with decidable typing.

We translate the resulting logic to HOL, obtaining a practical automated theorem proving workflow for DHOL with refinement and quotient types.
Our main result is the proof of soundness and completeness of this translation.

We used an extensional subtyping approach, where $\A\subtyping\typB$ holds iff all $\A$-terms also have type $\typB$.
This allows combining typed representations and efficient reasoning.
We established all the expected variance and normalization laws except for function types with refined domains.
Future work must investigate how to improve on this to make normalizing types and thus subtype-checking simpler.


We also want to carry our results for DHOL over to existing refinement/quotient type systems for programming languages like Quotient Haskell, where DHOL-like axioms are used as lightweight specifications.

\bibliographystyle{splncs04}


\newpage
\begin{appendix}
	\renewcommand{\namedRule}[3]{\refnamedRule{#1}{#2}{#3}}
\renewcommand{\rnamedRule}[4]{\refrnamedRule{#1}{#2}{#3}{#4}}
\renewcommand{\snamedRule}[3]{\refsnamedRule{#1}{#2}{#3}}
\setbool{inAppendix}{true}

\section{Summary of logics and translations}\label{appednix:logics}
In this section we collect the inference rules of the logics and the definition of the overall translation.
We name the rules and enumerate the cases in the definition of the translation for reference in the proofs in the subsequent appendices.

\subsection{\hol{} rules}
The rules for HOL are given in Fig.~\ref{fig:holrulesAppendix}.

\begin{figure}[hpt]
	{\ifbool{inAppendix}{\scriptsize}{\small}
		Theories and contexts:
		\[
		\snamedRule{thyEmpty}{\ded\Thy{\emptyThy}}{}\tb
		\snamedRule{thyType}{\ded\Thy{\concatThy{\theorycolor{T}}{\Type{\A}}}}{\ded\Thy{\theorycolor{T}}}\tb
		\snamedRule{thyConst}{\ded\Thy{\concatThy{\theorycolor{T}}{\constname{c}\ofT\A}}}{\dedT \Type{\A}}\tb
		\snamedRule{thyAxiom}{\ded\Thy{\concatThy{\theorycolor{T}}{\namedax{c}{\termF}}}}{\dedT \termF\ofT \bool}
		\]
		\[
		\snamedRule{ctxEmpty}{\dedT\Ctx{\emptyCtx}}{\ded\Thy{\theorycolor{T}}}\tb
		\snamedRule{ctxVar}{\dedT\Ctx{\concatCtx{\ctx}{\x\ofT \A}}}{\ctx\dedT \Type{\A}}\tb
		\snamedRule{ctxAssume}{\dedT\Ctx{\concatCtx{\contextcolor{\ctx}}{\namedass{x}{\termF}}}}{\contextcolor{\ctx}\dedT \termF\ofT\bool}
		\]\\
		Lookup in theory and context:
		\[
		\snamedRule{type}{\ctx \dedT \Type{\A}}{\thyIn{\A\ofT\type}{\theorycolor{T}} \tb \dedT\Ctx{\ctx}}\tb
		\rnamedRule{const}{\ctx\dedT \constname{c}\ofT \A}{\thyIn{\constname{c}\ofT \Ap}{\theorycolor{T}} \tb\ctx\dedT \Ap\typeEquals \A}{const}\tb
		\snamedRule{axiom}{\ctx\dedT F}{\thyIn{\namedax{c}{\typeC{F}}}{\theorycolor{T}} \tb  \dedT\Ctx{\ctx}}
		\]
		\[
		\rnamedRule{var}{\ctx\dedT \x\ofT \A}{\ctxIn{\x\ofT \Ap}{\ctx} \tb\ctx\dedT \Ap\typeEquals \A}{var} \tb
		\snamedRule{assume}{\ctx\dedT \termF}{\ctxIn{\namedass{x}{\termF}}{\ctx} \tb  \dedT\Ctx{\ctx}}
		\]\\
		Well-formedness and equality of types:
		\[
		\snamedRule{bool}{\ctx\dedT \Type{\bool}}{\dedT\Ctx\ctx}\tb
		\snamedRule{arrow}{\ctx\dedT \Type{\A\to \typB}}{\ctx\dedT \Type{\A} \tb \ctx\dedT \Type{\typB}}
		\]
		\[
		\snamedRule{congBase}{\ctx\dedT \A\typeEquals \A}{\Ctx{\ctx}\quad \thyIn{\tpdeclname{a}:\type}{\theorycolor{T}}} \tb
		\rnamedRule{cong$\bool$}{\ctx\dedT \bool\typeEquals \bool}{\dedT\Ctx{\ctx}}{congB} \tb
		\rnamedRule{cong$\to$}{\ctx\dedT \A\to \typB\typeEquals \Ap\to \Bp}{\ctx\dedT \A\typeEquals \Ap \tb  \ctx\dedT \typB\typeEquals \Bp}{congTo}
		\]\\
		Typing: 
		\[
		\snamedRule{lambda}{\ctx\dedT (\lambdaFun{x}{\A} \tm)\ofT \A\to \typB}{\concatCtx{\ctx}{\x\ofT \A}\dedT \tm\ofT \typB}\tb
		\snamedRule{appl}{\ctx\dedT \termf\ \tm\ofT\typB }{\ctx\dedT \termf\ofT \A\to \typB \tb \ctx\dedT \tm\ofT \A}\tb
		\rnamedRule{$=$type}{\ctx\dedT \s\termEquals{\A}\typeC{t}\ofT\bool}{\ctx\dedT \s\ofT\A\tb \ctx\dedT \tm\ofT \A}{eqType}
		\]\\
		Term equality, congruence, reflexivity, symmetry, $\beta$, $\eta$:
		\[
		\namedRule{cong$\lambda$ (xi)}{\ctx\dedT \lambdaFun{x}{\A} \tm\termEquals{\A\to \typB} \lambdaFun{x}{\Ap} \typeC{t'}}{\ctx\dedT \A\typeEquals \Ap \tb\concatCtx{\ctx}{\x\ofT \A}\dedT \tm\termEquals{B} \termtp}\rulelabelAppendix{congLam}{cong$\lambda$}
		\tb
		\snamedRule{congAppl}{\ctx\dedT \termf\ \tm\termEquals{B} \termfp\ \termtp}{\ctx\dedT \tm\termEquals{\A} \termtp\tb  \ctx\dedT \termf\termEquals{\A\to \typB} \termfp}
		\]
		\[
		\snamedRule{refl}{\ctx\dedT \tm\termEquals{\A} \tm}{\ctx\dedT \tm\ofT \A}\tb
		\snamedRule{sym}{\ctx\dedT \s\termEquals{\A} \tm}{\ctx\dedT \tm \termEquals{\A}\s}\tb
		\snamedRule{beta}{\ctx\dedT (\lambdaFun{x}{\A} \s)\ \tm \termEquals{B} \subst{\s}{x}{\tm} }{\ctx\dedT (\lambdaFun{x}{\A} \s)\ \tm\ofT \typB}\tb
		\snamedRule{eta}{\ctx\dedT \tm\termEquals{\A\to \typB} \lambdaFun{x}{\A}\tm\ \x}{\ctx\dedT \tm\ofT \A\to \typB\quad\ctxIn{\x\text{ not}}{\ctx}}
		\]\\
		Rules for implication:
		\[
		\rnamedRule{$\impl$type}{\ctx\dedT \termF\Rightarrow \termC{G}\ofT\bool}{\ctx\dedT \termF\ofT\bool\tb \ctx\dedT \termC{G}\ofT\bool}{implType}\tb
		\rnamedRule{$\impl$I}{\ctx\dedT \termF\Rightarrow \termC{G}}{\ctx\dedT \termF\ofT\bool \tb \concatCtx{\ctx}{\namedass{x}{\termF}}\dedT \termC{G}}{implI}\tb
		\rnamedRule{$\impl$E}{\ctx\dedT \termC{G}}{\ctx\dedT \termF\Rightarrow \termC{G}\tb \ctx\dedT \termF}{implE}
		\]
		Congruence for validity, Boolean extensionality, and non-emptiness of types:
		\[
		\rnamedRule{cong$\ded$}{\ctx\dedT \termF}{\ctx\dedT \termF\termEquals{\bool} \termC{F'}\quad \ctx\dedT \termC{F'}}{congDed}\tb
		\snamedRule{boolExt}{\ctx,\x\ofT \bool\dedT \p\ \x}{\ctx\dedT \p\ \T \quad \ctx\dedT \p\ \F}\tb
		\snamedRule{nonempty}{\ctx\dedT \termF}{\ctx\dedT \termF\ofT\bool\quad \ctx,\x\ofT \A\dedT \termF}
		\]
		
		In the soundness proof, we will occasionally use the existence of a HOL term of given type $\A$ (whose existence follows from rule \ruleRef{nonempty}), so we denote this term by $\defaultTerm{\A}$.
	}
	\caption{\hol{} Rules}
	\ifbool{inAppendix}{\vspace*{-.5cm}\label{fig:holrulesAppendix}}{\label{fig:holrulesPaper}}
\end{figure}

\clearpage

\subsection{Derived rules}
Using the rules given in Figure~\ref{fig:holrulesAppendix} we can derive a number of additional useful rules.
\label{sec:meta-thm:1}

The following lemma collects a few routine meta-theorems that we make use of later on:
\begin{lemma}\label{meta-thm:1}
	Given the inference rules for HOL (cfg. Figure~\ref{fig:holrulesAppendix}), the following rules are admissible:
	{\small
		\[
		\snamedRule{ctxThy}{\ded \Thy{T}}{\dedT \Ctx{\ctx}}\tb
		\snamedRule{tpCtx}{\dedT \Ctx{\ctx}}{\ctx\dedT \Type{\A}}\tb
		\snamedRule{typingTp}{\ctx\dedT \Type{\A}}{\ctx\dedT \tm\ofT\A}\tb
		\snamedRule{validTyping}{\ctx\dedT \termF\ofT\bool}{\ctx\dedT \termF}\tb
		\]
		\[
		\rnamedRule{constS}{\ctx\dedT \constname{c}\ofT\A}{\thyIn{\constname{c}\ofT\A}{T}}{constS}\tb
		\rnamedRule{varS}{\ctx\dedT \x\ofT\A}{\ctxIn{\x\ofT\A}{\ctx}}{varS}
		\]
		\[
		\rnamedRule{$\typeEquals$refl}{\ctx\dedT \A\typeEquals \A}{\ctx\dedT \Type{\A}}{tpEqRefl}\tb
		\rnamedRule{$\typeEquals$sym}{\ctx\dedT \Ap\typeEquals \A}{\ctx\dedT \A\typeEquals \Ap}{tpEqSym}\tb 
		\rnamedRule{$\typeEquals$trans}{\ctx\dedT \A\typeEquals \typeC{A''}}{\ctx\dedT \A\typeEquals \Ap\tb \ctx\dedT \Ap\typeEquals \typeC{A''}}{tpEqTrans}
		\]
		\[
		\snamedRule{eqTyping}{\ctx\dedT \s\ofT \A}{\ctx\dedT \s\termEquals{\A}\tm}\tb
		\snamedRule{implTypingL}{\ctx\dedT \termF\ofT\bool}{\ctx\dedT \termF\implC \termC{G}}\tb
		\snamedRule{implTypingR}{\ctx\dedT \termC{G}\ofT\bool}{\ctx\dedT \termF\implC \termC{G}}\]
		\[
		\snamedRule{typesUnique}{\ctx\dedT \A\typeEquals \Ap}{\ctx\dedT \s\ofT\A\tb \ctx\dedT \s\ofT\A'}
		\tb
		\rnamedRule{typingWf}{\ctx\dedT \tm\ofT\A}{\ctx\dedT \termf\ \tm\ofT\typB \quad \ctx\dedT \termf\ofT\A\to \typB}{typingWellformed}
		\]
		\[
		\snamedRule{applType}{\ctx\dedT \termf\ofT\A\to \typB}{\ctx\dedT \tm\ofT\A\tb \ctx\dedT \termf\ \tm\ofT\typB}\tb
		\snamedRule{rewriteTyping}{\ctx\dedT \subst{\s}{x}{\tm}\ofT\A}{\concatCtx{\ctx}{\x\ofT\typB}\dedT \s\ofT\A\quad \ctx\dedT \tm\ofT\typB}
		\]
		\[
		\rnamedRule{monotonic$\ded$}{\concatCtx{\ctx}{\namedass{ass}{\termF}}\dedT \termC{G}}{\ctx\dedT \termF\ofT\bool\quad\ctx\dedT \termC{G}}{assDed}
		\quad
		\rnamedRule{var$\ded$}{\concatCtx{\ctx}{\x\ofT\A}\dedT J}{\ctx\dedT \Type{\A}\quad\ctx\dedT J\tb \text{ for any statement }\dedT J}{varDed}
		\]
		\[
		\rnamedRule{$\forall$type}{\ctx\dedT \univQuant{x}{\A}\termF\ofT\bool}{\concatCtx{\ctx}{\x\ofT\A}\dedT \termF\ofT\bool}{forallType}\tb
		\rnamedRule{$\forall$I}{\ctx\dedT \univQuant{x}{\A}\termF}{\concatCtx{\ctx}{\x\ofT\A}\dedT \termF}{forallI}\tb
		\rnamedRule{$\forall$E}{\ctx\dedT \subst{\termF}{x}{\tm}}{\ctx\dedT \univQuant{x}{\A}\termF \quad \ctx\dedT \tm\ofT\A}{forallE}
		\]
		\[	
		\snamedRule{assTyping}{\ctx\dedT \termF\ofT\bool}{\Ctx{\ctx}\quad \ctxIn{\termF}{\ctx}}\tb
		\tb
		\rnamedRule{cong$\ofT\!$}{\ctx\dedT \termtp\ofT\A'}{\ctx\dedT \tm\termEquals{\A} \termtp \quad  \ctx\dedT \A\typeEquals \Ap \quad  \ctx\dedT \tm\ofT\A'}{congColon}
		\]
		\[
		\rnamedRule{$=\T$}{\judg \termF}{\judg \termF\termEqB \T}{eqTrue}\tb
		\rnamedRule{$\T=$}{\judg \termF\termEqB \T}{\judg \termF}{trueEq}\tb
		\snamedRule{propExt}{\ctx\dedT \termF\termEquals{\bool}\termC{G}}{\concatCtx{\ctx}{\namedass{ass}{\termF}}\dedT \termC{G}\quad \concatCtx{\ctx}{\namedass{ass_G}{\termC{G}}}\dedT \termF}\]
		\[\snamedRule{extensionality}{\ctx\dedT f\termEquals{\A\to \typB}f'}{\concatCtx{\ctx}{\x\ofT\A}\dedT \termf\ \x\termEquals{B} \termfp\ \x\tb \judg \termf\ofT\A\to \typB\tb \judg \termfp\ofT\A\to \typB}
		\]
		\[\snamedRule{trans}{\ctx\dedT \s\termEquals{\A}\termC{u}}{\ctx\dedT \s\termEquals{\A}\tm\tb \ctx\dedT \tm\termEquals{\A}\termC{u}}\tb
		\rnamedRule{$\termEquals{}$cong}{\ctx\dedT (\s\termEquals{\A}\tm) \termEqB (\sp\termEquals{\A}\termtp)}{\ctx\dedT \s\termEquals{\A}\sp\tb \ctx\dedT \tm\termEquals{\A}\termtp}{termEqCong}\tb
		\]
		\[
		\rnamedRule{$\forall$cong}{\ctx\dedT\univQuant{x}{\A}\termF\termEquals{\bool}\univQuant{x}{A'}\termC{F'}}{\ctx\dedT \A\typeEquals \Ap\tb \concatCtx{\ctx}{\x\ofT\A}\dedT \termF\termEquals{\bool}\termC{F'}}{forallCong}\tb
		\rnamedRule{$\implC$cong}{\ctx\dedT \termF\Rightarrow \termC{G}\termEquals{\bool} \termC{F'}\Rightarrow \termC{G'}}{\ctx\dedT \termF\termEquals{\bool}\termC{F'}\tb \ctx\dedT \termC{G}\termEquals{\bool}\termC{G'}}{implCong}
		\]
		\[
		\rnamedRule{$\forall\implC$}{\ctx\dedT\univQuant{x}{\A}\termF\implC \univQuant{x}{\A}\termC{G}}{\concatCtx{\ctx}{\x\ofT\A}\dedT \termF\implC \termC{G}}{forallImpl}\tb
		\rnamedRule{$\implC$Funct}{\ctx\dedT (\termF\implC \termC{G})\implC \left(\termC{F'}\implC \termC{G'}\right)}{\ctx\dedT \termC{G}\implC \termC{G'}\tb \ctx\dedT \termC{F'}\implC \termF}{implFunctorial}
		\]
		\[
		\rnamedRule{$\ded$cong}{\ctx\dedT \termC{F'}}{\ctx\dedT \termF\termEquals{\bool} \termC{F'}\quad \ctx\dedT \termF}{dedCong}\tb
		\snamedRule{rewrite}{\ctx\dedT \subst{\termF}{x}{\termtp}}{\ctx\dedT \subst{\termF}{x}{\tm}\tb \ctx\dedT \tm\termEquals{\A}\termtp\tb \concatCtx{\ctx}{\x\ofT\A}\dedT \termF\ofT\bool}
		\]
	}
\end{lemma}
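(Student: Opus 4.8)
The plan is to derive each listed rule from the primitive HOL rules of Fig.~\ref{fig:holrulesAppendix}, ordering the derivations so that later rules may invoke earlier ones. Most rules fall into a handful of families that admit uniform treatment, and the overwhelming majority are simple forward derivations in which one builds the appropriate term or applies a primitive rule directly. I will concentrate on the families that need genuine induction or a fixed dependency order.

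The \emph{presupposition} (regularity) rules \ruleRef{ctxThy}, \ruleRef{tpCtx}, \ruleRef{typingTp} and \ruleRef{validTyping} state that each judgment entails the well-formedness of its constituents. These cannot be shown in isolation; instead I would prove them by a single simultaneous induction on the derivation of the premise, strengthening the statement so that the induction hypothesis covers all judgment forms at once --- e.g.\ a derivation of $\ctx\dedT\tm\ofT\A$ is shown to simultaneously yield $\ctx\dedT\Type{\A}$, $\dedT\Ctx{\ctx}$ and $\ded\Thy{\thy}$. The base cases are the lookup rules, and the step cases inspect the last rule applied (for instance, \ruleRef{appl} gives $\ctx\dedT\Type{\A\to\typB}$ by the induction hypothesis on the function premise, whence $\ctx\dedT\Type{\typB}$ by inversion of \ruleRef{arrow}).

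Next, the type-equality rules \ruleRef{tpEqRefl}, \ruleRef{tpEqSym} and \ruleRef{tpEqTrans}: since HOL type equality is generated solely by the structural congruences \ruleRef{congB}, \ruleRef{congTo} (and base reflexivity), reflexivity is immediate while symmetry and transitivity follow by an easy induction on the type-equality derivation. The simplified lookups \ruleRef{constS}, \ruleRef{varS} are then instances of the primitive \ruleRef{const}, \ruleRef{var} with the type-equality premise discharged by \ruleRef{tpEqRefl}. The quantifier and propositional rules --- \ruleRef{forallType}, \ruleRef{forallI}, \ruleRef{forallE}, \ruleRef{forallCong}, \ruleRef{implCong}, \ruleRef{forallImpl}, \ruleRef{implFunctorial} --- are obtained by unfolding HOL's definitions of the connectives from equality (recall $\univQuant{x}{\A}\termF$ abbreviates $\lambdaFun{x}{\A}\termF\termEquals{\piType{x}{\A}\bool}\lambdaFun{x}{\A}\T$) and then applying the equality and implication rules: \ruleRef{forallI} via \ruleRef{eqTrue} and \ruleRef{congLam}, and \ruleRef{forallE} via \ruleRef{congAppl} and \ruleRef{beta}. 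The extensionality-style rules \ruleRef{propExt} and \ruleRef{extensionality} come from \ruleRef{boolExt} and \ruleRef{eta} respectively, while \ruleRef{eqTrue}/\ruleRef{trueEq} are just the two directions linking validity with equality to $\T$. The remaining congruence, rewriting and side-condition rules (\ruleRef{trans}, \ruleRef{termEqCong}, \ruleRef{dedCong}, \ruleRef{rewrite}, \ruleRef{rewriteTyping}, \ruleRef{congColon}, \ruleRef{eqTyping}, \ruleRef{implTypingL}, \ruleRef{implTypingR}, \ruleRef{applType}) are routine chains of \ruleRef{congAppl}, \ruleRef{congDed}, inversion of the relevant formation rule, and symmetry/transitivity of equality.

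The step I expect to require the most care is the simultaneous induction for the regularity rules, because it is the only part that cannot be carried out rule-by-rule: the induction hypothesis must be strengthened to all judgment forms jointly, and the cases for \ruleRef{appl}, \ruleRef{congLam} and the lookup rules must be threaded so as not to become circular with the uniqueness-of-types rule \ruleRef{typesUnique}. The latter I would establish by its own induction on the typing derivation (straightforward in HOL, where type equality is essentially syntactic), after which \ruleRef{typesUnique} and the inversion rule \ruleRef{typingWellformed} follow, and the whole lemma assembles in the dependency order just described.
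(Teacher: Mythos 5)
Your plan is sound, but note that the paper itself does not prove this lemma at all: it simply remarks that Lemma~\ref{meta-thm:1} ``is already proven for the version of HOL in the paper that originally introduced DHOL'' \cite{RRB:dhol:23} and imports the result wholesale. So there is no in-paper argument to compare against; what you have written is essentially a reconstruction of the proof that the cited work carries out, and it is the standard one. Your identification of the genuinely non-local parts is right: the regularity rules \ruleRef{ctxThy}, \ruleRef{tpCtx}, \ruleRef{typingTp}, \ruleRef{validTyping} need a single mutual induction over all judgment forms, the type-equality rules need a structural induction (trivial in HOL since types are simple and equality is generated by \ruleRef{congBase}, \ruleRef{congB}, \ruleRef{congTo}), and the quantifier rules fall out of unfolding $\univQuant{x}{\A}\termF := \lambdaFun{x}{\A}\termF\termEquals{\piType{x}{\A}\bool}\lambdaFun{x}{\A}\T$. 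Two places where your ``routine'' bucket conceals a further induction are worth flagging: the weakening rules \ruleRef{assDed} and \ruleRef{varDed} quantify over an arbitrary judgment $J$ and therefore also require induction on the derivation of $J$ (they cannot be obtained by a forward chain of primitive rules), and \ruleRef{rewrite} needs an induction on the structure of $\termF$ threading the congruence rules through every term former. Also, a small naming slip: the direction you need for \ruleRef{forallI} is \ruleRef{trueEq} (from $\termF$ to $\termF\termEqB\T$), not \ruleRef{eqTrue}. None of this affects the viability of the argument.
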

This Lemma~\ref{meta-thm:1} is already proven for the version of \hol in the paper\cite{RRB:dhol:23} that originally introduced DHOL. 
Definitions for the existential quantifier and the connectives are also given in \cite{RRB:dhol:23}.
Due to it's prominent appearance, we will repeat the definition of $\forall$:
$$
\univQuant{x}{\A}\termF := \lambdaFun{x}{\A}\termF\termEquals{\piType{x}{\A}\bool}\lambdaFun{x}{\A}\T
$$


Furthermore, using the definitions of the connectives and quantifiers we can prove the rules:
\[
\rnamedRule{$\land$}{\ctx\dedT \termF\land\termC{G}\ofT\bool}{\ctx\dedT \termF\ofT\bool\tb \ctx\dedT \termC{G}\ofT\bool}{andTp}\tb
\rnamedRule{$\land$Cong}{\ctx\dedT (\termF\land\termC{G})\termEqB (\termC{F'}\land\termC{G'})}{\ctx\dedT \termF\termEqB\termC{F'}\tb \ctx\dedT \termG\termEqB\termC{G'}}{andCong}\]
\[
\rnamedRule{$\land$I}{\ctx\dedT \termF\land \termC{G}}{\ctx\dedT \termF\tb \ctx\dedT \termC{G}}{andI}\tb
\rnamedRule{$\land$El}{\ctx\dedT \termF}{\ctx\dedT \termF\land \termC{G}}{andEl}\tb
\rnamedRule{$\land$Er}{\ctx\dedT \termC{G}}{\ctx\dedT \termF\land \termC{G}}{andEr}\tb
\]
and similar rules for the other boolean connectives.

\begin{rem}\label{rem:analogueDerivedRulesDHOLP}
	Observe that many of the rules derived for HOL in Lemma~\ref{meta-thm:1} still hold in \dhole. 
	In particular, the rules \ruleRef{ctxThy}, \ruleRef{tpCtx}, \ruleRef{typingTp} and \ruleRef{validTyping} can be proven by the same method. 
	The rules \ruleRef{assDed}, \ruleRef{varDed}, \ruleRef{forallType}, \ruleRef{forallE}, \ruleRef{forallI}, \ruleRef{assTyping}, \ruleRef{eqTrue}, \ruleRef{trueEq}, \ruleRef{propExt}, \ruleRef{extensionality}, \ruleRef{forallCong}, \ruleRef{forallImpl}, \ruleRef{implFunctorial}, \ruleRef{dedCong}, \ruleRef{rewrite} and the introduction and elimination rules for the (dependent) conjunction can be derived in \dhole with the same proofs.
	Also the rules \ruleRef{tpEqRefl} and \ruleRef{tpEqSym} can be proven easily in \dhole by induction on the type equality rules. 
\end{rem}

\subsection{\dhol rules}
{\small
	Theories and contexts:
	\[
	\rnamedRule{thyEmpty}{\ded\Thy{\emptyThy}}{}{thyEmpty'}\tb
	\snamedRule{thyType'}{\ded\Thy{\concatThy{\theorycolor{T}}{\a\ofT \piType{x_1}{\typeC{A_1}}\ldots\piType{x_n}{\typeC{A_n}}\type}}}
	{\dedT \Ctx{\varname{x_1}\ofT \typeC{A_1},\,\ldots,\varname{x_n}\ofT \typeC{A_n}}}
	\]
	\[
	\rnamedRule{thyConst}{\ded\Thy{\concatThy{\theorycolor{T}}{\constname{c}\ofT\A}}}{\dedT \Type{\A}}{thyConst'}\tb
	\rnamedRule{thyAxiom}{\ded\Thy{\concatThy{\theorycolor{T}}{\namedax{c}{\termF}}}}{\dedT \termF\ofT \bool}{thyAxiom'}
	\]
	\[
	\rnamedRule{ctxEmpty}{\dedT\Ctx{\emptyCtx}}{\ded\Thy{\theorycolor{T}}}{ctxEmpty'}\tb
	\rnamedRule{ctxVar}{\dedT\Ctx{\concatCtx{\ctx}{\x\ofT \A}}}{\ctx\dedT \Type{\A}}{ctxVar'}\tb
	\rnamedRule{ctxAssume}{\dedT\Ctx{\concatCtx{\contextcolor{\ctx}}{\namedass{x}{\termF}}}}{\contextcolor{\ctx}\dedT \termF\ofT\bool}{ctxAssume'}
	\]\\
	Well-formedness and equality of types:
	\[
	\snamedRule{type'}{\ctx \dedT\ \a\ \termC{t_1}\ \ldots\ \termC{t_n}\;\type}{\begin{array}{c}\a\ofT \piType{x_1}{\typeC{A_1}}\ldots\piType{x_n}{\typeC{A_n}}\type \text{ in }T\\
			\ctx \dedT \termC{t_1}\ofT \typeC{A_1} \ \;\ldots\ \; \ctx \dedT \termC{t_n}\ofT \subst{\typeC{A_n}}{x_1}{\termC{t_1}}\ldots\substOp{x_{n-1}}{\termC{t_{n-1}}}\end{array}}
	\]
	\[
	\rnamedRule{$\subtype{}{p}\type$}{\ctx\dedT \subtype{\A}{\p}\; \type}{\ctx\dedT \p\ofT \piType{x}{\A}\bool}{psubType}\tb
	\rnamedRule{bool}{\ctx\dedT \Type{\bool}}{\dedT\Ctx\ctx}{bool'}\tb
	\snamedRule{pi}{\Gamma\dedT \piType{x}{\A}\typB\ \type}{\ctx\dedT \Type{\A} \quad \concatCtx{\ctx}{\x\ofT \A\dedT \Type{\typB}}}\tb
	\]
	\[
	\rnamedRule{Q}{\ctx\dedT \Type{\quot{\A}{\r}}}{\ctx\dedT \Type{\A}\tb \ctx\dedT \r\ofT \piType{x_1}{\A}\piType{x_2}{\A}\bool\tb \ctx\dedT \isEqRel{\r}}{Qtype}
	\]
	\\Type equality:
	\[
	\snamedRule{congBase'}{\ctx\dedT\a\ \termC{s_1}\ \ldots\ \termC{s_n}\typeEquals \a\ \termC{t_1}\ \ldots \termC{t_n}}
	{\begin{array}{c}\a\ofT \piType{x_1}{\typeC{A_1}}\:\ldots\ \piType{x_n}{\typeC{A_n}}\type\text{ in }\theorycolor{T}\\
			\ctx \dedT \termC{s_1}\termEquals{\typeC{A_1}}\termC{t_1} \ \ldots\ \ctx \dedT \termC{s_n}\termEquals{\subst{\typeC{A_n}}{x_1}{\termC{t_1}}\;\ldots\;\substOp{x_{n-1}}{\termC{t_{n-1}}}}\termC{t_n}\end{array}}\]
	\[
		\rnamedRule{\STequiv}{\ctx\dedT \A\typeEquals\typB}{\ctx\ded \A\subtyping\typB\tb \ctx\ded \typB\subtyping\A}{subtE}
	\]
	\[
	\rnamedRule{$\typeEquals\bool$}{\ctx\dedT \Type{\bool\typeEquals\bool}}{\dedT\Ctx\ctx}{congBool}\tb
	\rnamedRule{cong$\Pi$}{\ctx\dedT\piType{x}{\A} \typB\typeEquals \piType{x}{\Ap}\Bp}{\ctx\dedT \A\typeEquals \Ap \quad  \ctx,\x\ofT \A\dedT \typB\typeEquals \Bp}{congPi}
	\]
	\\Typing: 
	\[
	\rnamedRule{const'}{\ctx\dedT \constname{c}\ofT \A}{c\ofT \Ap\thyIn{}{T}\tb \ctx\dedT \Ap\typeEquals \A}{const''}\tb
	\rnamedRule{var'}{\ctx\dedT \x\ofT \A}{\x\ofT \Ap\ctxIn{}{\ctx}\tb \ctx\dedT \Ap\typeEquals \A}{var''}
	\]
	\[
	\snamedRule{lambda'}{\ctx\dedT (\lambdaFun{x}{\A} \tm)\ofT  \piType{x}{\Ap}\typeC{ B}}{\concatCtx{\ctx}{\x\ofT \A}\dedT \tm\ofT \typB\tb \ctx\dedT \A\typeEquals\Ap}\tb
	\snamedRule{appl'}{\ctx\dedT \termf\,\tm\ofT\subst{\typB}{x}{\typeC{t}}}{\ctx\dedT \termf\ofT \piType{x}{\A} \typB \tb \ctx\dedT \tm\ofT \A}
	\]
	\[
	\rnamedRule{$\impl$type'}{\ctx\dedT \termF\Rightarrow \termC{G}\ofT\bool}{\ctx\dedT \termF\ofT\bool\tb \concatCtx{\ctx}{\namedass{x}{\termF}}\dedT \termC{G}\ofT\bool}{implType'}\tb
	\rnamedRule{$=$type}{\ctx\dedT \s\termEquals{\A}\typeC{t}\ofT\bool}{\ctx\dedT \s\ofT\A\tb \ctx\dedT \tm\ofT \A}{eqType'}
	\]
	\[\rnamedRule{$\subtype{}{p}$I}{\ctx\dedT \tm\ofT \subtype{\A}{\p}}{\ctx\dedT \tm\ofT \A \tb  \ctx\dedT \p\ \tm}{psubI}\tb
	\rnamedRule{$\subtype{}{\p}$E1}{\ctx\dedT \tm\ofT\A}{\ctx\dedT \tm\ofT \subtype{\A}{\p}}{psubE1}\tb
	\snamedRule{QI}{\ctx\dedT \tm\ofT \quot{\A}{\r}}{\ctx\dedT \tm\ofT\A\tb \ctx\dedT \isEqRel{\r}}
	\]
	\[
	\rnamedRule{QE}{\ctx\dedT \tm\substOp\x\s \ofT\typB\substOp\x\s}{
		\ctx\dedT \s\ofT\quot\A\r\quad
		\ctx,\,\x\ofT\A,\,\assertL\x\termEquals{\quot\A\r}\s\dedT \tm\ofT \typB\quad
		\ctx,\,\x\ofT\A,\,\xp\ofT\A,\,\assertL\x\termEquals{\quot\A\r}\s,\,\assertL\xp\termEquals{\quot\A\r}\s\dedT \tm\termEquals{\typB} \tm\substOp\x\xp
	}{quotE}\]
	\\
	Term equality; congruence, reflexivity, symmetry, $\beta$, $\eta$:
	\[
	\rnamedRule{cong$\lambda$'}{\ctx\dedT \lambdaFun{x}{\A} t\termEquals{\piType{x}{\A}\!\!\typB} \lambdaFun{x}{\Ap} t'}{\ctx\dedT \A\typeEquals \Ap \ \;\concatCtx{\Gamma}{\x\ofT \A} \dedT \tm\termEquals{\typB} \termtp}{congLam'}
	\quad	
	\snamedRule{congAppl'}{\ctx\dedT \termf\ \tm\termEquals{\typB} \termfp\ \termtp}{\ctx\dedT \tm\termEquals{\A} \termtp\quad \ctx\dedT \termf\termEquals{\piType{x}{\A} \typB} \termfp}
	\]
	\[
	\rnamedRule{refl}{\ctx\dedT \tm\termEquals{\A} \tm}{\ctx\dedT \tm\ofT \A}{refl'}\tb
	\rnamedRule{sym}{\ctx\dedT \s\termEquals{\A} \tm}{\ctx\dedT \tm \termEquals{\A}\s}{sym'}\tb
	\rnamedRule{beta}{\ctx\dedT (\lambdaFun{x}{\A} \s)\ \tm \termEquals{B} \subst{\s}{x}{\tm} }{\ctx\dedT (\lambdaFun{x}{\A} \s)\ \tm\ofT \typB}{beta'}\tb
	\snamedRule{etaPi}{\ctx\dedT \tm\termEquals{\piType{x}{\A} \!\!\typB} \lambdaFun{x}{\A}\tm\ \x}{\ctx\dedT \tm\ofT \piType{x}{\A} \typB}
	\]
	\[
	\rnamedRule{$\subtype{}{\p}$Eq}{\ctx\dedT \s\termEquals{\subtype{\A}{\p}} \tm}{\tb\ctx\dedT \s\termEquals{\A}\tm\tb \ctx\dedT \p\ \s}{psubEq}\tb
	\rnamedRule{Q$\termEquals{}$}{\ctx\dedT (\s\termEquals{\quot{\A}{\r}}\tm)\termEquals\bool(\r\ \s\ \tm)}	{\ctx\dedT \s\ofT\A\quad \ctx\dedT \tm\ofT\A \quad \ctx\dedT\r\ofT\A\to\A\to\bool\tb\isEqRel{\r}}{QEq}
	\]
	\\
	Rules for validity:
	\[
	\rnamedRule{axiom}{\ctx\dedT \termF}{\thyIn{\namedax{c}{\typeC{F}}}{\theorycolor{T}} \tb  \dedT\Ctx{\ctx}}{axiom'}\tb
	\rnamedRule{assume}{\ctx\dedT \termF}{\ctxIn{\namedass{x}{\termF}}{\ctx} \tb  \dedT\Ctx{\ctx}}{assume'}
	\]
	\[
	\rnamedRule{$\impl$I}{\ctx\dedT \termF\Rightarrow \termC{G}}{\ctx\dedT \termF\ofT\bool \tb \concatCtx{\ctx}{\namedass{x}{\termF}}\dedT \termC{G}}{implI'}\tb
	\rnamedRule{$\impl$E}{\ctx\dedT \termC{G}}{\ctx\dedT \termF\Rightarrow \termC{G}\tb \ctx\dedT \termF}{implE'}
	\]
	\[
	\rnamedRule{cong$\ded$}{\ctx\dedT \termF}{\ctx\dedT \termF\termEquals{\bool} \termC{F'}\tb \ctx\dedT \termC{F'}}{congDed'}\tb
	\rnamedRule{boolExt}{\ctx,\x\ofT \bool\dedT \p\ \x}{\ctx\dedT \p\ \T \tb \ctx\dedT \p\ \F}{boolExt'}
	\]
	\[
	\rnamedRule{$\subtype{}{\p}$E2}{\ctx\dedT \p\ \tm}{\ctx\dedT \tm\ofT \subtype{\A}{\p}}{psubE}
	\]
}

We also have the axiom from Definition~\ref{ax:quotcod}.

Finally, we modify the rule \ruleRef{nonempty} for the non-emptiness of types: we allow the existence of empty dependent types and only require that for each HOL type in the image of the translation there exists one non-empty DHOL type translated to it (rather than requiring all dependent types translated to it to be non-empty).
Observe that either restricting to the fragment HOL of DHOL or translating to it then yields the non-emptyness assumptions for HOL types.

\section{The translation from \dhole into HOL}\label{appendix:translation}
Before actually going into the soundness and completeness proofs, we repeat and enumerate the cases in the definition of the translation, so we can reference them in the following. 
\begin{definition}[Translation]
	We define a translation from DHOL to HOL syntax by induction on the Grammar.
	
	We use the notation $\overrightarrow{\x\ofT \A}, \overrightarrow{\piType{x}{\A}}, \overrightarrow{\A}$ and $\overrightarrow{\x}$ 
	to denote $\x\ofT \typeC{A_1}, \ldots, \varname{x_n}\ofT \typeC{A_n}$, $\piType{x_1}{\typeC{A_1}}\ldots \piType{x_n}{\typeC{A_n}}$, $\typeC{A_1}\to\ldots\to \typeC{A_n}$ and $\varname{x_1}\ \ldots\ \varname{x_n}$ 
	respectively.
	
	The cases for theories and contexts are:
	{\small
		\begin{align*}
			\PhiAppl{\emptyThy}:=&\emptyThy \plabel{PTemptyThy}\\
			\PhiAppl{\concatThy{\theorycolor{T}}{D}}:=&\concatThy{\PhiAppl{\theorycolor{T}}}{\PhiAppl{D}}&&\text{where}\\
			\PhiAppl{\;\a\ofT \overrightarrow{\piType{x}{\A}} \type} :=& \a\ofT \type,\\& \PredPhiName{a}\ofT \overrightarrow{\PhiAppl{\A}} \to \a\to \a\to \bool,\\
			& \namedax{a_{trans}}{}\forall \overrightarrow{\x\!:\!\PhiAppl{\A}}.~ \univQuant{u,v,w}{\a}\termEqT{\left(\a\ \overrightarrow{x}\right)}{\termC{u}}{\varname{v}}\impl \left(\termEqT{\left(\a\ \overrightarrow{x}\right)}{\varname{v}}{\varname{w}}\impl\termEqT{\left(\a\ \overrightarrow{\x}\right)}{\varname{u}}{\varname{w}}\right),\\
			& \namedax{a_{sym}}{}\forall\overrightarrow{\x\!:\!\PhiAppl{\A}}.~ \univQuant{u,v}{\a}\termEqT{\left(\a\ \overrightarrow{\x}\right)}{\varname{u}}{\varname{v}}\impl\termEqT{\left(\a\ \overrightarrow{\x}\right)}{\varname{v}}{\varname{u}},\\
			&\namedax{a_{PER}}{} \forall\overrightarrow{\x\!:\!\PhiAppl{\A}}.~ \univQuant{u,v}{\a}\PredPhi{\left(\a\ \overrightarrow{\x}\right)}{\varname{v}}\impl \termEqT{(\a\ \overrightarrow{\x})}{\varname{u}}{\varname{v}}\termEqB \varname{u}\termEquals{\a}\varname{v}
			\plabel{PTTpConstr}\\
			\PhiAppl{\constname{c}\ofT \A}:=&\constname{c}:\PhiAppl{\A},\quad \namedax{\typingAxName{c}}{\PredPhi{A}{\constname{c}}} \plabel{PTtermDecl}\\
			\PhiAppl{\namedax{ax}{\termF}}:=&\namedass{ax}{\PhiAppl{\termF}} \plabel{PTax}\\[0.5cm]
			\PhiAppl{\emptyCtx}:=&\emptyCtx \plabel{PTemptyCtx}\\
			\PhiAppl{\ctx,\x\ofT\A}:=&\PhiAppl{\Gamma},\;\x\ofT\PhiAppl{\A},\namedass{\typingAssName{x}}{\PredPhi{\A}{\x}} \plabel{PTctxVar}\\
			\PhiAppl{\Gamma,\namedass{ass}{\termF}}:=&\PhiAppl{\ctx},\;\namedass{ass}{\PhiAppl{\termF}} \plabel{PTctxAss}
		\end{align*}
	}
	The case of $\PhiAppl{\A}$ and $\termEqT{A}{\s}{\tm}$ for types $\A$ are:
	\begin{align*}
		\PhiAppl{(\a\ \termC{t_1}\ \ldots \ \termC{t_n})}&:=\a \plabel{PTTpAppl}\\
		\termEqT{(\a\ \termC{t_1}\ \ldots \ \termC{t_n})}{\s}{\tm}&:=\PredPhiName{a}\ \PhiAppl{\termC{t_1}}\ \ldots\ \PhiAppl{\termC{t_n}}\ \s\ \tm \plabel{PTTpPredAppl}\\
		\PhiAppl{\piType{x}{\A}\typB} &:= \PhiAppl{\A} \to \PhiAppl{\typB} \plabel{PTPitype}\\
		\termEqT{(\piType{x}{\A}\typB)}{\termf}{\termC{g}} &:= \univQuant{x,y}{\PhiAppl{\A}}
		\termEqT{\A}{\x}{\y}\impl \termEqT{\typB}{\left(\termf\ \x\right)}{\left(\termC{g}\ \y\right)} \plabel{PTPipred}\\
		\PhiAppl{\bool}&:=\bool \plabel{PTTpBool}\\
		\termEqT{\bool}{\s}{\tm}&:=\s\termEqB \tm \plabel{PTTpBoolPred}\\
			\ifbool{inAppendix}{\\	
				\PhiAppl{\subtype{\A}{\p}} &:= \PhiAppl{\A}\plabel{PTPStype}\\
				\termEqT{\left(\subtype{\A}{\p}\right)}{\s}{\tm} &:= \termEqT{\A}{\s}{\tm}\land \PhiAppl{\p}\ \s\land \PhiAppl{\p}\ \tm\plabel{PTPSpred}\\
				\PhiAppl{\quot{\A}{\r}} &:=
				\PhiAppl{\A}\plabel{PTQtype}\\
				\termEqT{\left(\quot{\A}{\r}\right)}{\s}{\tm} &:=\PhiAppl{\r}\ \s\ \tm\land \PredPhi{A}{\s}\land \PredPhi{A}{\tm}\plabel{PTQpred}}{}
		\end{align*}
	The cases for terms are:
	\begin{align*}
		\PhiAppl{\constname{c}} &:= \constname{c} \plabel{PTmConst}\\
		\PhiAppl{\x} &:= \x\plabel{PTmVar}\\
		\PhiAppl{\lambdaFun{x}{\A} \tm} &:= \lambdaFun{x}{\PhiAppl{\A}} \PhiAppl{\tm}\plabel{PTLam}\\
		\PhiAppl{\termf\ \tm} &:= \PhiAppl{\termf}\ \PhiAppl{\tm}\plabel{PTappl}\\
		\PhiAppl{\termF\Rightarrow \termC{G}}&:=\PhiAppl{\termF} \impl \PhiAppl{\termC{G}}\plabel{PTImpl}\\
		\PhiAppl{\s\termEquals{\A}\tm}&:=\termEqT{A}{\PhiAppl{\s}}{\PhiAppl{\tm}}\plabel{PTEq}\\
	\end{align*}
\end{definition}

\section{Completeness proof}\label{appendix:Complete}
To simplify the inductive arguments, we will actually prove the following slightly stronger version of the theorem:

\begin{theorem}[Completeness]\label{thm:complete}
	We have 
	\begin{align}
		\phantom{\ctx}&\ded \Thy{\theorycolor{T}}  &&\text{ implies } \phantom{\ctx}\vdash_{\phantom{T}}\Thy{\PhiAppl{\theorycolor{T}}}&& \label{correct:theoremhood}\\
		\phantom{\ctx}&\dedT \Ctx{\ctx} &&\text{ implies } \phantom{\ctx}\dedPT\Ctx{\ctxT}&& \label{correct:contexthood}\\
		\ctx&\dedT \Type{\A} &&\text{ implies } \ctxT\dedPT\Type{\PhiAppl{\A}} &&\text{ and }\ 
		\ctxT\dedPT\PredPhiName{A}\ofT \PhiAppl{\A}\to\PhiAppl{\A}\to \bool&& \label{correct:typehood}\\
		\ctx&\dedT \A \typeEquals \typB &&\text{ implies } \ctxT\dedPT\PhiAppl{\A} \typeEquals \PhiAppl{\typB} 
		&& \text{ and }\  \concatCtx{\ctxT}{\x \ofT\PhiAppl{\A}}\dedPT\PredPhi{A}{\x} \termEqB \PredPhi{B}{\x}\label{correct:typeEq}\\
		\ctx&\dedT \tm\ofT\A &&\text{ implies } \ctxT\dedPT\PhiAppl{\tm}\ofT\PhiAppl{\A} &&\text{ and }\  \ctxT\dedPT\PredPhi{A}{\PhiAppl{\tm}}\label{correct:typing}\\
		\ctx&\dedT \A \subtyping \typB&&\text{ implies } \ctxT\dedPT \PhiAppl{\A} \typeEquals \PhiAppl{\typB} &&\text{ and }\ \concatCtx{\ctxT}{\x ,\y\ofT\PhiAppl{\typB}}\dedPT \termEqT{A}{\x}{\y}\impl \termEqT{B}{\x}{\y}\label{correct:subtyping}\\
		\ctx&\dedT \termF &&\text{ implies } \ctxT\dedPT\PhiAppl{\termF} &&\label{correct:validity}
		\intertext{In case of term equality, we strengthen the claim to:}
		\ctx&\dedT \tm \termEquals{\A} \termtp &&\text{ implies } \ctxT\dedPT\termEqT{A}{\PhiAppl{\tm}}{\PhiAppl{\termtp}}\ &&\text{ and }\ \ctxT\dedPT \PhiAppl{\tm}\ofT\PhiAppl{\A}\quad \text{and}\ \ctxT\dedPT \PhiAppl{\termtp}\ofT\PhiAppl{\A}\label{correct:termEq}
	\end{align}
	
	Additionally, the typing relations $\PredPhiName{A}$ are symmetric and transitive on all well-formed types $A$:
	\begin{align}
		\ctx\dedT \Type{\A} &\Impl \ctxT\dedPT 
		\univQuant{x,y}{\PhiAppl{\A}}
		\termEqT{A}{\x}{\y}\impl \termEqT{A}{\y}{\x}
		\label{correct:relatSym}\\
		\ctx\dedT \Type{\A} &\Impl \ctxT\dedPT 
		\univQuant{\x ,\y,\z}{\PhiAppl{\A}}\termEqT{A}{\x}{\y}\impl \left(\termEqT{A}{\y}{\z}\impl \termEqT{A}{\x}{\z}\right)\label{correct:relatTrans}
	\end{align}
	
	Moreover, the following substitution lemma holds, i.e.,
	\begin{align}
		\concatCtx{\ctx}{\x \ofT\A}\dedT \tm\ofT\typB\Mand\ctx\ded \termC{u}\ofT\A &\Impl \ctxT\dedPT\PhiAppl{\subst{\tm}{x}{\termC{u}}}\termEquals{\PhiAppl{\typB}} \subst{\PhiAppl{\tm}}{x}{\PhiAppl{\termC{u}}}\label{correct:substTerm}\\
		\concatCtx{\ctx}{\x \ofT\A}\dedT \typB\ \type\Mand\ctx\dedT \termC{u}\ofT\typB &\Impl \ctxT\dedPT\PhiAppl{\subst{\typB}{x}{\termC{u}}}\typeEquals \subst{\PhiAppl{\typB}}{x}{\PhiAppl{\termC{u}}}\label{correct:substType}
	\end{align}
	In the following line, we assume that if $\tm=\lambdaFun{y}{\typC}\s$ for $\s$ of type $\typeC{D}$, then $\typB=\piType{y}{\typC}\typeC{D}$ (this is enough in practice and we cannot easily show more). In the presence of nontrivial subtyping we need this assumption that types "fit together exactly" to prove this part of the substitution lemma.
	\begin{align}
		\concatCtx{\ctx}{\x \ofT\A}\dedT \tm\ofT\typB &\Impl \concatCtx{\concatCtx{\ctxT}{\x ,\xp \ofT\PhiAppl{\A}}}{\namedass{xRx'}{\termEqT{A}{\x}{\xp }}} \dedPT \termEqT{B}{\PhiAppl{\tm}}{\subst{\PhiAppl{\tm}}{x}{\xp }}\label{correct:substRelatTerms}
	\end{align}
\end{theorem}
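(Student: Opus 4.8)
The plan is to prove all the invariants at once by a single induction on the DHOL derivation, extending the argument of \cite{RRB:dhol:23}. For each inference rule of \dhole{} (listed in the appendix above) I would assume that the translations of its premises satisfy the corresponding invariants in HOL and then derive the translation of the conclusion, using only the HOL rules of Fig.~\ref{fig:holrulesAppendix} and the admissible rules of Lemma~\ref{meta-thm:1}. The symmetry and transitivity claims \eqref{correct:relatSym}, \eqref{correct:relatTrans} and the substitution statements \eqref{correct:substTerm}, \eqref{correct:substType}, and \eqref{correct:substRelatTerms} must be threaded through the same induction, since the typing and equality invariants for the new type formers depend on them.

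Every rule already present in plain DHOL is handled exactly as in \cite{RRB:dhol:23}; the only extra obligation for those cases is to check that they also respect the new subtyping invariant \eqref{correct:subtyping} and the strengthened substitution claim. For the two new productions I would first establish the well-formedness half of \eqref{correct:typehood} together with \eqref{correct:relatSym} and \eqref{correct:relatTrans}: by the translation clauses the refinement PER is $\termEqT{A}{\s}{\tm}\land\PhiAppl{\p}\ \s\land\PhiAppl{\p}\ \tm$ and the quotient PER is $\PhiAppl{\r}\ \s\ \tm\land\PredPhi{A}{\s}\land\PredPhi{A}{\tm}$, so symmetry and transitivity reduce to those of $\PredPhiName{A}$ (from the IH) plus, for quotients, the symmetry and transitivity of $\PhiAppl{\r}$, which come from the side condition $\isEqRel{\r}$ via the validity invariant \eqref{correct:validity}. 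The formation, introduction, elimination, and equality rules for refinements (\ruleRef{psubType}, \ruleRef{psubI}, \ruleRef{psubE1}, \ruleRef{psubE}, \ruleRef{psubEq}) and their quotient counterparts then amount to unfolding these two PER definitions and discharging the resulting conjunctions with the translated premises.

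The subtyping invariant \eqref{correct:subtyping} I would not prove by case analysis on an expansion of Def.~\ref{def:st}, but derive uniformly. Whenever $\ctx\dedT\A\subtyping\typB$ holds, Lemma~\ref{lem:steq} supplies the DHOL validity $\ctx\dedT F$ with $F:=\univQuant{x}{\A}\univQuant{y}{\A}\x\termEquals{\A}\y\implC\x\termEquals{\typB}\y$; applying the validity invariant \eqref{correct:validity} to $F$ yields $\PhiAppl{F}$, which --- after using \eqref{correct:relatSym} and \eqref{correct:relatTrans} to discharge the domain guards $\PredPhi{A}{\x}$ and $\PredPhi{A}{\y}$ that are already implied by $\termEqT{A}{\x}{\y}$ --- is exactly the PER-implication half of \eqref{correct:subtyping}; the type equality $\PhiAppl{\A}\typeEquals\PhiAppl{\typB}$ follows from the typing invariant for $\ctx,\x\ofT\A\dedT\x\ofT\typB$ by uniqueness of HOL types (\ruleRef{typesUnique}). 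The anti-symmetry rule \ruleRef{subtE} is then immediate: the subtyping invariants for its two premises give mutually converse PER implications and equal erasures, whose diagonal instances and propositional extensionality yield the type-equality invariant \eqref{correct:typeEq}.

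The genuinely new difficulties are the quotient elimination rule \ruleRef{quotE} and the added axiom \eqref{quotAx}. The latter is validated outright: the computation of Ex.~\ref{exam:persQuotCod} shows that both sides translate to the same PER, so its instance of \eqref{correct:subtyping} holds trivially. For \ruleRef{quotE} I would translate the three premises and assemble the conclusion using the substitution lemma \eqref{correct:substTerm} to rewrite $\PhiAppl{\tm\substOp\x\s}$, discharging the hypothesis $\x\termEquals{\quot\A\r}\s$ by the diagonal quotient PER at $\PhiAppl{\s}$ coming from the premise $\ctx\dedT\s\ofT\quot\A\r$, and deriving the required typing predicate from the congruence premise, which fixes the equivalence class of $\s$; this is exactly where the extra strength of \ruleRef{quotE} over the naive rule $(\ast)$ of Sect.~\ref{sec:quot} is used. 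I expect the main obstacle to be the relational substitution lemma \eqref{correct:substRelatTerms}: with nontrivial subtyping a subterm's type is no longer pinned down up to equality by its syntax, so the inductive step for applications and $\lambda$-abstractions only goes through under the stated ``types fit together exactly'' hypothesis, and checking that every appeal to \eqref{correct:substRelatTerms} inside the typing and equality cases respects that restriction is the delicate heart of the argument.
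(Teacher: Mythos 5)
Your overall strategy coincides with the paper's: a mutual induction over \dhole{} derivations that extends the proof of \cite{RRB:dhol:23} with cases for the new productions and rules, unfolds the two new PER clauses for refinements and quotients, and validates the axiom \eqref{quotAx} by the PER computation of Ex.~\ref{exam:persQuotCod}. Two points where you diverge are worth comparing. First, the paper does \emph{not} thread the substitution claims \eqref{correct:substTerm}--\eqref{correct:substRelatTerms} and the symmetry/transitivity claims \eqref{correct:relatSym}, \eqref{correct:relatTrans} through the derivation induction: it proves them up front, by structural induction on terms and types, and only then runs the induction on derivations. That decomposition is cleaner, because these statements are really about the syntactic shape of the translation; your version would force you to recover the structure of $\tm$ from the derivation of $\tm\ofT\typB$, which is precisely where nontrivial subtyping makes syntax and types come apart (the paper instead isolates this difficulty in the ``types fit together exactly'' hypothesis of \eqref{correct:substRelatTerms}). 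Second, your derivation of the subtyping invariant \eqref{correct:subtyping} from Lemma~\ref{lem:steq} plus the validity invariant is genuinely different from the paper, which obtains \eqref{correct:subtyping} as a direct strengthening of the typing invariant for the expanded judgment $\ctx,\x\ofT\A\dedT\x\ofT\typB$, with \eqref{quotAx} as the only primitive rule needing a bespoke check. Your route is more uniform, but it has a well-foundedness subtlety you must make explicit: the derivation of $F$ produced by Lemma~\ref{lem:steq} is \emph{not} a subderivation of the one you are inducting on, so ``apply the validity invariant to $F$'' is not an instance of the induction hypothesis. It can be repaired by observing that the extra rule applications used to build that derivation (\ruleRef{lambda'}, \ruleRef{congAppl'}, \ruleRef{refl'}, \ruleRef{beta'}, implication and quantifier introduction) all have completeness cases that do not themselves depend on the subtyping invariant, so the invariants propagate forward through the constructed derivation; without that remark the argument is circular at rule \ruleRef{subtE}, whose premises are exactly where \eqref{correct:subtyping} is consumed. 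The remaining ingredients --- discharging the relativization guards via \eqref{correct:relatSym} and \eqref{correct:relatTrans}, obtaining $\PhiAppl{\A}\typeEquals\PhiAppl{\typB}$ from uniqueness of HOL types, and the treatment of \ruleRef{quotE} via the diagonal quotient PER at $\PhiAppl{\s}$ together with transitivity of $\PredPhiName{B}$ --- match the paper's proof.
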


Here Case~\ref{correct:typeEq} looks weaker than in the original statement, but is easily seen to be equivalent.
The equivalence proof uses induction on the shape of the types (reducing the claim to base types), propositional extensionality and the PER axioms. 
Furthermore, the statement that $\PredPhiName{\A}$ is a PER no longer appears in the statement for well-typedness since the fact that the PERs are actually PERs is proved separately.

\renewcommand{\subparagraph}[1]{\paragraph*{#1}}
\begin{proof}[Proof of Theorem~\ref{thm:complete}]
	Firstly, we will prove the substitution lemma by induction on the grammar, i.e. by induction on the shape of the terms and types.
	
	Afterwards, we will prove completeness of the translation w.r.t. all \dhole judgements by induction on the derivations. This means that we consider the inference rules of \dhole and prove that if completeness holds for the assumptions of a \dhole inference rule, then it also holds for the conclusion of the rule. 
	For the inductive steps for some typing rules, namely \ruleRef{eqType}, we also require the fact that for any (well-formed) type $\A$ in \dhole we have $\PredPhiName{A}\ofT\PhiAppl{\A}\to\PhiAppl{\A}\to\bool$. 
	This follows directly from how the $\PredPhiName{A}$ are generated/defined in the translation.
	
	\subsection{Substitution lemma and symmetry and transitivity of the typing relations}\label{subsec:substLemma}
	Since the translation of types commutes with the type productions of the grammar (\ref{correct:substType}) is obvious.
	
	We show (\ref{correct:substTerm}) by induction on the grammar of \dhole. 
	If $\x$ is not a free variable in $\tm$, then $\PhiAppl{\subst{\tm}{x}{\termC{u}}}=\PhiAppl{\tm}=\subst{\PhiAppl{\tm}}{x}{\PhiAppl{\termC{u}}}$ and the claim (\ref{correct:substTerm}) follows by rule \ruleRef{refl}. 
	So assume that $\x$ is a free variable of $\tm$.
	
	If $\tm$ is a variable, then by assumption (that $\x$ is a free variable in $\tm$) we have $\tm=\x$ and thus $\PhiAppl{\subst{\tm}{x}{\termC{u}}}=\PhiAppl{\termC{u}}=\subst{\PhiAppl{\tm}}{x}{\PhiAppl{\termC{u}}}$ so the claim follows by rule \ruleRef{refl}. 
	
	If $\tm$ is a $\lambda$-term $\lambdaFun{y}{\A} \s$, then by induction hypothesis we yield $\concatCtx{\ctxT}{\y\ofT \PhiAppl{\A}}\dedPT \PhiAppl{\subst{\s}{x}{\termC{u}}}\termEquals{\PhiAppl{\A}}\subst{\PhiAppl{\s}}{x}{\PhiAppl{\termC{u}}}$, where $\A$ is the type of $\s$. Using rule \ruleRef{congLam}, the claim of $\ctxT\dedPT \PhiAppl{\subst{\lambdaFun{y}{\A} \s}{x}{\termC{u}}}\termEquals{\PhiAppl{\typB}}\subst{\PhiAppl{\lambdaFun{y}{\A} \s}}{x}{\PhiAppl{\termC{u}}}$ follows.
	
	If $\tm$ is a function application $\termf\ \s$, the induction hypothesis yields $\ctxT\dedPT \PhiAppl{\subst{\s}{x}{\termC{u}}}\termEquals{\PhiAppl{\A}}\subst{\PhiAppl{\s}}{x}{\PhiAppl{\termC{u}}}$ and $\ctxT\dedPT \PhiAppl{\subst{\termf}{x}{\termC{u}}}\termEquals{\PhiAppl{\A}\to\PhiAppl{\typB}}\subst{\PhiAppl{\termf}}{x}{\PhiAppl{\termC{u}}}$, where $\A$ is the type of $\s$. From rule \ruleRef{congAppl}, the claim of $\ctxT\dedPT \PhiAppl{\subst{\left(\termf\ \s\right)}{x}{\termC{u}}}\termEquals{\PhiAppl{\typB}}\subst{\PhiAppl{\termf\ \s}}{x}{\PhiAppl{\termC{u}}}$ follows.
	
	If $\tm$ is an equality $\s\termEquals{\A}\sp$, the induction hypothesis yields $\ctxT\dedPT \PhiAppl{\subst{\s}{x}{\termC{u}}}\termEquals{\PhiAppl{\A}}\subst{\PhiAppl{\s}}{x}{\PhiAppl{\termC{u}}}$ and $\ctxT\dedPT \PhiAppl{\subst{\sp}{x}{\termC{u}}}\termEquals{\PhiAppl{\A}}\subst{\PhiAppl{\sp}}{x}{\PhiAppl{\termC{u}}}$, where $\A$ is the type of $\s$ and $\sp$. 
	The claim of $\ctxT\dedPT \PhiAppl{\subst{\left(\s\termEquals{\A}\sp\right)}{x}{\termC{u}}}\termEqB\subst{\left(\PhiAppl{\s\termEquals{\A}\termC{\sp}}\right)}{x}{\PhiAppl{\termC{u}}}$ follows using rule \ruleRef{termEqCong}.
	
	
	Before we can show (\ref{correct:substRelatTerms}), we first need to prove the symmetry and transitivity of the typing relations:
	We can prove both by induction on the type $\A$. The base cases are base types and equality, for which the claims follow by construction.
	
	We show (\ref{correct:substRelatTerms}) by induction on the grammar:
		Without loss of generality we may assume that $\typB=:\subtype{\Bp}{\p}$ for $\Bp$ either a quotient--, a base-- or a $\Pi$-type. This is due to the fact that quotinet--, base-- and $\Pi$-types $\Bp$ can be written as $\subtype{\Bp}{\lambdaFun{x}{\Bp}\T}$ and types of the form $\subtype{\subtype{\typeC{B''}}{\p}}{\q}$ can be rewritten as $\subtype{\typeC{B''}}{\lambdaFun{x}{\typeC{B''}}\p\ \x \land \q\ \x}$.
	
		If $\tm$ is a constant or variable then $\subst{\PhiAppl{\tm}}{x}{\xp }=\PhiAppl{\tm}$ and by case (\ref{PTctxVar}) resp. by case (\ref{PTax}) in the definition of the translation, we have $\termEqT{A}{\PhiAppl{\tm}}{\PhiAppl{\tm}}$. So the claim holds.
		
		If $\tm$ is a $\lambda$-term $\lambdaFun{y}{\typC}\s$ and $\Bp=\piType{z}{\typC}\typeC{D}$, then by induction hypothesis we have \[\concatCtx{\concatCtx{\ctxT}{\x ,\xp \ofT\PhiAppl{\A}}}{\namedass{xRx'}{\termEqT{A}{\x}{\xp }}} \dedPT \termEqT{D}{\PhiAppl{\s}}{\subst{\PhiAppl{\s}}{x}{\xp }}.\]
		By the rules \ruleRef{forallI}, \ruleRef{implI}, we yield \[\ctxT\dedPT \univQuant{x,y}{\PhiAppl{\A}}\termEqT{A}{\x}{\y}\impl\termEqT{D}{\PhiAppl{\s}}{\subst{\PhiAppl{\s}}{x}{\xp }}. \]
		By definition (\ref{PTPipred}) this is exactly \[
		\concatCtx{\concatCtx{\ctxT}{\x ,\xp \ofT\PhiAppl{\A}}}{\namedass{xRx'}{\termEqT{A}{\x}{\xp }}} \dedPT \termEqT{B'}{\PhiAppl{\tm}}{\subst{\PhiAppl{\tm}}{x}{\xp }}.\]
		Since $\tm$ is a $\lambda$-term, by assumption we have that $\typB \typeEquals \Bp=\subtype{\typB}{\lambdaFun{z}{\typB}\T}$, so the claim follows trivially.
		
		If $\tm$ is a function application $\termf\ \s$ with $\termf$ of type $\piType{z}{\typC}\typeC{D}$ and $\s$ of type $\typC$, then by assumption $\typB=\typeC{D}\typeEquals \Bp=\subtype{\typB}{\lambdaFun{z}{\typB}\T}$, so it suffices to prove that \[ \concatCtx{\concatCtx{\ctxT}{\x ,\xp \ofT\PhiAppl{\A}}}{\namedass{xRx'}{\termEqT{A}{\x}{\xp }}} \dedPT \termEqT{D}{\PhiAppl{\termf\ \s}}{\subst{\PhiAppl{\termf\ \s}}{\x}{\xp }}.\]
		By induction hypothesis and (\ref{correct:substTerm}) we then have:
		\[
		\concatCtx{\concatCtx{\ctxT}{\x ,\xp \ofT\PhiAppl{\A}}}{\namedass{xRx'}{\termEqT{A}{\x}{\xp }}} \dedPT \termEqT{\left(\piType{z}{\typC}\typeC{D}\right)}{\PhiAppl{\termf}}{\subst{\PhiAppl{\termf}}{\x}{\xp }}
		\] and
		\begin{equation}
			\concatCtx{\concatCtx{\ctxT}{\x ,\xp \ofT\PhiAppl{\A}}}{\namedass{xRx'}{\termEqT{A}{\x}{\xp }}} \dedPT \termEqT{C}{\PhiAppl{\s}}{\subst{\PhiAppl{\s}}{\x}{\xp }}\label{substLemAppl2}
			.\end{equation}
		By definition (\ref{PTPipred}), we can unpack the former to:
		\begin{equation}
			\concatCtx{\concatCtx{\ctxT}{\x ,\xp \ofT\PhiAppl{\A}}}{\namedass{xRx'}{\termEqT{A}{\x}{\xp }}} \dedPT
			\univQuant{z,z'}{\PhiAppl{\typC}}\termEqT{C}{\z}{\zp}\impl \termEqT{\left(\piType{z}{\typC}\typeC{D}\right)}{\PhiAppl{\termf}\ \z}{\subst{\PhiAppl{\termf}}{\x}{\xp }\ \subst{\zp}{x}{\xp }}\label{substLemAppl1}
		\end{equation}
		Using the rules \ruleRef{forallE} and \ruleRef{implE} (using (\ref{substLemAppl1})) to plug in $\PhiAppl{\s}$ resp. $\subst{\PhiAppl{\s}}{x}{\xp }$ for $\termC{z}, \termC{z'}$ in (\ref{substLemAppl1}), we yield:
		\[
		\concatCtx{\concatCtx{\ctxT}{\x ,\xp \ofT\PhiAppl{\A}}}{\namedass{xRx'}{\termEqT{A}{\x}{\xp }}} \dedPT \termEqT{\left(\piType{z}{\typC}\typeC{D}\right)}{\PhiAppl{\termf}\ \PhiAppl{\s}}{\subst{\PhiAppl{\termf}}{\x}{\xp }\ \subst{\PhiAppl{\s}}{\x}{\xp }}
		\] which is exactly the desired result.
		
		By definition (\ref{PTTpBoolPred}), the typing relation for type $\bool$ is ordinary equality, so the cases of $\tm$ being an implication or Boolean equality are in fact special cases of (\ref{correct:substTerm}), which is already proven above. 
		It remains to consider the case of $\tm$ being an equality $\s\termEquals{C}\sp$ for $\typC\not\typeEquals\bool$. 
		In this case, the induction hypothesis implies that 
		\begin{equation}
			\concatCtx{\concatCtx{\ctxT}{\x ,\xp \ofT\PhiAppl{\A}}}{\namedass{xRx'}{\termEqT{A}{\x}{\xp }}} \dedPT \termEqT{C}{\PhiAppl{\s}}{\subst{\PhiAppl{\s}}{x}{\xp }}\label{substLemEq1}
		\end{equation} and \begin{equation}
			\concatCtx{\concatCtx{\ctxT}{\x ,\xp \ofT\PhiAppl{\A}}}{\namedass{xRx'}{\termEqT{A}{\x}{\xp }}} \dedPT \termEqT{\typC}{\PhiAppl{\sp}}{\subst{\PhiAppl{\sp}}{\x}{\xp }}\label{substLemEq2}
		\end{equation}
		We need to prove
		\[
		\concatCtx{\concatCtx{\ctxT}{\x ,\xp \ofT\PhiAppl{\A}}}{\namedass{xRx'}{\termEqT{A}{\x}{\xp }}} \dedPT
		\termEqT{C}{\PhiAppl{\s}}{\PhiAppl{s'}} \termEqB \termEqT{C}{\subst{\PhiAppl{\s}}{x}{\xp }}{\subst{\PhiAppl{\sp}}{x}{\xp }}.\]
		
		If we can show  
		\[\concatCtx{\concatCtx{\concatCtx{\ctxT}{\x ,\xp \ofT\PhiAppl{\A}}}{\namedass{xRx'}{\termEqT{A}{\x}{\xp }}}}{\namedass{sRs'}{\termEqT{C}{\PhiAppl{\s}}{\PhiAppl{s'}}}} \dedPT \termEqT{C}{\subst{\PhiAppl{\s}}{x}{\xp }}{\subst{\PhiAppl{\sp}}{x}{\xp }}\] and similarly also 
		\[\concatCtx{\concatCtx{\concatCtx{\ctxT}{\x ,\xp \ofT\PhiAppl{\A}}}{\namedass{xRx'}{\termEqT{A}{\x}{\xp }}}}{\namedass{subrel}{\termEqT{C}{\subst{\PhiAppl{\s}}{x}{\xp }}{\subst{\PhiAppl{\sp}}{x}{\xp }}}} \dedPT \termEqT{C}{\PhiAppl{\s}}{\PhiAppl{s'}},\]
		then the claim follows by rule \ruleRef{propExt}.
		
		Both follows from the transitivity (\ref{correct:relatTrans}) of the typing relation $\PredPhiName{\typC}$. 

	\subsection{Proof of remaining soundness theorem by induction on DHOL derivations}
	\subsubsection{Well-formedness of theories}
	Well-formedness of \dhol theories can be shown using the rules 
	\ruleRef{thyEmpty'}, \ruleRef{thyType'}, \ruleRef{thyConst'} and \ruleRef{thyAxiom'}:
	\subparagraph{\ruleRef{thyEmpty'}:}
	\begin{align}
		\NDLine{}{\Thy{\emptyThy}}{\ruleRef{thyEmpty}}\label{thyEmpty1}\\
		\NDLineH{}{\Thy{\PhiAppl{\emptyThy}}}{\ruleRef{thyEmpty}}\nonumber
	\end{align}
	
	\subparagraph{\ruleRef{thyType'}:}
	\begin{align}
		\NDLineT{}{\Ctx{\varname{x_1}\ofT\typeC{A_1},\ldots,\varname{x_n}\ofT\typeC{A_n}}}{by assumption}\label{thyType1}\\
		\NDLinePT{}{\Ctx{\varname{x_1}\ofT\PhiAppl{\typeC{A_1}},\PredPhi{A_1}{\varname{x_1}},\ldots,\varname{x_n}\ofT\PhiAppl{\typeC{A_n}},\typingAss{A_n}{\varname{x_n}}}}{\IH,(\ref{thyType1})}\label{thyType1a}\\
		\NDLineH{}{\Thy{\PhiAppl{\theorycolor{T}}}}{\ruleRef{ctxThy},(\ref{thyType1a})}\label{thyType2}\\
		\NDLineH{}{\Thy{\concatThy{\PhiAppl{\theorycolor{T}}}{\a\ofT\type}}}{\ruleRef{thyType},(\ref{thyType2})}\label{thyType3}\\
		\NDLineH{}{
			\Thy{\PhiAppl{\concatThy{T}{\a\ofT\piType{x_1}{A_1}\ldots\piType{x_n}{A_n}\type}}}}{\ref{PTTpConstr},(\ref{thyType3})}\nonumber
	\end{align}
	
	\subparagraph{\ruleRef{thyConst'}:}
	\begin{align}
		\NDLineT{}{\Type{\A}}{by assumption}\label{thyConst2}\\
		\NDLinePT{}{\PhiAppl{\A}\:\type}{\IH,(\ref{thyConst2})}\label{thyConst4}\\
		\NDLineH{}{\Thy{\concatThy{\PhiAppl{T}}{\constname{c}\ofT\PhiAppl{\A}}}}{\ruleRef{thyConst},(\ref{thyConst4})}\label{thyConst5}\\
		\NDLineH{}{\Thy{\PhiAppl{\concatThy{T}{c\ofT\A}}}}{\ref{PTtermDecl},(\ref{thyConst5})}
	\end{align}
	
	\subparagraph{\ruleRef{thyAxiom'}:}
	\begin{align}
		\NDLineT{}{\termF\ofT\bool}{by assumption}\label{thyAxiom1}\\
		\NDLinePT{}{\PhiAppl{\termF}\ofT\bool}{\IH,(\ref{thyAxiom1})}\label{thyAxiom2}\\
		\NDLineH{}{\Thy{\concatThy{\thyT}{\namedax{ax}{\PhiAppl{\termF}}}}}{\ruleRef{thyAxiom},(\ref{thyAxiom2})}\label{thyAxiom3}\\
		\NDLineH{}{\Thy{\PhiAppl{\concatThy{\thyT}{\namedax{ax}{\termF}}}}}{\ref{PTax},(\ref{thyAxiom3})}
	\end{align}
	
	\subsubsection{Well-formedness of contexts}
	Well-formedness of contexts can be concluded using the rules \ruleRef{ctxEmpty'}, \ruleRef{ctxVar'} and \ruleRef{ctxAssume'}:
	\subparagraph{\ruleRef{ctxEmpty'}:}
	\begin{align}
		\NDLine{}{\Thy{\thy}}{by assumption}\label{ctxEmpty1}\\
		\NDLineH{}{\Thy{\thyT}}{\IH,(\ref{ctxEmpty1})}\label{ctxEmpty2}\\
		\NDLinePT{}{\Ctx{\emptyCtx}}{\ruleRef{ctxEmpty},(\ref{ctxEmpty2})}\label{ctxEmpty3}\\
		\NDLinePT{}{\Ctx{\PhiAppl{\emptyCtx}}}{\ref{PTemptyCtx},(\ref{ctxEmpty3})}
	\end{align}
	
	\subparagraph{\ruleRef{ctxVar'}:}
	\begin{align}
		\NDLineTG{\Type{\A}}{\byAss}\label{ctxVar1}\\
		\NDLinePTG{\Type{\PhiAppl{\A}}}{\IH,(\ref{ctxVar1})}\label{ctxVar2}
		\\\NDLinePTG{\PredPhiName{A}\ofT\PhiAppl{\A}\to\PhiAppl{\A}\to \bool}{\IH,(\ref{ctxVar1})}\label{ctxVarPATp}
		\\\NDLinePT{}{\Ctx{\concatCtx{\ctxT}{\x \ofT\PhiAppl{\A}}}}{\ruleRef{ctxVar},(\ref{ctxVar2})}\label{ctxVar3}
		\\\NDLinePT{\concatCtx{\ctxT}{\x \ofT\PhiAppl{\A}}}{\PredPhiName{A}\ofT\PhiAppl{\A}\to \PhiAppl{\A}\to \bool}{\ruleRef{varDed},(\ref{ctxVar2}),(\ref{ctxVarPATp})}\label{ctxVar3a}
		\\\NDLinePT{\concatCtx{\ctxT}{\x \ofT\PhiAppl{\A}}}{\PredPhi{A}{\x}\ofT\bool}{\ruleRef{appl},(\ref{ctxVar3a}),\ruleRef{varS}}\label{ctxVar4}
		\\\NDLinePT{}{\Ctx{\concatCtx{\concatCtx{\ctxT}{\x \ofT\PhiAppl{\A}}}{\PredPhi{A}{\x}}}}{\ruleRef{ctxAssume},(\ref{ctxVar4})}\label{ctxVar5}
		\\\NDLinePT{}{\Ctx{\PhiAppl{\concatCtx{\ctx}{ \x \ofT\A}}}}{\ref{PTctxVar},(\ref{ctxVar5})}
	\end{align}
	
	\subparagraph{\ruleRef{ctxAssume'}:}
	\begin{align}
		\NDLineTG{\termF\ofT\bool}{by assumption}\label{ctxAssume1}
		\\\NDLinePTG{\PhiAppl{\termF}\ofT\bool}{\IH,(\ref{ctxAssume1})}\label{ctxAssume2}
		\\\NDLinePT{}{\Ctx{\concatCtx{\ctxT}{\namedass{ass}{\PhiAppl{\termF}}}}}{\ruleRef{ctxAssume},(\ref{ctxAssume2})}\label{ctxAssume3}
		\\\NDLinePT{}{\Ctx{\PhiAppl{\concatCtx{\ctx}{\namedass{ass}{\termF}}}}}{\ref{PTctxAss},(\ref{ctxAssume3})}
	\end{align}
	
	\subsubsection{Well-formedness of types}
	Well-formedness of types can be shown in DHOL using the rules
	\ruleRef{type'}, \ruleRef{bool'}, \ruleRef{pi}, \ruleRef{Qtype} and \ruleRef{psubType}:
	\subparagraph{\ruleRef{type'}:}
	\begin{align}
		&\thyIn{\a\ofT\piType{x_1}{\typeC{A_1}}\ldots\piType{x_n}{\typeC{A_n}}\type}{T}&&\text{\byAss}\label{type0}\\
		\NDLineTG{\termC{t_1}\ofT\typeC{A_1}}{\byAss}\label{typeAfirst}\\
		\vdots\nonumber\\
		\NDLineTG{\termC{t_n}\ofT\typeC{A_n}\substOp{x_1}{\termC{t_1}}\ldots\substOp{x_n}{\termC{t_n}}}{\byAss}\label{typeAlast}\\
		\NDLinePTG{\PhiAppl{\termC{t_1}}\ofT\PhiAppl{\typeC{A_1}}}{\IH,(\ref{typeAfirst})}\label{typeIHfirst}\\
		\vdots\nonumber\\
		\NDLinePTG{\PhiAppl{\termC{t_n}}\ofT\PhiAppl{\typeC{A_n}}}{\IH,(\ref{typeAlast})}\label{typeIHlast}\\
		&\thyIn{\a\ofT\type}{\PhiAppl{T}}&&\text{\ref{PTTpConstr},(\ref{type0})}\label{typeT1}\\
		&\thyIn{\PredPhiName{a}\ofT\PhiAppl{\typeC{A_1}}\to\ldots\PhiAppl{\typeC{A_n}}\to\a\to \a\to\bool}{\PhiAppl{T}}&&\text{\ref{PTTpConstr},(\ref{type0})}\label{typeT2}\\
		\NDLinePTG{\PredPhiName{a}\ofT\PhiAppl{\typeC{A_1}}\to\ldots\PhiAppl{\typeC{A_n}}\to\a\to \a\to\bool}{\ruleRef{constS},(\ref{typeT2})}\label{typeT3}\\
		\NDLinePTG{\PredPhiName{a}\ \PhiAppl{\termC{t_1}}\ofT\PhiAppl{\typeC{A_2}}\to\ldots\PhiAppl{\typeC{A_n}}\to\a\to \a\to\bool}{\ruleRef{appl},(\ref{typeT3}),(\ref{typeIHfirst})}\label{typeT3first}\\
		\vdots\nonumber\\
		\NDLinePTG{\PredPhiName{a}\ \PhiAppl{\termC{t_1}}\ \ldots\ \PhiAppl{\termC{t_n}}\ofT\a\to \a\to\bool}{\ruleRef{appl},previous line,(\ref{typeIHlast})}\label{typeT3last}\\
		\NDLinePT{}{\Ctx{\ctxT}}{\ruleRef{tpCtx},\ruleRef{typingTp},(\ref{typeT3})}\label{typePGCtx}\\
		\NDLinePTG{\Type{\a}}{\ruleRef{type},(\ref{typeT1}),(\ref{typePGCtx})}\nonumber\\
		\NDLinePTG{\PredPhiName{(\a\ \termC{t_1}\ \ldots\ \ \termC{t_n})}\ofT\PhiAppl{\a}\to\PhiAppl{\a}\to\bool}{\ref{PTappl},(\ref{typeT3last})}\nonumber
	\end{align}
	
	\subparagraph{\ruleRef{bool'}:}
	\begin{align}
		\NDLineT{}{\Ctx{\ctx}}{\byAss}\label{bool'A}\\
		\NDLinePT{}{\Ctx{\ctxT}}{\IH,(\ref{bool'A})}\label{bool'IH}\\
		\NDLinePT{}{\Type{\bool}}{\ruleRef{bool'},(\ref{bool'IH})}\label{bool'pre}\\
		\NDLinePT{}{\Type{\PhiAppl{\bool}}}{\ref{PTTpBool},(\ref{bool'pre})}\nonumber
	\end{align}
	$\pbool$ is just a notation of $\termEqB$ which is of type $\bool\to\bool\to\bool$ in \hol, as desired.
	
	\subparagraph{\ruleRef{pi}:}
	\begin{align}
		\NDLineTG{\Type{\A}}{\byAss}\label{pi1}\\
		\NDLineT{\concatCtx{\ctx}{\x \ofT\A}}{\Type{B}}{\byAss}\label{pi2}\\
		\NDLinePTG{\PhiAppl{\A}\;\type}{\IH,(\ref{pi1})}\label{pi3}\\
		\concatCtx{\concatCtx{\ctxT}{\x \ofT&\PhiAppl{\A}}}{\typingAss{\A}{\x}}\dedPT\PhiAppl{\typB}\;\type\QQQNegSp&&\QQuad\text{\IH,(\ref{pi2})}\label{pi4pre}\\
		\NDLinePTG{\PhiAppl{\typB}\;\type}{\QNegSp HOL types context independent,(\ref{pi4pre})}\label{pi4}\\
		\NDLinePTG{\PhiAppl{\A}\to \PhiAppl{\typB}\:\type}{\ruleRef{arrow},(\ref{pi3}),(\ref{pi4})}\label{pi5}\\
		\NDLinePTG{\PredPhiName{A}\ofT\PhiAppl{\A}\to\PhiAppl{\A}\to\bool}{\IH,(\ref{pi1})}\label{pi6}\\
		\NDLinePTG{\PredPhiName{B}\ofT\PhiAppl{\typB}\to\PhiAppl{\typB}\to\bool}{\IH,(\ref{pi2})}\label{pi7}\\
		\NDLinePTG{\PhiAppl{\piType{x}{\A}\typB}\;\type}{\ref{PTPitype},(\ref{pi5})}\nonumber\\
		\NDLinePTG{\PredPhiName{\left(\piType{x}{\A}\typB\right)}\ofT(\PhiAppl{\piType{x}{\A}\typB})\to(\PhiAppl{\piType{x}{\A}\typB})\to\bool\QQQQNegSp}{\QQQQuad\ref{PTPipred},(\ref{pi6}),(\ref{pi7})}\nonumber
	\end{align}
	
		\subparagraph{\ruleRef{Qtype}:}
		\small{
		\begin{align}
			\NDLineTG{\Type{\A}}{\byAss}\label{QA1}\\
			\NDLineTG{\r\ofT \piType{x_1}{\A}\piType{x_2}{\A}\bool}{\byAss}\label{QA2}\\
			\NDLinePTG{\Type{\PhiAppl{\A}}}{\IH,(\ref{QA1})}\label{QIH1}\\
			\NDLinePTG{\PredPhiName{A}\ofT \PhiAppl{\A}\to \PhiAppl{\A}\to\bool}{\IH,(\ref{QA1})}\label{QIH2}\\
			\NDLinePTG{\PhiAppl{\r}\ofT \PhiAppl{\A}\to \PhiAppl{\A}\to\bool}{\IH,(\ref{QA2})}\label{QIH3}\\
			\NDLinePTG{\Type{\PhiAppl{\quot{\A}{\r}}}}{\ref{PTQtype},(\ref{QIH1})}\nonumber\\
			\NDLinePT{\concatCtx{\ctxT}{\x ,\y\ofT\PhiAppl{\A}}}{\PredPhi{\A}{\x}\ofT\bool}{\ruleRef{appl},\ruleRef{appl},(\ref{QIH2}),\ruleRef{var},\ruleRef{var}}\label{QApXX}\\
			\NDLinePT{\concatCtx{\ctxT}{\x ,\y\ofT\PhiAppl{\A}}}{\PredPhi{\A}{\y}\ofT\bool}{\ruleRef{appl},\ruleRef{appl},(\ref{QIH2}),\ruleRef{var},\ruleRef{var}}\label{QApYY}\\
			\NDLinePT{\concatCtx{\ctxT}{\x ,\y\ofT\PhiAppl{\A}}}{\PhiAppl{\r}\ \x \ \x \ofT\bool}{\ruleRef{appl},\ruleRef{appl},(\ref{QIH3}),\ruleRef{var},\ruleRef{var}}\label{Qrxy}\\
			\NDLinePT{\concatCtx{\ctxT}{\x ,\y\ofT\PhiAppl{\A}}}{\PhiAppl{\r}\ \x \ \y \land \PredPhi{\A}{\x}\land \PredPhi{\A}{\y}\ofT \bool\QQNegSp}{\ruleRef{andTp},(\ref{Qrxy}),\ruleRef{andTp},(\ref{QApXX}),(\ref{QApYY})}\label{Qconjs}\\
			\ctxT\dedPT& \lambdaFun{x,y}{\PhiAppl{\A}}\PhiAppl{\r}\ \x \ \y \land \PredPhi{\A}{\x}\land \PredPhi{\A}{\y} \ofT \PhiAppl{\A}\to \PhiAppl{\A}\to \bool\QQQQNegSp&&\nonumber\\&&&\text{\ruleRef{lambda},\ruleRef{lambda},(\ref{Qconjs})}\label{QTppre}\\
			\NDLinePTG{\PredPhiName{\left(\quot{\A}{\r}\right)}\ofT \PhiAppl{\left(\quot{\A}{\r}\right)}\to \PhiAppl{\left(\quot{\A}{\r}\right)}\to \bool}{\ref{PTQtype},\ref{PTQpred},(\ref{QTppre})}\nonumber
		\end{align}
	}
		
		\subparagraph{\ruleRef{psubType}:}
		\begin{align}
			\NDLineTG{\p\ofT\piType{x}{\A}\bool}{\byAss}\label{psubTypeA}\\
			\NDLinePTG{\PhiAppl{\p}\ofT \PhiAppl{\A}\to\bool}{\IH,(\ref{psubTypeA})}\label{psubTypeIH}\\
			\NDLinePTG{\Type{\PhiAppl{\A}\to\bool}}{\ruleRef{typingTp},(\ref{psubTypeIH})}\label{psubTypepTpWf}
			\intertext{Since statements of shape $\ded \Type{\typB\to\typC}$ only provable using rule \ruleRef{arrow}:}
			\NDLinePTG{\Type{\PhiAppl{\A}}}{see above,(\ref{psubTypepTpWf})}\label{psubTypepre1}\\
			\NDLinePTG{\Type{\PhiAppl{\subtype{\typeC{A}}{\p}}}}{\ref{PTPStype},(\ref{psubTypepre1})}\nonumber\\
			\NDLinePTG{\univQuant{x,y}{\PhiAppl{\A}}\termEqT{A}{\x}{\y}\implC \PhiAppl{\p}\ \x \termEqB \PhiAppl{\p}\ \y}{\IH,\ref{PTPipred},(\ref{psubTypeA})}\label{psubTypeIH2}\\
			\NDLinePT{\concatCtx{\ctxT}{\varname{x,y}\ofT \PhiAppl{\A}}}{\termEqT{A}{\x}{\y}\implC \PhiAppl{\p}\ \x \termEqB \PhiAppl{\p}\ \y}{\ruleRef{assDed},\ruleRef{forallE},\ruleRef{assDed},}\nonumber\\&&&\text{\ruleRef{forallE},\ruleRef{assDed},(\ref{psubTypeIH2}),\ruleRef{var},\ruleRef{var}}\label{psubTypeReverseForallI}\\
		\NDLinePT{\concatCtx{\ctxT}{\varname{x,y}\ofT \PhiAppl{\A}}}{\termEqT{A}{\x}{\y}\ofT\bool}{\ruleRef{implTypingL},(\ref{psubTypeReverseForallI})}\label{psubTypeArelApplWt}\\
		\NDLinePT{\concatCtx{\ctxT}{\varname{x,y}\ofT \PhiAppl{\A}}}{\PredPhiName{A}\ofT \PhiAppl{\A}\to\PhiAppl{\A}\to\bool}{\ruleRef{applType},\ruleRef{var},\ruleRef{applType},\ruleRef{var},(\ref{psubTypeArelApplWt})}\label{psubTypeFakeIH2C}
		\end{align}
		Since $\x ,\y$ don't occur in $\PredPhiName{A}$ and \hol types are context independent:
		\begin{align}
		\NDLinePTG{\PredPhiName{A}\ofT\PhiAppl{\A}\to\PhiAppl{\A}\to\bool}{see above,(\ref{psubTypeFakeIH2C})}\label{psubTypeFakeIH2}\\
		\NDLinePTG{\PredPhiName{\left(\subtype{\A}{\p}\right)}\ofT\PhiAppl{\A}\to\PhiAppl{\A}\to\bool}{\ref{PTPSpred},(\ref{psubTypeFakeIH2})}\nonumber
		\end{align}
	
	\subsubsection{Type-equality}
	Type-equality can be shown using the rules \ruleRef{congBase'},\ruleRef{subtE}, \ruleRef{congPi} and \ruleRef{congBool}:
	Observe that by the rules \ruleRef{varDed}, \ruleRef{congAppl}, \ruleRef{var}, instead of proving $\concatCtx{\ctxT}{\x,\y\ofT\PhiAppl{\A}}\dedPT \termEqT{A}{\x}{\y}\termEqB\termEqT{A}{\x}{\y}$ we may simply prove $\ctxT\dedPT \PredPhiName{A}\termEquals{\PhiAppl{\A}\to\PhiAppl{\A}\to\bool}\termEqT{A'}{\x}{\y}$.

	\subparagraph{\ruleRef{congBase'}:}
	\begin{align}
		&\thyIn{\a\ofT\piType{x_1}{\typeC{A_1}}\ldots\piType{x_n}{\typeC{A_n}}\type}{\theorycolor{T}}&&\text{by assumption}\label{congBase0}\\
		\NDLineTG{\termC{s_1}\termEquals{\typeC{A_1}}\termC{t_1}}{by assumption}\label{congBase1}\\
		\vdots\nonumber\\
		\NDLineTG{\termC{s_n}\termEquals{\subst{\typeC{A_n}}{x_1}{\termC{t_1}}\ldots\substOp{x_{n-1}}{\termC{t_{n-1}}}}\termC{t_n}}{\byAss}\label{congBasen}\\
		&\thyIn{\a\ofT\type}{\PhiAppl{\theorycolor{T}}}&&\text{\ref{PTTpConstr},(\ref{congBase0})}\label{congBaseATpl}\\
		&\thyIn{\PredPhiName{a}\ofT\PhiAppl{\typeC{A_1}}\to\ldots \to \PhiAppl{\typeC{A_n}}\to\PhiAppl{\a}\to\PhiAppl{\a}\to\bool}{\PhiAppl{\theorycolor{T}}}\quad&&\text{\ref{PTTpConstr},(\ref{congBase0})}\label{congBaseAPred}\\
		\NDLinePTG{\PhiAppl{\termC{s_1}}\termEquals{\PhiAppl{\typeC{A_1}}}\PhiAppl{\termC{t_1}}}{\IH,(\ref{congBase1})}\label{congBase1T}\\
		\vdots\nonumber\\	
		\NDLinePTG{\PhiAppl{\termC{s_n}}\termEquals{\PhiAppl{\typeC{A_n}}}\PhiAppl{\termC{t_n}}}{\IH,(\ref{congBasen})}\label{congBasenT}\\
		\NDLinePT{}{\Ctx{\ctxT}}{\ruleRef{tpCtx},\ruleRef{typingTp},\ruleRef{eqTyping},(\ref{congBase1T})}\label{congBasePGCtx}\\
		\NDLinePTG{\a\typeEquals \a}{\ruleRef{congBase},(\ref{congBasePGCtx}),(\ref{congBaseATpl})}\label{congBaseT}\\
		\NDLinePTG{\PredPhiName{a}\termEquals{\PhiAppl{\typeC{A_1}}\to\ldots \to \PhiAppl{\typeC{A_n}}\to\PhiAppl{\a}\to\PhiAppl{\a}\to\bool}\PredPhiName{a}}{\ruleRef{refl},\ruleRef{constS},(\ref{congBaseAPred}),(\ref{congBasePGCtx})}\label{congBaseApredEqApred}\\
		\NDLinePTG{\PredPhiName{a}\ \PhiAppl{\termC{s_1}}\termEquals{\PhiAppl{\typeC{A_2}}\to\ldots \to \PhiAppl{\typeC{A_n}}\to\PhiAppl{\a}\to\PhiAppl{\a}\to\bool}\PredPhiName{a}\ \PhiAppl{\termC{t_1}}}{\ruleRef{congAppl},(\ref{congBase1T}),(\ref{congBaseApredEqApred})}\label{congBasePredApp1}\\
		\vdots\nonumber\\	
		\NDLinePTG{\PredPhiName{a}\ \PhiAppl{\termC{s_1}}\ \ldots\ \PhiAppl{\termC{s_n}}\termEquals{\PhiAppl{\a}\to\PhiAppl{\a}\to\bool}\PredPhiName{a}\ \PhiAppl{\termC{t_1}}\ \ldots\ \PhiAppl{\termC{t_n}}}{\ruleRef{congAppl},(\ref{congBasenT}),previous line}\label{congBasePredAppn}\\
		\NDLinePTG{\PhiAppl{\a\ \termC{s_1}\ \ldots\ \termC{s_n}}\typeEquals \PhiAppl{\a\ \termC{t_1}\ \ldots\ \termC{t_n}}}{\ref{PTTpAppl},(\ref{congBaseT})}\nonumber\\
		\NDLinePTG{\dedPT\PredPhiName{(\a\ \termC{s_1}\ \ldots\ \termC{s_n})}\termEquals{\PhiAppl{\a}\to\PhiAppl{\a}\to\bool}\PredPhiName{(\a\ \termC{t_1}\ \ldots\ \termC{t_n})}\QQNegSp}{\ref{PTTpPredAppl},(\ref{congBasePredAppn})}\nonumber
	\end{align}
	
	\subparagraph{\ruleRef{subtE}:}
	\begin{align}
		\NDLineTG{\A\subtyping\Ap}{\byAss}\label{subtEA1}\\
		\NDLineTG{\Ap\subtyping\A}{\byAss}\label{subtEA2}\\
		\NDLinePTG{\PhiAppl{\A}\typeEquals\PhiAppl{\Ap}}{\IH,(\ref{subtEA1})}\label{subtEIH1}\\
		\NDLinePT{\concatCtx{\ctx}{\x ,\y\ofT \PhiAppl{\A}}}{\termEqT{A}{\x}{\y}\implC \termEqT{A'}{\x}{\y}}{\IH,(\ref{subtEA1})}\label{subtEIH2}\\
		\NDLinePT{\concatCtx{\ctxT}{\x \ofT\PhiAppl{\A}}}{\termEqT{A}{\x}{\x}\implC \termEqT{A'}{\x}{\x}}{\ruleRef{forallE},\ruleRef{forallI},(\ref{subtEIH2}),\ruleRef{var}}\label{subtEIH2'}\\
		\NDLinePT{\concatCtx{\ctx}{\x ,\y\ofT \PhiAppl{\Ap}}}{\termEqT{A'}{\x}{\y}\implC \termEqT{A}{\x}{\y}}{\IH,(\ref{subtEA2})}\label{subtEIH3}\\
		\NDLinePT{\concatCtx{\ctxT}{\x \ofT\PhiAppl{\Ap}}}{\termEqT{A'}{\x}{\x}\implC \termEqT{A}{\x}{\x}}{\ruleRef{forallE},\ruleRef{forallI},(\ref{subtEIH3}),\ruleRef{var}}\label{subtEIH3'}\\
		\ctxT\dedPT\univQuant{x}{\PhiAppl{\A}}&\termEqT{A'}{\x}{\x}\implC \termEqT{A}{\x}{\x}&&\text{\ruleRef{congDed},\ruleRef{forallCong},\ruleRef{tpEqTrans},(\ref{subtEIH1}),\ruleRef{refl},\ruleRef{forallI},(\ref{subtEIH3'})}\label{subtEIH3''}\\
		\NDLinePT{\concatCtx{\ctxT}{\x \ofT\PhiAppl{\A}}}{\termEqT{A'}{\x}{\x}\implC \termEqT{A}{\x}{\x}}{\ruleRef{forallE},\ruleRef{varDed},(\ref{subtEIH3''}),\ruleRef{var}}\label{subtEIH3'''}\\
		\NDLinePT{\concatCtx{\ctxT}{\x \ofT\PhiAppl{\A}}}{\termEqT{A}{\x}{\x}\termEqB \termEqT{A'}{\x}{\x}}{\ruleRef{propExt},(\ref{subtEIH2'}),(\ref{subtEIH3'''})}\nonumber
	\end{align}
	
	\subparagraph{\ruleRef{congPi}:}
	{\small
		\begin{align}
			\NDLineTG{\A\typeEquals \Ap}{\byAss}\label{congPi1}\\
			\NDLineT{\concatCtx{\ctx}{\x \ofT\A}}{\typB \typeEquals \Bp}{\byAss}\label{congPi2}\\
			\NDLinePTG{\PhiAppl{\A}\typeEquals\PhiAppl{\Ap}}{\IH,(\ref{congPi1})}\label{congPi3}\\
			\NDLinePTG{\PredPhiName{A}\termEquals{\PhiAppl{\A}\to\PhiAppl{\A}\to\bool}\PredPhiName{A'}}{\IH,(\ref{congPi1})}\label{congPi7}\\
			\NDLinePT{\concatCtx{\concatCtx{\ctxT}{\x \ofT\PhiAppl{\A}}}{\typingAss{\A}{\x}}}{\PhiAppl{\typB}\typeEquals\PhiAppl{\Bp}}{\IH,(\ref{congPi2})}\label{congPi4}
		\end{align}
		Since $\typeEquals$ is context independent in HOL:
		\begin{align}
			\NDLinePTG{\PhiAppl{\typB}\typeEquals\PhiAppl{\Bp}}{explanation,(\ref{congPi4})}\label{congPi5}\\
			\NDLinePTG{\PhiAppl{\A}\to\PhiAppl{\typB}\typeEquals\PhiAppl{\Ap}\to\PhiAppl{\Bp}}{\ruleRef{congTo},(\ref{congPi3}),(\ref{congPi5})}\label{congPi6}\\
			\NDLinePTG{\PhiAppl{\piType{x}{\A}\typB}\typeEquals\PhiAppl{\piType{x}{A'}\Bp}}{\ref{PTPitype},(\ref{congPi6})}\nonumber\\
			\NDLinePT{\concatCtx{\concatCtx{\ctxT}{\x \ofT\PhiAppl{\A}}}{\typingAss{\A}{\x}}}{\PredPhiName{B}\termEquals{\PhiAppl{\typB}\to\PhiAppl{\typB}\to\bool}\PredPhiName{B'}}{\QNegSp\IH,(\ref{congPi2})}\label{congPi8}\\
			\concatCtx{\concatCtx{\ctxT}{\termf\ofT\PhiAppl{\A}\to\PhiAppl{\typB}}}{\x\ofT&\PhiAppl{\A}}\dedPT\PredPhi{A}{\x}\impl \PredPhi{B}{(\termf\ \x )}&&\nonumber\\&\termEqB \PredPhi{A'}{\x}\impl \PredPhi{B}{(\termf\ \x )}&&\text{\ruleRef{rewrite},\ruleRef{refl},(\ref{congPi7})}\label{congPi9}\\
			\concatCtx{\concatCtx{\ctxT}{\termf\ofT\PhiAppl{\A}\to\PhiAppl{\typB}}}{\x\ofT&\PhiAppl{\A}}\dedPT\PredPhi{A}{\x}\impl \PredPhi{B}{(\termf\ \x )}&&\nonumber\\&\termEqB \PredPhi{A'}{\x}\impl \PredPhi{B'}{(\termf\ \x )}&&\text{\ruleRef{rewrite},(\ref{congPi9}),(\ref{congPi8})}\label{congPi10}\\
			\NDLinePT{\concatCtx{\ctxT}{\termf\ofT\PhiAppl{\A}\to\PhiAppl{\typB}}}{\univQuant{x}{\PhiAppl{\A}} \PredPhi{A}{\x}\impl (\PredPhi{B}{(\termf\ \x )})\termEqB&&\nonumber\\&
				\univQuant{x}{\PhiAppl{\Ap}} \PredPhi{A'}{\x} \impl (\PredPhi{B'}{(\termf\ \x )})}{\ruleRef{forallCong},(\ref{congPi3}),(\ref{congPi10})}\label{congPi11}\\
			\NDLinePTG{\PredPhiName{\left(\piType{x}{\A}\typB\right)}\termEquals{\PhiAppl{\A}\to\PhiAppl{\A}\to\bool}\PredPhiName{\left(\piType{x}{A'}\Bp\right)}}{\ref{PTLam},\ruleRef{congLam},(\ref{congPi11})}\nonumber
		\end{align}
	}
	
	\subparagraph{\ruleRef{congBool}:}
	\begin{align}
		\NDLineT{}{\Ctx{\ctx}}{\byAss}\label{congBoolA}\\
		\NDLinePT{}{\Ctx{\ctxT}}{\IH,(\ref{congBoolA})}\label{congBoolIH}\\
		\NDLinePTG{\Type{\bool}}{\ruleRef{congB},(\ref{congBoolIH})}\nonumber
	\end{align}
	$\pbool$ is a well-typed relation on $\bool$ by definition.
	
	\subsubsection{Subtyping}
	Subtyping can be shown using the axiom \ruleRef{ax:quotcod}.
	
	\subparagraph{\ruleRef{ax:quotcod}:}
	We need to check that the translation of axiom \ruleRef{ax:quotcod} holds in HOL. \ruleRef{ax:quotcod} states that whenever either side is well-formed, we have:
	\[\ded
	\piType{x}\A\quot\typB\r \subtyping
	\quot{(\piType x\A\typB)}
	{\lambdaFun{f,g}{\piType{x}{\A}\typB}\univQuant{x}{\A}\r\ \left(\termf\ \x\right)\left(\termg\ \x\right)}
	\]
	If either side is well-formed it follows that $\A, \typB$ are well-formed and $\r$ is equivalence relation on $\A$.
	We then need to prove that $\PhiAppl{\piType{x}\A\quot\typB\r}\typeEquals\PhiAppl{\quot{(\piType x\A\typB)}
		{\lambdaFun{f,g}{\piType{x}{\A}\typB}\univQuant{x}{\A}\r\ \left(\termf\ \x\right)\left(\termg\ \x\right)}}$ holds, which is immediate from the definition of the translation (both sides are just $\PhiAppl{\A}\to\PhiAppl{\typB}$) and that in a context containing $\x,\y\ofT\PhiAppl{\typB}$ we have $$\termEqT{\left(\piType{x}\A\quot\typB\r\right)}{\x}{\y}\implC \termEqT{\left(\quot{(\piType x\A\typB)}
		{\lambdaFun{f,g}{\piType{x}{\A}\typB}\univQuant{x}{\A}\r\ \left(\termf\ \x\right)\left(\termg\ \x\right)}\right)}{\x}{\y}.$$
	However, we have already shown in Example~\ref{exam:persQuotCod} that both PER applications reduce to the same formula, so the implication must be valid in HOL.
	
	\subsubsection{Typing}
	Typing can be shown using the rules \ruleRef{const''}, \ruleRef{var''}, \ruleRef{quotE},
	\ruleRef{lambda'}, \ruleRef{appl'}, \ruleRef{implType'}, \ruleRef{eqType'}, \ruleRef{psubI}, \ruleRef{psubE1}, \ruleRef{QI}:
	
	\subparagraph{\ruleRef{const''}:}
	\begin{align}
		&\thyIn{\c\ofT\Ap}{T}&&\text{\byAss}\label{const1}\\
		\NDLineTG{\Ap \typeEquals \A}{\byAss}\label{const2}\\
		&\thyIn{\c\ofT\PhiAppl{\Ap}}{\PhiAppl{T}}&&\text{\ref{PTtermDecl},(\ref{const1})}\label{const3}\\
		&\thyIn{\typingAx{A'}{\c}}{\PhiAppl{T}}&&\text{\ref{PTtermDecl},(\ref{const1})}\label{const6}\\
		\NDLinePTG{\PhiAppl{\Ap}\typeEquals\PhiAppl{\A}}{\IH,(\ref{const2})}\label{const4}\\
		\NDLinePT{\concatCtx{\ctxT}{\x \ofT\PhiAppl{\Ap}}}{\PredPhi{A'}{\x}\termEqB \PredPhi{A}{\x}}{\IH,(\ref{const2})}\label{const4a}\\
		\NDLinePTG{\univQuant{x}{\PhiAppl{\A}}\PredPhi{A'}{\x}\termEqB \PredPhi{A}{\x}}{\ruleRef{forallI},\ruleRef{forallI},(\ref{const4a})}\label{const4b}\\
		\NDLinePTG{\constname{c}\ofT\PhiAppl{\A}}{\ruleRef{const},(\ref{const3}),(\ref{const4})}\label{const5}\\
		\NDLinePTG{\PhiAppl{\c}\ofT\PhiAppl{\A}}{\ref{PTtermDecl},(\ref{const5})}\nonumber\\
		\NDLinePTG{\PredPhi{A'}{ \PhiAppl{\c}}}{\ref{PTtermDecl},\ruleRef{axiom},(\ref{const6})}\label{const7}\\
		\NDLinePTG{\PredPhi{A}{\PhiAppl{\c}}}{\ruleRef{congDed},\ruleRef{forallE},(\ref{const4b}),(\ref{const7})}\nonumber
	\end{align}
	
	\subparagraph{\ruleRef{var''}:}
	\begin{align}
		&\ctxIn{\x \ofT\Ap}{\ctx}&&\text{\byAss}\label{var1}\\
		\NDLineTG{\Ap \typeEquals \A}{\byAss}\label{var2}\\
		&\ctxIn{\x\ofT\PhiAppl{\Ap}}{\ctxT}&&\text{\ref{PTtermDecl},(\ref{var1})}\label{var3}\\
		&\ctxIn{\typingAss{A'}{\x}}{\ctxT}&&\text{\ref{PTtermDecl},(\ref{var1})}\label{var6}\\
		\NDLinePTG{\PhiAppl{\Ap}\typeEquals\PhiAppl{\A}}{\IH,(\ref{var2})}\label{var4}\\
		\NDLinePT{\concatCtx{\ctxT}{\x \ofT\PhiAppl{\Ap}}}{\PredPhi{A'}{\x}\termEqB \PredPhi{A}{\x}}{\IH,(\ref{var2})}\label{var4a}\\
		\NDLinePTG{\univQuant{x}{\PhiAppl{\A}}\PredPhi{A'}{\x}\termEqB \PredPhi{A}{\x}}{\ruleRef{forallI},(\ref{var4a})}\label{var4b}\\
		\NDLinePTG{\x\ofT\PhiAppl{\A}}{\ruleRef{var},(\ref{var3}),(\ref{var4})}\label{var5}\\
		\NDLinePTG{\PhiAppl{\x}\ofT\PhiAppl{\A}}{\ref{PTtermDecl},(\ref{var5})}\nonumber\\
		\NDLinePTG{\PredPhi{A'}{ \PhiAppl{\x}}}{\ref{PTtermDecl},\ruleRef{assume},(\ref{var6})}\label{var7}\\
		\NDLinePTG{\PredPhi{A}{\PhiAppl{\x}}}{\ruleRef{congDed},\ruleRef{forallE},(\ref{var4b}),(\ref{var7})}\nonumber
	\end{align}

	\subparagraph{\ruleRef{quotE}:}
	\begin{align}
		\NDLineTG{\s\ofT\quot\A\r}{\byAss}\label{QEA1}\\
		\NDLineT{\ctx,\,\x\ofT\A,\,\namedass{xrs}{\x\termEquals{\quot\A\r}\s}}{\tm\ofT \typB}{\byAss}\label{QEA2}\\
		\NDLineT{\ctx,\,\x\ofT\A,\,\xp\ofT\A,\,\namedass{xrs}{\x\termEquals{\quot\A\r}\s},\,\namedass{xprs}{\xp\termEquals{\quot\A\r}\s}}{\tm\termEquals{\typB} \tm\substOp\x\xp\QQNegSp}{\QQuad\byAss}\label{QEA3}\\
		\NDLinePTG{\PhiAppl{\s}\ofT\PhiAppl{\A}}{\IH,(\ref{QEA1})}\label{QEIH1}\\
		\NDLinePT{\ctxT,\,\x\ofT\PhiAppl{\A},\,\typingAss{\A}{\x},\,\namedass{xrs}{\termEqT{\left(\quot\A\r\right)}{\PhiAppl{\x}}{\PhiAppl{\s}}}}{\PhiAppl{\tm}\ofT\PhiAppl{\typB}}{\IH,(\ref{QEA2})}\label{QEIH2}\\
		\ctxT,\,\x\ofT\A,\,\typingAss{\A}{\x},\,\xp\ofT\A,\,\typingAss{\A}{\xp},\,&\nonumber\\\namedass{xrs}{\termEqT{\left(\quot\A\r\right)}{\x}{\PhiAppl{\s}}},\,\namedass{xprs}{\termEqT{\left(\quot\A\r\right)}{\xp}{\PhiAppl{\s}}}\dedPT& 
		\termEqT{B}{\PhiAppl{\tm}}{\subst{\PhiAppl{\tm}}{x}{\xp}}\QQNegSp&&\QQuad\text{\IH,(\ref{QEA3})}\label{QEIH4}
	\end{align}
	Since typing is context independent in HOL:
	\begin{align}
		\NDLinePT{\ctxT,\,\x\ofT\PhiAppl{\A}}{\PhiAppl{\tm}\ofT\PhiAppl{\typB}}{explanation,(\ref{QEIH2})}\label{QEIH2'}\\
		\NDLinePTG{\PhiAppl{\tm}\substOp{\x}{\PhiAppl{\s}}\ofT \PhiAppl{\typB}}{\ruleRef{rewriteTyping},(\ref{QEIH2'}),(\ref{QEIH1})}\label{QEpre1}\\
		\NDLinePTG{\PhiAppl{\tm}\substOp{\x}{\PhiAppl{\s}}\ofT \PhiAppl{\typB\substOp\x\s}}{HOL types are simple,(\ref{QEpre1})}\nonumber
	\end{align}
	Since $\PredPhiName{B}$ is transitive we can simplify (\ref{QEIH4}) to:
	\begin{align}
		\ctxT,\,\x\ofT\A,\,\typingAss{\A}{\x},\,&\nonumber\\\namedass{xrs}{\termEqT{\left(\quot\A\r\right)}{\x}{\PhiAppl{\s}}}\dedPT& 
		\termEqT{B}{\PhiAppl{\tm}}{\subst{\PhiAppl{\tm}}{x}{\PhiAppl{\s}}}&&\text{explanation,(\ref{QEIH4})}\label{QEIH4'}
	\end{align}
	By symmetry and transitivity of $\PredPhiName{B}$, we yield also $\termEqT{B}{\subst{\PhiAppl{\tm}}{x}{\PhiAppl{\s}}}{\subst{\PhiAppl{\tm}}{x}{\PhiAppl{\s}}}$ in the same context. Since this formula no longer depends on $\x$ and an (known to be well-typed) equality assumption with an otherwise unused variable on one side is not useful for proving in HOL, the same must also be derivable in context $\ctxT$.
	\begin{align}
	\NDLinePTG{\termEqT{B}{\subst{\PhiAppl{\tm}}{x}{\PhiAppl{\s}}}{\subst{\PhiAppl{\tm}}{x}{\PhiAppl{\s}}}}{explanation,(\ref{QEIH4'})}\nonumber
	\end{align}
	
	\subparagraph{\ruleRef{lambda'}:}	
	{\small
		\begin{align}
			\NDLineT{\concatCtx{\ctx}{\x \ofT\PhiAppl{\A}}}{\tm\ofT\typB}{\byAss}\label{lambdap1}\\
			\NDLineTG{\A\typeEquals\Ap}{\byAss}\label{lambdapA2}\\
			\NDLinePT{\concatCtx{\concatCtx{\ctx}{\x \ofT\PhiAppl{\A}}}{\typingAss{\A}{\x}}}{\PhiAppl{\tm}\ofT\PhiAppl{\typB}}{\IH,\ref{PTctxVar},(\ref{lambdap1})}\label{lambdap2}\\
			\NDLinePTG{\PhiAppl{\A}\typeEquals\PhiAppl{\Ap}}{\IH,(\ref{lambdapA2})}\label{lambdapIH2}\\
			\NDLinePTG{\PredPhiName{\A}\termEquals{\PhiAppl{\A}\to\PhiAppl{\A}\to\bool}\PredPhiName{\Ap}}{\IH,(\ref{lambdapA2})}\label{lambdapIH3}\\
			\NDLinePT{\concatCtx{\concatCtx{\ctx}{\x \ofT\PhiAppl{\A}}}{\typingAss{\A}{\x}}}{\PredPhi{B}{\PhiAppl{\tm}}}{\IH,\ref{PTctxVar},(\ref{lambdap1})}\label{lambdap5}\\
			\NDLinePT{\concatCtx{\concatCtx{\ctx}{\x ,\y\ofT\PhiAppl{\A}}}{\namedass{xRy}{\termEqT{A}{\x}{\y}}}}{\termEqT{B}{\PhiAppl{\tm}}{\subst{\PhiAppl{\tm}}{\x}{\y}}}{(\ref{correct:substRelatTerms}),(\ref{lambdap5})}\label{lambdap5b}\\
			\ctx\dedT \univQuant{x,y}{\PhiAppl{\A}}&\termEqT{A}{\x}{\y}\impl \termEqT{B}{\PhiAppl{\tm}}{\subst{\PhiAppl{\tm}}{\x}{\y}} \QQQNegSp&&\QQuad \text{\ruleRef{forallI},\ruleRef{implI},(\ref{lambdap5b})}\label{lambdap6}\\
			\ctx\dedT \univQuant{x,y}{\PhiAppl{\A}}&\termEqT{A'}{\x}{\y}\impl \termEqT{B}{\PhiAppl{\tm}}{\subst{\PhiAppl{\tm}}{\x}{\y}} \QQQNegSp&&\QQuad \text{\ruleRef{rewrite},\ruleRef{lambdap6},(\ref{lambdapIH3})}\label{lambdap6'}\\
			\NDLinePT{\concatCtx{\ctx}{\x \ofT\A}}{\PhiAppl{\tm}\ofT\PhiAppl{\typB}}{\QQNegSp\QNegSp typing independent of assumptions,(\ref{lambdap2})}\label{lambdap3}\\
			\NDLinePTG{(\lambdaFun{x}{\PhiAppl{\A}}\PhiAppl{\tm})\ofT\PhiAppl{\A}\to\PhiAppl{\typB}}{\ruleRef{lambda},(\ref{lambdap3})}\label{lambdap4}
			\intertext{Since in HOL equal types are necessarily identical, it follows:}
			\NDLinePTG{(\lambdaFun{x}{\PhiAppl{\A}}\PhiAppl{\tm})\ofT\PhiAppl{\Ap}\to\PhiAppl{\typB}}{explanation,(\ref{lambdap4}),(\ref{lambdapIH2})}\label{lambdappre1}\\
			\NDLinePTG{\PhiAppl{\lambdaFun{x}{\A}t}\ofT\PhiAppl{\piType{x}{\Ap}\typB}\QNegSp}{\QQuad\ref{PTLam},\ref{PTPitype},(\ref{lambdappre1})}\nonumber\\
			\NDLinePTG{\PredPhi{(\piType{x}{\Ap}\typB)}\ \PhiAppl{\lambdaFun{x}{\A}t}\QQNegSp}{\QQuad\ref{PTPipred},(\ref{lambdap6'})}\nonumber
		\end{align}
	}
	
	\subparagraph{\ruleRef{appl'}:}
	\begin{align}
		\NDLineTG{\termf\ofT\piType{x}{\A}\typB}{\byAss}\label{applp1}\\
		\NDLineTG{\tm\ofT\A}{\byAss}\label{applp2}\\
		\NDLinePTG{\PhiAppl{\termf}\ofT\PhiAppl{\A}\to\PhiAppl{\typB}}{\IH,\ref{PTPitype},(\ref{applp1})}\label{applp3}\\
		\NDLinePTG{\PredPhi{(\piType{x}{\A}\typB)}{\PhiAppl{\termf}}}{\IH,\ref{PTPitype},(\ref{applp1})}\label{applp6}\\
		\NDLinePTG{\univQuant{x}{\PhiAppl{\A}}\univQuant{y}{\PhiAppl{\A}}\nonumber\\&
			\termEqT{A}{\x}{\y}\impl \termEqT{B}{(\PhiAppl{\termf}\ \x )}{(\PhiAppl{\termf}\ \y)}}{\ref{PTPipred},(\ref{applp6})}\label{applp7}\\
		\NDLinePTG{\PhiAppl{\tm}\ofT\PhiAppl{\A}}{\IH,(\ref{applp2})}\label{applp4}\\
		\NDLinePTG{\PredPhi{A}{\PhiAppl{\tm}}}{\IH,(\ref{applp2})}\label{applp8}\\
		\NDLinePTG{\PredPhi{A}{\PhiAppl{\tm}}\impl \PredPhi{B}{(\PhiAppl{\termf}\ \PhiAppl{\tm})}}{\ruleRef{forallE},\ruleRef{forallE},(\ref{applp7}),(\ref{applp4}),(\ref{applp4})}\label{applp9}\\
		\NDLinePTG{\PredPhi{B}{(\PhiAppl{\termf}\ \PhiAppl{\tm})}}{\ruleRef{implE},(\ref{applp9}),(\ref{applp8})}\label{applp10}\\
		\NDLinePTG{\PhiAppl{\termf}\ \PhiAppl{\tm}\ofT\PhiAppl{\typB}}{\ruleRef{appl},(\ref{applp3}),(\ref{applp4})}\label{applp5}\\
		\NDLinePTG{\PhiAppl{\termf\ \tm}\ofT\PhiAppl{\typB}}{\ref{PTappl},(\ref{applp5})}\nonumber\\
		\NDLinePTG{\PredPhi{B}{\PhiAppl{\termf\ \tm}}}{\ref{PTappl},(\ref{applp9})}\nonumber
	\end{align}
	
	\subparagraph{\ruleRef{implType'}:}
	\begin{align}
		\NDLineTG{\termF\ofT\bool}{\byAss}\label{implType1}\\
		\NDLineT{\concatCtx{\ctx}{\namedass{ass}{\termF}}}{\termC{G}\ofT\bool}{\byAss}\label{implType2}\\
		\NDLinePTG{\PhiAppl{\termF}\ofT\bool}{\IH,(\ref{implType1})}\label{implType3}\\
		\NDLinePT{\concatCtx{\ctxT}{\PhiAppl{\termF}}}{\PhiAppl{\termC{G}}\ofT\bool}{\IH,(\ref{implType2})}\label{implType4}\\
		\NDLinePTG{\PhiAppl{\termC{G}}\ofT\bool}{typing is independent of assumptions,(\ref{implType4})}\label{implType5}\\
		\NDLinePTG{\PhiAppl{\termF}\impl \PhiAppl{\termG}\ofT\bool}{\ruleRef{implType},(\ref{implType3}),(\ref{implType5})}\label{implType6}\\
		\NDLinePTG{\PhiAppl{\termF\impl \termG}\ofT\bool}{\ref{PTImpl},(\ref{implType6})}\label{implType7}\\
		\NDLinePTG{\boolPred{\PhiAppl{\termF \impl \termG}}}{(\ref{PTTpBoolPred}),\ruleRef{refl},(\ref{implType7})}\nonumber
	\end{align}
	
	\subparagraph{\ruleRef{eqType'}:}
	\begin{align}
		\NDLineTG{\s\ofT\A}{\byAss}\label{eqType1}\\
		\NDLineTG{\tm\ofT\A}{\byAss}\label{eqType2}\\
		\NDLinePTG{\PhiAppl{\s}\ofT\PhiAppl{\A}}{\IH,(\ref{eqType1})}\label{eqType3}\\
		\NDLinePTG{\PhiAppl{\tm}\ofT\PhiAppl{\A}}{\IH,(\ref{eqType2})}\label{eqType4}\\
		\NDLinePTG{\PredPhi{A}{\PhiAppl{\s}}}{\IH,(\ref{eqType1})}\label{eqType5}\\
		\NDLinePTG{\PredPhi{A}{\PhiAppl{\s}}\ofT\bool}{\ruleRef{validTyping},(\ref{eqType5})}\label{eqType5Tp}\\	
		\NDLinePTG{\PredPhiName{A}\ofT\PhiAppl{\A}\to\PhiAppl{\A}\to\bool}{\ruleRef{applType},(\ref{eqType5}),\ruleRef{applType},(\ref{eqType5}),(\ref{eqType5Tp})}\label{eqType12}\\
		\NDLinePTG{\termEqT{A}{\PhiAppl{\s}}{\PhiAppl{\tm}}\ofT\bool}{\ruleRef{appl},\ruleRef{appl},(\ref{eqType12}),(\ref{eqType3}),(\ref{eqType4})}\label{eqType13}\\
		\NDLinePTG{\boolPred{\left(\termEqT{A}{\PhiAppl{\s}}{\PhiAppl{\tm}}\right)}}{(\ref{PTTpBoolPred}),\ruleRef{refl},(\ref{eqType13})}\nonumber
	\end{align}

	\subparagraph{\ruleRef{psubI}:}
	\begin{align}
		\NDLineTG{\tm\ofT\A}{\byAss}\label{psubI1}\\
		\NDLineTG{\p\ t}{\byAss}\label{psubI2}\\
		\NDLinePTG{\PhiAppl{\tm}\ofT\PhiAppl{\A}}{\IH,(\ref{psubI1})}\label{psubI3}\\
		\NDLinePTG{\PredPhi{A}{\PhiAppl{\tm}}}{\IH,(\ref{psubI1})}\label{psubI4}\\
		\NDLinePTG{\PhiAppl{p}\ \PhiAppl{\tm}}{\IH,(\ref{psubI2})}\label{psubI5}\\
		\NDLinePTG{\PredPhi{\left(\subtype{\A}{\p}\right)}{\PhiAppl{\tm}}}{\ref{PTPSpred},\ruleRef{andI},(\ref{psubI4}),\ruleRef{andI},(\ref{psubI5}),(\ref{psubI5})}\nonumber\\
		\NDLinePTG{\PhiAppl{\tm}\ofT\PhiAppl{\subtype{\A}{\p}}}{\ref{PTPStype},(\ref{psubI3})}\nonumber
	\end{align}
	\subparagraph{\ruleRef{psubE1}:}
	\begin{align}
		\NDLineTG{\tm\ofT\subtype{\A}{\p}}{\byAss}\label{psubE1A}\\
		\NDLinePTG{\PhiAppl{\tm}\ofT\PhiAppl{\A}}{\IH,(\ref{psubE1A})}\nonumber\\
		\NDLinePTG{\PredPhi{A}{\PhiAppl{\tm}}\land\PhiAppl{\p}\ \PhiAppl{\tm}}{\IH,(\ref{psubE1A})}\label{psubE1IH2}\\
		\NDLinePTG{\PredPhi{A}{\PhiAppl{\tm}}}{\ruleRef{andEl},(\ref{psubE1IH2})}\nonumber
	\end{align}
	
	\subparagraph{\ruleRef{QI}:}
	\begin{align}
		\NDLineTG{\tm\ofT\A}{\byAss}\label{QIA1}\\
		\NDLineTG{\isEqRel{\r}}{\byAss}\label{QIA2}\\
		\NDLinePTG{\PhiAppl{\tm}\ofT\PhiAppl{\A}}{\IH,(\ref{QIA1})}\label{QIIH1}\\
		\NDLinePTG{\PhiAppl{\tm}\ofT\PhiAppl{(\quot{\A}{\r})}}{\ref{PTPSpred},(\ref{QIIH1})}\nonumber\\
		\NDLinePTG{\PredPhi{A}{\PhiAppl{\tm}}}{\IH,(\ref{QIA1})}\label{QIIH2}
		\intertext{As shown as Subsection~\ref{subsec:substLemma} (\ref{QIA2}) implies that $\PhiAppl{\r}$ is an equivalence on terms \x satisfying $\PredPhi{\A}{\x}$. It follows that $\PhiAppl{\r}\ \PhiAppl{\tm}\ \PhiAppl{\tm}$ holds.}
		\NDLinePTG{\PhiAppl{\r}\ \PhiAppl{\tm}\ \PhiAppl{\tm}}{explanation,(\ref{QIIH2})}\label{QIrtt}\\
		\NDLinePTG{\PhiAppl{\r}\ \PhiAppl{\tm}\ \PhiAppl{\tm} \land \PredPhi{A}{\PhiAppl{\tm}}\land\PredPhi{A}{\PhiAppl{\tm}}}{definition of $\land$,(\ref{QIrtt}),(\ref{QIIH2})}\label{QIpre}\\
		\NDLinePTG{\PredPhi{\left(\quot{\A}{\r}\right)}{\PhiAppl{\tm}}}{\ref{PTQpred},(\ref{QIpre})}\nonumber
	\end{align}

	\subsubsection{Term equality}\label{par:soundnessTmEq}
	Fix a context. 
	By rule \ruleRef{rewrite}, if we can show for two \dhole terms $\s,\tm\ofT\A$ that $\PhiAppl{\s}\termEquals{\PhiAppl{\A}}\PhiAppl{\tm}$ and additionally that $\PredPhi{A}{\PhiAppl{\s}}$, then $\PredPhi{A}{\PhiAppl{\tm}}$ and $\termEqT{A}{\PhiAppl{\s}}{\PhiAppl{\tm}}$ follow. 
	By rule \ruleRef{eqTyping} and rule \ruleRef{sym} we further yield $\PhiAppl{\s}\ofT\PhiAppl{\A}$ and $\PhiAppl{\tm}\ofT\PhiAppl{\A}$.
	This reduces the completeness claim for a term-equality $\s\termEquals{\A}t$ to showing $\PhiAppl{\s}\termEquals{\PhiAppl{\A}}\PhiAppl{\tm}$ and $\PredPhi{A}{\PhiAppl{\s}}$.
	
	Term equality can be shown using the rules  \ruleRef{congLam'}, \ruleRef{congAppl'}, \ruleRef{refl'}, \ruleRef{sym'}, \ruleRef{beta'}, \ruleRef{etaPi} and \ruleRef{QEq} in DHOL.
	
	\subparagraph{\ruleRef{congLam'}}
	This case will use (\ref{correct:substRelatTerms}). 
	
	{\small
		\begin{align}
			\NDLineTG{\A\typeEquals \Ap}{\byAss}\label{congLamp1}\\
			\NDLineT{\concatCtx{\ctx}{\x \ofT\A}}{\tm\termEquals{\typB}\termtp}{\byAss}\label{congLamp2}\\
			\NDLinePTG{\PhiAppl{\A}\typeEquals\PhiAppl{\Ap}}{\IH,(\ref{congLamp1})}\label{congLamp3}\\	
			\NDLinePT{\concatCtx{\concatCtx{\ctxT}{\x \ofT\PhiAppl{\A}}}{\typingAss{\A}{\x}}}{\termEqT{B}{\PhiAppl{\tm}}{\PhiAppl{\termtp}}}{\IH,(\ref{congLamp2})}\label{congLamp4}\\
			\NDLinePT{\concatCtx{\concatCtx{\ctxT}{\z \ofT\PhiAppl{\A}}}{\typingAss{\A}{\z}}}{\termEqT{B}{\subst{\PhiAppl{\tm}}{\x}{\z}}{\subst{\PhiAppl{\termtp}}{x}{\z}}}{$\alpha$-renaming,(\ref{congLamp4})}\label{congLamp5}\\
			\NDLinePT{\concatCtx{\ctxT}{\z \ofT\PhiAppl{\A}}}{\PredPhi{A}{\z}\impl\nonumber\\&\termEqT{B}{\subst{\PhiAppl{\tm}}{x}{\z}}{\subst{\PhiAppl{\termtp}}{x}{\z}}}{\ruleRef{implI},(\ref{congLamp5})}\label{congLamp5a}\\
			\NDLinePTG{\univQuant{z}{\PhiAppl{\A}} \PredPhi{A}{\z}\impl\nonumber\\&\termEqT{B}{\subst{\PhiAppl{\tm}}{\x}{\z}}{\subst{\PhiAppl{\termtp}}{x}{\z}}}{\ruleRef{forallI},(\ref{congLamp5a})}\label{congLamp5b}\\
			\NDLinePT{\concatCtx{\ctxT}{\x ,\y\ofT\PhiAppl{\A}}}{\univQuant{z}{\PhiAppl{\A}} \PredPhi{A}{\z}\impl\nonumber\\&\termEqT{B}{\subst{\PhiAppl{\tm}}{x}{\z}}{\subst{\PhiAppl{\termtp}}{x}{\z}}}{\ruleRef{varDed},\ruleRef{varDed},(\ref{congLamp5b})}\label{congLamp5c}\\
			\NDLinePT{\concatCtx{\ctxT}{\x ,\y\ofT\PhiAppl{\A}}}{\PredPhi{A}{\x}\impl\termEqT{B}{\PhiAppl{\tm}}{\PhiAppl{t'}}}{\ruleRef{varDed},\ruleRef{varDed},(\ref{congLamp5c})}\label{congLamp5d}\\
			\NDLinePT{\concatCtx{\concatCtx{\ctxT}{\x ,\y\ofT\PhiAppl{\A}}}{\namedass{xRy}{\termEqT{A}{\x}{\y}}}}{\PredPhi{A}{\x}\impl\termEqT{B}{\PhiAppl{\tm}}{\PhiAppl{\termtp}}}{\ruleRef{assDed},(\ref{congLamp5d})}\label{congLamp5e}\\
			\NDLinePT{\concatCtx{\concatCtx{\ctxT}{\x ,\y\ofT\PhiAppl{\A}}}{\namedass{xRy}{\termEqT{A}{\x}{\y}}}}{\PredPhi{A}{\x}}{(\ref{correct:substRelatTerms}),\ruleRef{assume},\ruleRef{assume}}\label{congLamp6}\\
			\NDLinePT{\concatCtx{\concatCtx{\ctxT}{\x ,\y\ofT\PhiAppl{\A}}}{\namedass{xRy}{\termEqT{A}{\x}{\y}}}}{\termEqT{B}{\PhiAppl{\tm}}{\PhiAppl{\termtp}}}{\ruleRef{implE},(\ref{congLamp5e}),(\ref{congLamp6})}\label{congLamp7}\\
			\NDLinePT{\concatCtx{\concatCtx{\ctxT}{\x ,\y\ofT\PhiAppl{\A}}}{\namedass{xRy}{\termEqT{A}{\x}{\y}}}}{\termEqT{B}{\PhiAppl{\tm}}{\subst{\PhiAppl{\termtp}}{\x}{\y}}}{(\ref{correct:substRelatTerms}),(\ref{congLamp7}),\ruleRef{assume}}\label{congLamp8}\\
			\NDLinePT{\concatCtx{\ctxT}{\x ,\y\ofT\PhiAppl{\A}}}{\termEqT{A}{\x}{\y}\impl \termEqT{B}{\PhiAppl{\tm}}{\subst{\PhiAppl{\termtp}}{\x}{\y}}}{\ruleRef{implI},(\ref{congLamp8})}\label{congLamp9}\\
			\NDLinePTG{\univQuant{x}{\PhiAppl{\A}}\univQuant{y}{\PhiAppl{\A}}\termEqT{A}{\x}{\y}\nonumber\\&\impl \termEqT{B}{\PhiAppl{\tm}}{\subst{\PhiAppl{\termtp}}{\x}{\y}}}{\ruleRef{forallI},\ruleRef{forallI},(\ref{congLamp9})}\label{congLamp10}\\
			\NDLinePT{\concatCtx{\concatCtx{\ctxT}{\x \ofT\PhiAppl{\A}}}{\typingAss{\A}{\x}}}{\PhiAppl{\tm}\ofT\PhiAppl{\typB}}{\IH,(\ref{congApplp2})}\label{congApplp11}\\
			\NDLinePT{\concatCtx{\concatCtx{\ctxT}{\x \ofT\PhiAppl{\A}}}{\typingAss{\A}{\x}}}{\PhiAppl{t'}\ofT\PhiAppl{\typB}}{\IH,(\ref{congApplp2})}\label{congApplp12}\\
			\intertext{Since in HOL typing is independent of context assumptions:}
			\NDLinePT{\concatCtx{\ctxT}{\x \ofT\PhiAppl{\A}}}{\PhiAppl{\tm}\ofT\PhiAppl{\typB}}{explanation,(\ref{congApplp11})}\label{congApplp13}\\
			\NDLinePT{\concatCtx{\ctxT}{\x \ofT\PhiAppl{\A}}}{\PhiAppl{\termtp}\ofT\PhiAppl{\typB}}{explanation,(\ref{congApplp12})}\label{congApplp14}\\
			\NDLinePTG{\lambdaFun{x}{\PhiAppl{\A}}\PhiAppl{\tm}\ofT\PhiAppl{\A}\to\PhiAppl{\typB}}{\ruleRef{lambda},(\ref{congApplp13})}\label{congApplp15}\\
			\NDLinePTG{\lambdaFun{x}{\PhiAppl{\A}}\PhiAppl{\termtp}\ofT\PhiAppl{\A}\to\PhiAppl{\typB}}{\ruleRef{lambda},(\ref{congApplp14})}\label{congApplp16}\\
			\NDLinePTG{\PhiAppl{\lambdaFun{x}{\A}\tm\termEquals{\piType{x}{\A}\typB}\lambdaFun{x}{\Ap}\termtp}\QNegSp}{\Quad(\ref{PTEq}),(\ref{congLamp10})}\nonumber\\
			\NDLinePTG{\PhiAppl{\lambdaFun{x}{\A}\tm}\ofT\PhiAppl{\piType{x}{\A}\typB}}{\Quad(\ref{PTPitype}),(\ref{PTLam}),(\ref{congApplp15})}\nonumber\\
			\NDLinePTG{\PhiAppl{\lambdaFun{x}{\A}\termtp}\ofT\PhiAppl{\piType{x}{\A}\typB}}{\Quad(\ref{PTPitype}),(\ref{PTLam}),(\ref{congApplp16})}\nonumber
		\end{align}
	}
	
	\subparagraph{\ruleRef{congAppl'}:}
	\begin{align}
		\NDLineTG{\tm\termEquals{\A}\termtp}{\byAss}\label{congApplp1}\\
		\NDLineTG{\termf\termEquals{\piType{x}{\A}\typB}\termfp}{\byAss}\label{congApplp2}\\
		\NDLinePTG{\termEqT{A}{\PhiAppl{\tm}}{\PhiAppl{\termtp}}}{\IH,(\ref{congApplp1})}\label{congApplp3}\\
		\NDLinePTG{\univQuant{x}{\PhiAppl{\A}}\univQuant{y}{\PhiAppl{\A}}\termEqT{A}{\x}{\y}\impl\nonumber\\& \termEqT{(\piType{z}{\A}\typB)}{\PhiAppl{\termf}\ \x}{\PhiAppl{\termfp}\ \y}}{\IH,(\ref{congApplp2})}\label{congApplp4}\\
		\NDLinePTG{\PhiAppl{\tm}\ofT\PhiAppl{\A}}{\IH,(\ref{congApplp1})}\label{congApplp5}\\
		\NDLinePTG{\PhiAppl{\termtp}\ofT\PhiAppl{\A}}{\IH,(\ref{congApplp1})}\label{congApplp6}\\
		\NDLinePTG{\termEqT{A}{\PhiAppl{\tm}}{\PhiAppl{\termtp}}\impl \termEqT{(\piType{z}{\A}\typB)}{\PhiAppl{\termf}\ \PhiAppl{\tm}}{\PhiAppl{\typeC{f'}}\ \PhiAppl{\termtp}}}{\ruleRef{forallE},\ruleRef{forallE},(\ref{congApplp4}),(\ref{congApplp5}),(\ref{congApplp6})}\label{congApplp7}\\
		\NDLinePTG{\PhiAppl{\termf}\ofT\PhiAppl{\A}\to\PhiAppl{\typB}}{\IH,(\ref{congApplp2})}\label{congApplp8}\\
		\NDLinePTG{\PhiAppl{\termfp}\ofT\PhiAppl{\A}\to\PhiAppl{\typB}}{\IH,(\ref{congApplp2})}\label{congApplp9}\\
		\NDLinePTG{\termEqT{(\piType{z}{\A}\typB)}{\PhiAppl{\termf}\ \PhiAppl{\tm}}{\PhiAppl{\termfp}\ \PhiAppl{\termtp}}}{\ruleRef{implE},(\ref{congApplp7}),(\ref{congApplp3})}\nonumber\\
		\NDLinePTG{\PhiAppl{\termf}\ \PhiAppl{\tm}\ofT\PhiAppl{\typB}}{\ruleRef{appl},(\ref{congApplp8}),(\ref{congApplp5})}\nonumber\\
		\NDLinePTG{\PhiAppl{\termfp}\ \PhiAppl{\termtp}\ofT\PhiAppl{\typB}}{\ruleRef{appl},(\ref{congApplp8}),(\ref{congApplp5})}\nonumber
	\end{align}
	
	\subparagraph{\ruleRef{refl'}:}
	\begin{align}
		\NDLineTG{\tm\ofT\A}{\byAss}\label{refl1}\\
		\NDLinePTG{\PhiAppl{\tm}\ofT\PhiAppl{\A}}{\IH,(\ref{refl1})}\label{refl2}\\
		\NDLinePTG{\PhiAppl{\tm}\termEquals{\PhiAppl{\A}}\PhiAppl{\tm}}{\ruleRef{refl},(\ref{refl2})}\nonumber\\
		\NDLinePTG{\PredPhi{A}{\PhiAppl{\tm}}}{\IH,(\ref{refl1})}\nonumber
	\end{align}
	
	\subparagraph{\ruleRef{sym'}:}
	\begin{align}
		\NDLineTG{\s\termEquals{\A}\tm}{\byAss}\label{sym1}\\
		\NDLinePTG{\termEqT{A}{\PhiAppl{\s}}{\PhiAppl{\tm}}}{\IH,(\ref{sym1})}\label{sym2}\\
		\NDLinePTG{\PhiAppl{\tm}\ofT\PhiAppl{\A}}{\IH,(\ref{sym1})}\label{sym3}\\
		\NDLinePTG{\PhiAppl{\s}\ofT\PhiAppl{\A}}{\IH,(\ref{sym1})}\label{sym4}\\
		\NDLinePTG{\termEqT{A}{\PhiAppl{\tm}}{\PhiAppl{\s}}}{\ruleRef{forallE},\ruleRef{forallE},\ruleRef{implE},(\ref{correct:relatSym}),(\ref{sym2}),(\ref{sym3}),(\ref{sym4})}\nonumber
	\end{align}
	
	\subparagraph{\ruleRef{beta'}:}
	\begin{align}
		\NDLineTG{(\lambdaFun{x}{\A}\s)\ \tm\ofT\typB}{\byAss}\label{beta1}\\
		\NDLinePTG{(\lambdaFun{x}{\PhiAppl{\A}}\PhiAppl{\s})\ \PhiAppl{\tm}\ofT\PhiAppl{\typB}}{\IH,\ref{PTLam},(\ref{beta1})}\label{beta2}\\
		\NDLinePTG{(\lambdaFun{x}{\PhiAppl{\A}}\PhiAppl{\s})\ \PhiAppl{\tm}\termEquals{\PhiAppl{\typB}}\subst{\PhiAppl{\s}}{x}{\PhiAppl{\tm}}}{\ruleRef{beta},(\ref{beta2})}\label{betaPConc}\\
		\NDLinePTG{(\lambdaFun{x}{\PhiAppl{\A}}\PhiAppl{\s})\ \PhiAppl{\tm}\termEquals{\PhiAppl{\typB}}\PhiAppl{\subst{s}{x}{\tm}}}{(\ref{correct:substTerm}),(\ref{betaPConc})}\label{betaConc}\\
		\NDLinePTG{\PhiAppl{(\lambdaFun{x}{\termC{A}}\s)\ \tm}\termEquals{\PhiAppl{\typB}}\PhiAppl{\subst{\s}{x}{\tm}}\QQNegSp}{\QQuad\ref{PTLam},\ref{PTappl},(\ref{betaConc})}\nonumber\\
		\NDLinePTG{\PredPhi{\left(\piType{x}{\A}\typB\right)}{\left((\lambdaFun{x}{\A}\s)\ \tm\right)}\QQNegSp}{\QQuad\IH,(\ref{beta1})}\nonumber
	\end{align}
	
	\subparagraph{\ruleRef{etaPi}:}
	\begin{align}
		\NDLineTG{\tm\ofT\piType{x}{\A}\typB}{\byAss}\label{etaPi1}\\
		\NDLinePTG{\PhiAppl{\tm}\ofT\PhiAppl{\A}\to\PhiAppl{\typB}}{\ref{PTPitype},\IH,(\ref{etaPi1})}\label{etaPi2}\\
		\NDLinePTG{\PhiAppl{\tm}\termEquals{\PhiAppl{\A}\to\PhiAppl{\typB}}\lambdaFun{x}{\PhiAppl{\A}}\PhiAppl{\tm}\ \x}{\ruleRef{eta},(\ref{etaPi2})}\label{etaPiConc}\\
		\NDLinePTG{\PhiAppl{\tm}\termEquals{\PhiAppl{\piType{x}{\A}\typB}}\PhiAppl{\lambdaFun{x}{\A}\tm\ \x}}{\ref{PTLam},\ref{PTPitype},(\ref{etaPiConc})}\nonumber\\
		\NDLinePTG{\PredPhi{\left(\piType{x}{\A}\typB\right)}{\PhiAppl{\tm}}}{\IH,(\ref{etaPi2})}\nonumber
	\end{align}

	\subparagraph{\ruleRef{psubEq}:}
	\begin{align}
		\NDLineTG{\s\termEquals{\A}\tm}{\byAss}\label{psubEqA1}\\
		\NDLineTG{\p\ \s}{\byAss}\label{psubEqA2}\\
		\NDLinePTG{\termEqT{A}{\PhiAppl{\s}}{\PhiAppl{\tm}}}{\IH,(\ref{psubEqA1})}\label{psubEqIH1}\\
		\NDLinePTG{\PhiAppl{\p}\ \PhiAppl{\s}}{\IH,(\ref{psubEqA2})}\label{psubEqIH2}
		\intertext{By (\ref{substLemAppl2}) it follows:}
		\NDLinePTG{\PhiAppl{\p}\ \PhiAppl{\tm}}{explanation,(\ref{psubEqA2}),(\ref{psubEqIH1})}\label{psubEqIH2'}\\
		\NDLinePTG{\termEqT{A}{\PhiAppl{\s}}{\PhiAppl{\tm}}\land\PhiAppl{\p}\ \PhiAppl{\s}\land\PhiAppl{\p}\ \PhiAppl{\tm}}{definition of $\land$,(\ref{psubEqIH1}),(\ref{psubEqIH2}),(\ref{psubEqIH2'})}\label{psubEqpre}\\
		\NDLinePTG{\termEqT{\left(\subtype{\A}{\p}\right)}{\s}{\tm}}{\ref{PTPSpred},(\ref{psubEqpre})}\nonumber\\
		\NDLinePTG{\PhiAppl{\s}\ofT\PhiAppl{(\quot{\A}{\p})}}{\ruleRef{validTyping},\ruleRef{andEl},\ruleRef{psubEqpre}}\nonumber
	\end{align}
	
	\subparagraph{\ruleRef{QEq}:}
	\newcommand{\NDLnQEq}[4]{\NDLinePT{\concatCtx{\ctxT}{#1}}{#2}{#3}\label{#4}}
	\begin{align}
		\NDLineTG{\s\ofT\A}{\byAss}\label{QEqA1}\\
		\NDLineTG{\tm\ofT\A}{\byAss}\label{QEqA2}\\
		\NDLineTG{\r\ofT\A\to\A\to\bool}{\byAss}\label{QEqA3}\\
		\NDLinePTG{\PhiAppl{\s}\ofT\PhiAppl{\A}}{\IH,(\ref{QEqA1})}\label{QEqIH1}\\
		\NDLinePTG{\PredPhi{A}{\PhiAppl{\s}}}{\IH,(\ref{QEqA1})}\label{QEqIH2}\\
		\NDLinePTG{\PhiAppl{\tm}\ofT\PhiAppl{\A}}{\IH,(\ref{QEqA2})}\label{QEqIH3}\\
		\NDLinePTG{\PredPhi{A}{\PhiAppl{\tm}}}{\IH,(\ref{QEqA1})}\label{QEqIH4}\\
		\NDLinePTG{\PhiAppl{\r}\ofT\PhiAppl{\A}\to\PhiAppl{\A}\to\bool}{\IH,(\ref{QEqA3})}\label{QEqIH5}\\
		\NDLinePTG{\PhiAppl{\r}\ \PhiAppl{\s}\ \PhiAppl{\tm}\ofT\bool}{\ruleRef{appl},\ruleRef{appl},(\ref{QEqIH5}),(\ref{QEqIH1}),(\ref{QEqIH3})}\label{QEqRstBool}\\
		\NDLinePTG{\PhiAppl{\r}\ \PhiAppl{\s}\ \PhiAppl{\tm}\termEqB \PhiAppl{\r}\ \PhiAppl{\s}\ \PhiAppl{\tm}}{\ruleRef{refl},(\ref{QEqRstBool})}\label{rstRefl}\\
		\NDLinePTG{\left(\PhiAppl{\r}\ \PhiAppl{\s}\ \PhiAppl{\tm}\land \PredPhi{A}{\PhiAppl{\s}}\land \PredPhi{A}{\PhiAppl{\tm}}\right)\termEqB \PhiAppl{\r}\ \PhiAppl{\s}\ \PhiAppl{\tm}}{definition of $\land$, \ruleRef{QEqIH2},(\ref{QEqIH4}),(\ref{rstRefl})}\label{QEqpre}\\
		\NDLinePTG{\PhiAppl{\s\termEquals{\left(\quot{\A}{\r}\right)}\tm}}{\ref{PTQpred},\ref{PTTpBoolPred},(\ref{QEqpre})}\nonumber\\
		\NDLinePTG{\PhiAppl{\s}\ofT\PhiAppl{\left(\quot{\A}{\r}\right)}}{\ref{PTQpred},(\ref{QEqIH3})}\nonumber
	\end{align}

	\subsubsection{Validity}
	Validity can be shown using the rules \ruleRef{axiom'}, \ruleRef{assume'}, \ruleRef{implI'}, \ruleRef{implE'}, \ruleRef{congDed'}, \ruleRef{boolExt'} and \ruleRef{psubE}.

	\subparagraph{\ruleRef{axiom'}}
	\begin{align}
		&\thyIn{\namedax{ax}{\termF}}{T}&&\text{\byAss}\label{axiom1}\\
		\NDLineT{}{\Ctx{\ctx}}{\byAss}\label{axiom2}\\
		&\thyIn{\namedax{ax}{\PhiAppl{\termF}}}{\PhiAppl{T}}&&\text{\ref{PTax},(\ref{axiom1}}\label{axiom3}\\
		\NDLinePT{}{\Ctx{\ctxT}}{\IH,\ref{axiom2}}\label{axiom4}\\
		\NDLinePTG{\PhiAppl{\termF}}{\ruleRef{axiom},(\ref{axiom3}),(\ref{axiom4})}\nonumber
	\end{align}
	
	\subparagraph{\ruleRef{assume'}}
	\begin{align}
		&\ctxIn{\namedass{ass}{\termF}}{\ctx}&&\text{\byAss}\label{assume1}\\
		\NDLineT{}{\Ctx{\ctx}}{\byAss}\label{assume2}\\
		&\ctxIn{\namedass{ass}{\PhiAppl{\termF}}}{\ctxT}&&\text{\ref{PTctxAss},(\ref{assume1}}\label{assume3}\\
		\NDLinePT{}{\Ctx{\ctxT}}{\IH,\ref{assume2}}\label{assume4}\\
		\NDLinePTG{\PhiAppl{\termF}}{\ruleRef{assume},(\ref{assume3}),(\ref{assume4})}\nonumber
	\end{align}
	
	\subparagraph{\ruleRef{implI'}}
	\begin{align}
		\NDLineTG{\termF\ofT\bool}{\byAss}\label{implI1}\\
		\NDLineT{\concatCtx{\ctx}{\namedass{ass}{\termF}}}{\termC{G}}{\byAss}\label{implI2}\\
		\NDLinePTG{\PhiAppl{\termF}\ofT\bool}{\IH,(\ref{implI1})}\label{implI3}\\
		\NDLinePT{\concatCtx{\ctxT}{\PhiAppl{\termF}}}{\PhiAppl{\termC{G}}}{\IH,\ref{PTctxAss},(\ref{implI2})}\label{implI4}\\
		\NDLinePTG{\PhiAppl{\termF}\impl\PhiAppl{\termC{G}}}{\ruleRef{implI},(\ref{implI3}),(\ref{implI4})}\label{implI5}\\
		\NDLinePTG{\PhiAppl{\termF\impl \termG}}{\ref{PTImpl},(\ref{implI5})}\nonumber
	\end{align}
	
	\subparagraph{\ruleRef{implE'}}
	\begin{align}
		\NDLineTG{\termF\impl \termG}{\byAss}\label{implE1}\\
		\NDLineTG{\termF}{\byAss}\label{implE2}\\
		\NDLinePTG{\PhiAppl{\termF}\impl \PhiAppl{\termG}}{\IH,\ref{PTImpl},(\ref{implE1})}\label{implE3}\\
		\NDLinePTG{\PhiAppl{\termF}}{\IH,(\ref{implE2})}\label{implE4}\\
		\NDLinePTG{\PhiAppl{\termG}}{\ruleRef{implE},(\ref{implE3}),(\ref{implE4})}\nonumber
	\end{align}
	
	\subparagraph{\ruleRef{congDed'}}
	\begin{align}
		\NDLineTG{\termF\termEqB \termFp}{\byAss}\label{congDed1}\\
		\NDLineTG{\termFp}{\byAss}\label{congDed2}\\
		\NDLinePTG{\PhiAppl{\termF}\termEqB\PhiAppl{\termFp}}{(\ref{PTTpBoolPred}),\IH,(\ref{congDed1})}\label{congDed3}\\
		\NDLinePTG{\PhiAppl{\termFp}}{\IH,(\ref{congDed2})}\label{congDed4}\\
		\NDLinePTG{\PhiAppl{\termF}}{\ruleRef{congDed},(\ref{congDed3}),(\ref{congDed4})}\nonumber
	\end{align}
	
	\subparagraph{\ruleRef{boolExt'}}
	\begin{align}
		\NDLineTG{\p\ \T}{\byAss}\label{boolExt1}\\
		\NDLineTG{\p\ \F}{\byAss}\label{boolExt2}\\
		\NDLinePTG{\PhiAppl{\p}\ \T}{\IH,\ref{PTappl},(\ref{boolExt1})}\label{boolExt3}\\
		\NDLinePTG{\PhiAppl{\p}\ \F}{\IH,\ref{PTappl},(\ref{boolExt2})}\label{boolExt4}\\
		\NDLinePTG{\univQuant{z}{\bool}\PhiAppl{\p}\ \z}{\ruleRef{boolExt},(\ref{boolExt3}),(\ref{boolExt4})}\label{boolExt5}\\
		\NDLinePT{\concatCtx{\ctxT}{\x \ofT\bool}}{\univQuant{z}{\bool}\PhiAppl{\p}\ \z}{\ruleRef{varDed},(\ref{boolExt5})}\label{boolExt6}\\
		\NDLinePT{\concatCtx{\ctxT}{\x \ofT\bool}}{\PhiAppl{\p}\ \x}{\ruleRef{forallE},(\ref{boolExt6}),\ruleRef{assume}}\label{boolExt7}\\
		\concatCtx{\concatCtx{\ctxT}{\x ,\y\ofT&\bool}}{\namedass{xEqy}{\x \termEqB \y}}\dedT \PhiAppl{\p}\ \x &&\text{\ruleRef{assDed},\ruleRef{varDed},\ruleRef{boolExt7}}\label{boolExt9}\\	
		\concatCtx{\concatCtx{\ctxT}{\x ,\y\ofT&\bool}}{\namedass{xEqy}{\x \termEqB \y}}\dedT \PhiAppl{\p}\ \y&&\text{\ruleRef{rewrite},\ruleRef{boolExt9},\ruleRef{assume}}\label{boolExt10}\\	
		\concatCtx{\ctxT}{\x ,\y\ofT&\bool}\dedT \termEqT{\bool}{\x}{\y}\impl\PhiAppl{\p}\ \y&&\text{(\ref{PTTpBoolPred}),\ruleRef{implI},\ruleRef{boolExt10}}\label{boolExt11}\\
		\NDLinePTG{\univQuant{x}{\bool}\univQuant{y}{\bool} \nonumber\\&\termEqT{\bool}{\x}{\y}\impl\PhiAppl{\p}\ \y}{\ruleRef{forallI},\ruleRef{forallI},(\ref{boolExt11})}\label{boolExtPre}\\
		\NDLinePTG{\PhiAppl{\univQuant{x}{\bool}\p\ \x}}{(\ref{PTEq}),(\ref{PTPipred}),(\ref{boolExtPre})}\nonumber
	\end{align}
	
		\subparagraph{\ruleRef{psubE}}
		\begin{align}
			\NDLineTG{\tm\ofT\subtype{\A}{\p}}{\byAss}\label{psubE1C}\\
			\NDLinePTG{\PredPhi{\left(\subtype{\A}{\p}\right)}{\PhiAppl{\tm}}}{\IH,(\ref{psubE1C})}\label{psubE2a}\\		
			\NDLinePTG{\PhiAppl{\p}\ \PhiAppl{\tm}}{\ruleRef{andEr},\ruleRef{andEr},\ref{PTPSpred},(\ref{psubE2a})}\label{psubE2C}\\
			\NDLinePTG{\PhiAppl{\p\ t}}{\ref{PTappl},(\ref{psubE2C})}\nonumber
		\end{align}
\end{proof}

\section{Soundness proof}\label{appendix:Sound}
The idea of the soundness proof is to transform HOL-proofs into \dhol-proofs.

The key idea is to transform a HOL-proof of $\PhiAppl{\termC{F}}$ into one only using terms in the well-typed image of the translation, at which point we can read off a (partial) DHOL-proof of $\termC{F}$. 
This DHOL proof will be missing sub-proofs to showing the well-typedness of terms occurring in the proof. 
Using the (assumed) well-typedness of the DHOL problem and induction, we can show that all terms occurring in the partial DHOL proof are in fact well-typed, yielding the existence of a complete DHOL proof. 
We proceed in multiple steps:

\begin{enumerate}
	\item prove that the translation is injective for terms of equal given \dhol type,
	\item define quasi-preimages for terms not in image of translation,
	\item given valid HOL derivation of translation of well-typed validity conjecture, choose \dhol types of quasi-preimages of terms in it,
	\item modify derivation to make terms in it (almost) proper,
	\item lift modified HOL derivation to \dhol derivation.
\end{enumerate}

\subsection{Type-wise injectivity of the translation}\label{sec:injectivity}

\begin{definition}
	Let \tm be an ill-typed \dhole term with well-typed image $\PhiAppl{\tm}$ in \hol. In this case we will say that $\PhiAppl{\tm}$ is a \emph{spurious} term w.rt. its preimage $\tm$. If the preimage is unique or clear from the context we will simply say that $\PhiAppl{\tm}$ is spurious. 
	Similarly, a term $\PhiAppl{\s}$ in HOL that is the image of a well-typed term $\s$, will be called \emph{proper} w.r.t its preimage $\s$. A term $tm$ in HOL that is not the image of any (well-typed or not) term is said to be \emph{improper}. 
\end{definition}

\begin{lemma}\label{lem:termwise-inj}
	Let $\theorycolor{\Delta}$ be a \dhole context and let $\ctx$ denote its translation.
	Given two \dhole terms $\s, \tm$ of type $\A$ and assuming $\s$ and $\tm$ are not identical, it follows that $\PhiAppl{\s}$ and $\PhiAppl{\tm}$ are not identical.
\end{lemma}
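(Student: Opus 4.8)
The plan is to prove the contrapositive-free form: if $s$ and $t$ are well-typed at the same type $A$ in $\ctxD$ and $\PhiAppl{s}$ and $\PhiAppl{t}$ are syntactically identical, then $s = t$. I would argue by induction on the combined syntactic size of $s$ and $t$, where the size counts \emph{every} symbol, including those occurring inside embedded type annotations; this accounting matters because the index terms of base types resurface inside the translation of an equality. Since the translation is defined by recursion on the term grammar and, with the sole exception of the equality case, sends each term former to the corresponding HOL term former, the outermost shape of $\PhiAppl{s}$ almost pins down the outermost shape of $s$. The proof is therefore a case analysis on the shape of $s$, using that $s$ and $t$ share the type $A$ to restrict the shape of $t$.

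The structural cases are routine. For a variable or constant, $\PhiAppl{s} = s$, so identical images force $s = t$. For an application $s = f\,s'$, the image $\PhiAppl{f}\,\PhiAppl{s'}$ is a HOL application, so $t$ must also translate to an application; as HOL application is syntactically injective, the heads and arguments of the two images coincide, and the induction hypothesis (applied at the shared domain and codomain types inherited by the subterms) yields $f = f'$ and $s' = t'$. Implications are handled identically, splitting on $\Rightarrow$.

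The delicate case is equality, where the translation is not structure-preserving: $\PhiAppl{s_1 \termEquals{B} s_2}$ is $\PredPhiName{B}$ applied to $\PhiAppl{s_1}, \PhiAppl{s_2}$, and both the head and the leading arguments of $\PredPhiName{B}$ depend on $B$. Here I would use that $A = \bool$ (every equality is Boolean) and that $t$, being of type $\bool$, is again a constant, variable, application, implication, or equality. The generated relation symbols $\PredPhiName{a}$, the connective $\termEqB$, and the $\forall$-shape produced for function-typed equalities are pairwise distinguishable, and are distinct from the heads of translated constants, implications, and user-level applications; this rules out the cross-shape collisions. When both $s$ and $t$ are equalities at base types $b\,\vec{u}$ and $b'\,\vec{v}$, the images share a head only if $b = b'$, and comparing the leading arguments $\PhiAppl{u_i}$ with $\PhiAppl{v_i}$ and appealing to the induction hypothesis on the index terms --- taken at the declared argument types of $b$ and processed left to right, so that the relevant types genuinely match --- forces $\vec{u} = \vec{v}$, whence $B = B'$; a final use of the induction hypothesis on the two sides of each equality gives $s = t$.

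The main obstacle is that the translation \emph{erases type-index information}: both the indices of base types and the domain annotation of a $\lambda$-abstraction are discarded by $\PhiAppl{\cdot}$ (for instance $\PhiAppl{\lambda x{:}(a\,u).\,s'} = \lambda x{:}a.\,\PhiAppl{s'}$ forgets $u$). Consequently the distinctness of $s$ and $t$ need not be visible in $\PhiAppl{s}, \PhiAppl{t}$ and must instead be recovered from the shared type $A$ via the typing judgment --- this is precisely where the hypothesis that $s$ and $t$ have the same type is indispensable, and the reason the statement is indexed by the type rather than stated on raw syntax. For binders I would show that the common type $A$ determines the domain annotation up to type equality, so that two $\lambda$-abstractions of type $A$ with identical body-images agree up to the $\teq$-class of their annotation, which is the level at which DHOL already identifies types. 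The complication to watch is that, under subtyping, a term no longer has a unique type, so one cannot invoke a unique-typing property as in the conservativity argument; instead I would exploit the \emph{given} type $A$ together with the normal form of Thm.~\ref{thm:normalizing} to reduce the annotation-matching step to the underlying $\Pi$-type, where the shared domain pins the two annotations down up to $\teq$.
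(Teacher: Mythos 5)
Your overall strategy coincides with the paper's: a structural induction (the paper phrases it as a primary induction on the type indexing the equality, with a subinduction on the translated terms; your combined-size measure counting annotation symbols is an equivalent way to organize the same descent), a routine treatment of the structure-preserving term formers, and a recursion into the indexing type for the one non-structure-preserving case, namely equality. Processing the base-type indices left to right so that the induction hypothesis is invoked at matching types is exactly right. The problem is that in the delicate equality case you only carry the argument through for two equalities indexed by \emph{base} types. The paper's proof must also, and does, treat two equalities indexed by distinct types that erase to the same HOL type when those types are $\Pi$-types (split on whether the domains or the codomains differ; in the first case the typing guards after the two universal quantifiers differ, in the second one recurses into the inner equality at the codomain), by distinct refinement types $\subtype{\A}{\p}$ versus $\subtype{\Ap}{\pp}$ (if $\p\neq\pp$ the conjuncts $\PhiAppl{\p}\ \s$ differ syntactically; if $\A\neq\Ap$ one recurses into the first conjunct), and by distinct quotient types. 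Your remark that the various head shapes are ``pairwise distinguishable'' only rules out cross-shape collisions; it says nothing about two equalities of the \emph{same} shape over different $\Pi$-, refinement-, or quotient-types, and the refinement/quotient cases are precisely the new content this lemma must absorb in the extended language. As written, the proposal omits the part of the proof that is specific to this paper.

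On the $\lambda$-case you are, if anything, more candid than the paper: erasure of the domain annotation really does mean that $\lambdaFun{x}{\A}\s$ and $\lambdaFun{x}{\Ap}\s$ with $\A\neq\Ap$ but $\PhiAppl{\A}=\PhiAppl{\Ap}$ have identical images, and the shared type only pins the annotations down up to $\typeEquals$. But note that this means your argument establishes agreement of $\s$ and $\tm$ only up to the $\typeEquals$-class of binder annotations, which is strictly weaker than the syntactic identity the lemma asserts (and which the soundness proof later relies on when it speaks of a \emph{unique} quasi-preimage). The paper sidesteps this by silently restricting to ``lambda functions over the same type''; you should either do the same explicitly or restate the lemma at the weaker level you can actually reach, rather than leaving the discrepancy implicit.
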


\begin{proof}[Proof of Lemma~\ref{lem:termwise-inj}]
	We prove this by induction on the shape of the types both equalities are over\,---\,in case both terms are equalities\,---\,and by subinduction on the shape of the two translated terms otherwise.
	We observe that terms created using a different top-level production are non-identical and will remain that way in the image. 
	So we can go over the productions one by one and assuming type-wise injectivity for subterms show injectivity of applying them.
	Different constants are mapped to different constants and different variables to different variables, so in those cases there is nothing to prove. 
	If two function applications or implications differ in \dhole then one of the two pairs of corresponding arguments must differ as well. By induction hypothesis so will the images of the terms in that pair. Since function application and implication both commute with the translation, it follows that the images of the function applications or implications also differ. 
	Since the translations of the terms on both sides of an equality also show up in the translation, the same argument also works for two equalities over the same type. Similarly for lambda functions
	over the same type.
	
	Consider now two equalities over different types that get identified by dependency-erasure. 
	
	In case of equalities over different base types, the typing relations that are applied in the images are different, so the images of the equalities differ.
	For equalities over different $\Pi$-types either the domain type or the codomain type must differ by rule \ruleRef{congPi}. 
	If the domain types differ then the typing assumption after the two universal quantifiers of the translated equalities will differ. 
	If the codomain types are different then the applications of the typing relations on the right of the $\impl$ of the translated equalities are the translations of the equalities yielded by applying the functions on both sides of the equalities to a freshly bound variable of the domain type.
	The translations of the equalities are only identical if those "inner  equalities" are identical. Furthermore, the inner equalities are over types that are the codomain of the type the equalities are over. The claim then follows from the induction hypothesis. 
	
		Finally it remains to consider the case of equalities $\s\termEquals{\subtype{\A}{\p}}\tm$ and $\sp\termEquals{\subtype{A'}{\pp}}\termtp$ over non-identical refinement types $\subtype{\A}{\p}$ and $\subtype{A'}{\pp}$ where not both $\A=\Ap$ and $\p=\pp$. 
		If $\p\neq \pp$, then the translations have different subterms $\PhiAppl{p}\ \s$ and $\PhiAppl{\pp}\ \sp$ and thus differ.
		If $\A\neq \Ap$, then the first conjuncts in the translated equalities are the translations of equalities over the types $\A$ and $\Ap$ respectively, which by the induction hypothesis have different translations.
		So in any case, the equalities have different images.
		The case of equalities over quotient types works analogously.
\end{proof}

\subsection{Quasi-preimages for terms and validity statements in admissible HOL derivations}\label{sec:quasi-preimages}
Firstly, we will consider the preimage of a typing relations $\PredPhiName{A}$ to be the equality symbol $\lambdaFun{x}{\A}\lambdaFun{y}{\A} \x\termEquals{\A}\y$ (if equality is treated as a (parametric) binary predicate rather than a production of the grammar this eta reduces to the symbol $\termEquals{\A}$). 

Using this convention, we define the normalization of an improper HOL term, which is either a proper term or a spurious term.
The normalization of an improper HOL term is defined by:
\begin{definition}\label{defn:normalization}
	Let $\tm$ be an improper HOL term. Then we define the replacement $\norm{t}$ of $\tm$ by first matching line below:
	
	\begin{align*}
		\norm{\PhiAppl{\tm}}&:=\tm\plabel{PTnormProper}\\
		\norm{\norm{s}}&:=\norm{s}\plabel{PTnormNorm}\\	
		\norm{\PredPhiName{A}\ \s}&:=\lambdaFun{y}{\PhiAppl{\A}} \termEqT{A}{s}{y}\plabel{normRelAppl}\\
		\norm{\PredPhiName{A}}&:=\lambdaFun{x}{\PhiAppl{\A}}\lambdaFun{y}{\PhiAppl{\A}} \termEqT{A}{x}{y}\plabel{normRel}\\
		\norm{c}&:=\constname{c}\plabel{normConst}\\
		\norm{x}&:=\x\plabel{normVar}\\
		\norm{f\ \tm}&:=\norm{f}\ \norm{t}\plabel{normAppl}\\
		\norm{\lambdaFun{x}{\typC}\tm}&:=\lambdaFun{x}{\typC}\norm{t}\plabel{normLam}\\
		\intertext{If $\termC{F}$ not of shape $\PredPhiName{\typeC{A}} \termC{x'}\ \x\impl \_ $ (for $\termC{x'}$ a variable bound in a universal quantifier whose body is this term) or of shape $\univQuant{x'}{\PhiAppl{\typeC{A}}}\termEqT{\typeC{A}}{x}{x'}\impl\_$:}
		\norm{\univQuant{x}{\PhiAppl{\typeC{A}}}\PredPhi{\A}{x}\implC \termF}&:=\univQuant{x,x'}{\PhiAppl{\typeC{A}}}\termEqT{\typeC{A}}{x}{x'}\implC \norm{\termF}\plabel{normAltTransUniv}\\
		\norm{\univQuant{x}{\PhiAppl{\typeC{A}}}\termF}&:=\univQuant{x,x'}{\PhiAppl{\typeC{A}}}\termEqT{\typeC{A}}{x}{x'}\implC \norm{\termF}\plabel{normUniv}
		\intertext{Otherwise $\univQuant{x}{\PhiAppl{\typeC{A}}}\termF$ is the translation of a universal quantifier and (\ref{PTnormProper}) applies.}
		\norm{\s\termEquals{\PhiAppl{\A}}\tm}&:=\termEqT{A}{s}{\tm} \plabel{normEq}\\
		\norm{s\impl \tm}&:=\norm{s}\impl \norm{t} \plabel{normImpl}
	\end{align*}
	For terms $\tm$ in the image of the translation, we define the normalization of $\tm$ be be $\tm$ itself.
\end{definition}

\begin{definition}	
	Assume a well-formed \dhole theory $\theorycolor{T}$. 
	
	We say that an HOL context $\theorycolor{\Delta}$ is \emph{proper} (relative to $\PhiAppl{\theorycolor{T}}$), iff there exists a well-formed HOL context  $\theorycolor{\Theta}$ (relative to $\PhiAppl{\theorycolor{T}}$), s.t. there is a well-formed \dhole context $\ctx$ (relative to $\theorycolor{T}$) with $\ctxT=\theorycolor{\Theta}$ and $\theorycolor{\Theta}$ can be obtained from $\theorycolor{\Delta}$ by adding well-typed typing assumptions.
	In this case, $\ctx$ is called a \emph{quasi-preimage} of $\theorycolor{\Delta}$.
	Inspecting the translation, it becomes clear that $\ctx$ is uniquely determined by the choices of the preimages of the types of variables without a typing assumption in $\theorycolor{\Delta}$.
	
	Given a proper HOL context $\theorycolor{\Delta}$ and a well-typed HOL formula $\termC{\phi}$ over $\theorycolor{\Delta}$, we say that $\termC{\phi}$ is \emph{quasi-proper} iff $\norm{\phi}=\PhiAppl{\termF}$ for $\ctx\dedT \termF:\bool$ 
	and $\ctx$ is a quasi-preimage of $\theorycolor{\Delta}$.
	In that case, we call $\termF$ a \emph{quasi-preimage} of $\termC{\phi}$.
	\par
	
	Finally, we call a validity judgement $\theorycolor{\Delta}\dedPT \termC{\phi}$ in HOL \emph{proper} iff\begin{enumerate}
		\item $\theorycolor{\Delta}$ is proper,
		\item $\termC{\phi}$ is quasi-proper in context $\theorycolor{\Delta}$
	\end{enumerate}
	In this case, we will call $\ctxT\dedPT \PhiAppl{\termF}$ a relativization of $\theorycolor{\Delta}\dedPT \termC{\phi}$ and $\ctx\dedT \termF$ a \emph{quasi-preimage} of the statement $\theorycolor{\Delta}\dedPT \termC{\phi}$, where $\ctx$ is a quasi-preimage of $\theorycolor{\Delta}$ and $\termF$ a quasi-preimage of $\termC{\phi}$. 
	Additionally, for HOL terms with preimages we consider these preimages to be quasi-preimages of the HOL term as well.
\end{definition}

\subsection{Transforming HOL derivations into admissible HOL derivations}\label{sec:prfTransform}
It will be useful to distinguish between two different kinds of improper terms.
\begin{definition}
	An improper term is called \emph{almost proper} iff its normalization isn't spurious (w.r.t. a given quasi-preimage) and contains no spurious subterms, otherwise it is said to be \emph{unnormalizably spurious}.
	This means that improper terms are almost proper iff their given quasi-preimage is well-typed.
	Since proper terms have well-typed preimages, they are almost proper (w.r.t. this preimage) as well.
\end{definition}
In order to lift a HOL derivation to DHOL, we first have to choose (quasi)-preimage types for all term occuring in it (at which point we use the notions of spurious and almost proper terms w.r.t. these DHOL types).

\subsubsection{Choosing (quasi-)preimage types for a HOL derivation}

\begin{lemma}[Indexing lemma]\label{lem:indexing}
	Assume that $\ctx \dedT \termF:\bool$ holds in DHOL. Given a valid HOL derivation $D$ of the statement $\ctxT\dedPT \PhiAppl{\termF}$, we can choose a DHOL type $T(\tm)$ (called type index) for each occurence of a HOL term $\tm$ in $D$, s.t. the following properties hold:
	\begin{enumerate}
		\item $T(\PhiAppl{\tm})=\Ap$ with $\PhiAppl{\A}=\PhiAppl{\Ap}$ for any DHOL term $\tm$ satisfying $\judg \tm:\A$,
		\item $T(\constname{c})=\A$ if $\constname{c}:\A$ is a constant in $\thy$,
		\item $T(\x)=\A$ if $\x:\A$ is variable declaration in $\ctx$,
		\item $T(\s)=T(\tm)$ for $\s, \tm$ within an equality of the form $\s\termEquals{\A}\tm$ for some HOL type $\A$,
		\item $T(\s)=T(\tm)=\bool$ for $\s, \tm$ within an implication of the form $\s\implC\tm$,
		\item $T(\x)=\A$ for $\x$ in $\left(\lambdaFun{x}{\B}\s\right)\ \tm$ if $T(\tm)=\A$,
		\item $T(\s\termEquals{\A}\tm)=\bool$,
		\item when variables are moved from the context into a $\lambda$-binder or vice versa the index of said variable is preserved
		\item whenever a term $\tm$ occurs both in the assumptions and conclusions of a step $S$ in $D$, the index of $\tm$ is the same all those occurrences of $\tm$ in $S$,
		\item if $\varname{x}, \termC{t}$ in $\lambdaFun{x}{\PhiAppl{\typeC{B}}}\tm$ satisfy $T(\varname{x})=\typeC{A}$ and $T(\termC{t})=\typeC{B}$, then $T(\lambdaFun{x}{\typeC{B}}\termC{t})=\piType{x}{A}\typeC{B}$,
		\item $T(\tm)=\A$ implies $\tm$ has type $\PhiA$ and $\A$ is non-empty.
	\end{enumerate}
\end{lemma}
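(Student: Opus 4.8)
The plan is to construct the type-index function $T$ by induction on the HOL derivation $D$, working from its root toward the leaves. I first split the eleven conditions into three roles: conditions~1--3 are \emph{anchors} (they pin the index of a proper term, a constant, or a context variable), conditions~4--7 and~10 are \emph{structural} (they force how the index of a compound term built by equality, implication, a $\beta$-redex binder, or a $\lambda$ relates to the indices of its immediate constituents), and conditions~8, 9 are \emph{coherence} constraints tying together the occurrences that a single inference rule shares between its premises and conclusion. Condition~11 is a global soundness constraint to be maintained throughout. The key observation is that the DHOL typing rules \ruleRef{appl'}, \ruleRef{lambda'}, \ruleRef{eqType'}, \ruleRef{implType'} propagate types in exactly the pattern demanded by the structural conditions, so a DHOL typing derivation is already an indexing in the required sense.

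First I would anchor the root. Since $\ctx\dedT\termF\ofT\bool$ holds by hypothesis, I fix a DHOL typing derivation of $\termF$ and assign to each subterm-occurrence of $\PhiAppl{\termF}$ the DHOL type that the corresponding subterm of $\termF$ carries there; this gives an indexing of the conclusion $\ctxT\dedPT\PhiAppl{\termF}$ satisfying conditions~1--7 and~10 by the remark above, and conditions~2, 3 by the lookup rules. I then prove, by induction on $D$, the strengthened statement that \emph{any} indexing of the terms in the conclusion of a subderivation satisfying conditions~1--11 extends to all of that subderivation. For the inductive step I case-split on the final (HOL) rule. The purely structural rules (\ruleRef{lambda}, \ruleRef{appl}, \ruleRef{beta}, \ruleRef{eta}, \ruleRef{refl}, \ruleRef{implI}, \ruleRef{implE}, \ruleRef{boolExt}, and the congruences) either keep premise terms as subterms of conclusion terms or relate them by an equality/implication whose index is already fixed by conditions~4, 5; in each case I transport the existing index to the shared occurrence, which is what condition~9 demands, and read off the remaining indices structurally. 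Rules that introduce a genuinely new term --- \ruleRef{congDed} with its auxiliary $\termC{F'}$, and \ruleRef{sym}, \ruleRef{congLam}, \ruleRef{congAppl} with their primed companions --- are handled by indexing the new term bottom-up from its syntax: anchoring at constants and variables, at proper subterms via condition~1, and building up through application, $\lambda$, equality, and implication via the structural conditions. These fresh choices are harmless precisely because a new term shares occurrences with the old one only through an equality, where condition~4 forces the common index.

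I would maintain condition~11 in parallel. The ``has type $\PhiA$'' half is immediate, since indices are only ever attached to terms of the matching HOL type. For non-emptiness I would invoke the global hypothesis that every type symbol is inhabited, together with the modified \ruleRef{nonempty} rule guaranteeing that each HOL type in the image of the translation has \emph{some} non-empty DHOL preimage: whenever a choice of index is free I pick a non-empty one, using the latitude in condition~1 (which only requires $\PhiAppl{\A}=\PhiAppl{\Ap}$, not $\A=\Ap$) to replace a possibly empty anchoring type by a HOL-equivalent non-empty one. The main obstacle is the interaction of coherence (condition~9) with dependency erasure: a single HOL type $\PhiAppl{\A}$ is the erasure of many DHOL types, so an equality $\s\termEquals{\PhiAppl{\A}}\tm$ occurring in $D$ must receive one DHOL index even though its HOL type does not determine it. I expect to resolve this by letting the anchored proper occurrences and the structural propagation jointly pin the index down, and by appealing to termwise injectivity (Lemma~\ref{lem:termwise-inj}) to rule out that two originally distinct DHOL equalities have been silently identified, so that the index assigned to each equality occurrence is in fact forced and globally consistent.
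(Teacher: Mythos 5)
Your proposal is correct and follows essentially the same route as the paper's proof: a backwards induction on the derivation, anchoring the indices at the root via the assumed DHOL typing derivation of $\termF$, trivially propagating indices through rules whose premises only contain (sub)terms of the conclusion, and treating separately the rules that introduce genuinely new terms (where the new terms are Boolean and can be indexed from any quasi-preimage, or are fresh variables indexed by an inhabited preimage type using the inhabitation assumption). The only cosmetic differences are that you invoke Lemma~\ref{lem:termwise-inj} for global coherence and classify a few congruence rules as ``new-term'' cases that the paper dismisses as trivial; neither changes the substance of the argument.
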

\begin{proof}[Proof by induction of the shape of $D$]
	This lemma only holds for well-formed derivations of translations of well-typed conjectures over well-formed theories. It will not hold for arbitrary formulae (as can be seen by considering equalities between constants of equal HOL but different DHOL types). The proof of the lemma will therefore use that fact that the final statement in the derivation is the translation of a well-typed DHOL statement (which already determines the "correct" indices for the terms within that statement) and then show that for each step in a well-formed HOL proof concluding a statement that we can correctly index, the assumptions of that step can also be indexed correctly. We will thus proceed by "backwards induction" on the shape of the derivation $D$.
	
	The assumptions of the theory and contexthood rules only contain terms already contained in the conclusions, so the associated cases in the proof are all trivial.
	
	Similarly the assumptions of lookup rules, type well-formedness and type equality rules contain no additional terms, so those cases are also trivial.
	
	The typing rules are about forming larger terms from subterms, so if those larger terms can be consistently (i.e. according to the claim of the lemma) indexed, then the same is necessarily also true for the subterms. Thus the typing rules also have only trivial cases. 
	By the same arguments the cases for the congruence rules \ruleRef{congLam}, \ruleRef{congAppl}, the symmetry and transitivity rules for term equality and the rules \ruleRef{beta}, \ruleRef{eta}, \ruleRef{implType} and \ruleRef{implI} are also all trivial.
	
	It remains to consider the cases for the rules \ruleRef{implE}, \ruleRef{congDed}, \ruleRef{boolExt} and \ruleRef{nonempty}.
	
	\paragraph{Regarding \ruleRef{implE}:}
	Here the assumptions of the rule contain the additional terms $\termF$ and $\termF\implC \termC{G}$. However as both terms are of type $\bool$ all their (quasi)-preimages have type $\bool$ as well and picking indices according to any of the quasi-preimages of $\termF\implC \termC{G}$ will work.
	
	\paragraph{Regarding \ruleRef{congDed}:}
	Here the assumptions of the rule contain the additional terms $\termC{F'}$ and $\termF\termEqB \termC{F'}$. Both terms are of type $\bool$, so by the same argument as in the previous case, we can index them consistently with the claim of this lemma.
	
	\paragraph{Regarding \ruleRef{boolExt}:}
	Here the assumptions of the rule contain the additional terms $\p\ \T$ and $\p\ \F$ and $\T$ and $\F$. All these terms are of type $\bool$, so by the same argument as in the previous two cases, we can index them consistently with the claim of this lemma.
	
	\paragraph{Regarding \ruleRef{nonempty}:}
	Here the second assumption contains an additional variable of type $\A$. As this variable doesn't occur in any other terms, we can index it by the type of any of its (quasi)-preimages. 
	By assumption, the DHOL theory is partially inhabited, so we can chose this type index to be inhabited.
\end{proof}
\begin{rem}
	In the following, we will use the term \emph{type index} to refer to a choice of DHOL types for each HOL term in a derivation satisfying the properties of the previous lemma.
\end{rem}

\subsubsection{Transforming unnormalizably spurious terms into almost proper terms in HOL derivations}
\begin{definition}\label{defn:admissibleDerivation}
	A valid HOL derivation is called \emph{admissible} iff we can choose quasi-preimages for all terms occurring in it s.t. all terms in the derivation are almost proper w.r.t. their chosen quasi-preimages.
\end{definition}
This definition is useful, since admissible derivations are precisely those HOL derivations that allow us to consistently lift the terms occurring in them to well-typed DHOL terms.

In the following, we describe a proof transformation which maps HOL derivations to admissible HOL derivations. 

\begin{definition}\label{defn:statement-transformation}
	A \emph{statement transformation} in a given logic is a map that maps statements in the logic to statements in the logic.
	Similarly an \emph{indexed statement transformation} is a map that maps HOL statements with indexed terms to HOL statements.
\end{definition}
\begin{definition}\label{defn:macro-step}
	A \emph{macro-step} $M$ \emph{for} an (indexed) statement transformation \emph{$T$} \emph{replacing} a step \emph{$S$} in a derivation is a sequence of steps $S_1,\ldots, S_n$ (called \emph{micro-steps} of $M$) s.t. the assumptions of the $S_i$ that are not concluded by $S_j$ with $j<i$ are results of applying $T$ to assumptions of step $S$ and furthermore the conclusion of step $S_n$ is the result of applying $T$ to the conclusion of $S$. The assumptions of those $S_j$ that are not concluded by previous micro-steps of $M$ are called the \emph{assumptions of macro-step} $M$ and the conclusion of the last micro-step $S_n$ of $M$ is called the \emph{conclusion of macro-step} $M$. 
\end{definition}
Thus we we can replace each step in a derivation by a macro step replacing that step, we can transform that derivation to a derivation in which the given indexed statement transformation is applied to all statements. This is useful to simplify and normalize derivations to derivations with certain additional properties. 
In our case, we want to normalize a given HOL derivation into an admissible HOL derivation. Thus, we need to define an indexed statement transformation for which all terms in the image of the transformation are almost proper and the replace all steps in the derivation by macro steps for that statement transformation.

Since the notion of a quasi-proper term only makes sense once we fix a choice of type indices (in the sense of Lemma~\ref{lem:indexing}), the indexed statement transformation will actually depend on the choice of type indices. 

The idea for our transformation is that given a choice of type indices according to Lemma~\ref{lem:indexing}, the only remaining "source" of spuriousness are ill-typed function applications. There are two cases of such function applications, either they are applications that can be beta reduced to non-spurious terms or they can not. As it turns out in the second case the proof cannot meaningfully depend on that term, so we can simply replace it with a proper one. The below definition simply spells this out in detail.

\begin{definition}\label{defn:normalizingPrfTransform}
	A \emph{normalizing statement transformation} $\sRed{\cdot}$ is defined to be an indexed statement transformation that replaces terms in statements as described below. 
	The definition of the transformation of a term depends on its type index\,---\,a \dhole-type $\A$ (called \emph{preimage type})\,---\,for each term $\tm$.
	We will write those types as indices to the HOL terms, so for instance $\tm_{\A}$ indicates a HOL term $\tm$ of type $\PhiAppl{\A}$ and preimage type $\A$. 
	
	These preimage types are used to effectively associate to each term a type of a possible quasi-preimage (hence their name), which is useful as for $\lambda$-functions there are quasi-preimages of potentially many different types.
	We require that for an indexed term $\tm_{\A}$, term $\tm$ has type $\PhiAppl{\A}$ and that for almost proper terms $\tm_{\A}$ with unique quasi-preimage the quasi-preimage has type $\A$. 
	
	Since variables and lambda binders are the sole cause for HOL terms having multiple quasi-preimages, choosing indices for variables in HOL terms induces unique quasi-preimages (respecting those type indices). This uniqueness is a direct consequence of (the proof of) Lemma~\ref{lem:termwise-inj}.
	
	We will consider only those quasi-preimages that respect type indices, for the notions of unnormalizably spurious and almost proper terms.
	
	With respect to these choices, the transformation will do the following two things (in this order) in order to "normalize" unnormalizably spurious terms to almost proper ones:
	\begin{enumerate}
		\item apply beta and eta reductions and in case this doesn't yield almost proper terms
		\item replace unnormalizably spurious function applications of type $\typB$ by the "default terms" $\defaultTerm{\typB}$ of type $\PhiAppl{\typB}$, defined as the translation of a fixed DHOL term of type $\typB$ (whose existence follows from the last property of Lemma~\ref{lem:indexing}).
	\end{enumerate}
	
	As we are assuming a valid HOL derivation indexed according to Lemma~\ref{lem:indexing}, we will only define this transformation on well-typed HOL terms with preimage types consistent with the indexing lemma.
	We can then define the transformation of $\tm_{\A}$ (denoted by $\sRed{\tm_{\A}}$) by first matching line for $\tm_{\A}$ as follows:
	{\begin{align}
			&\sRed{\tm_{\A}}&:=&\tm_{\A}&&\text{if $\tm$ has quasi-preimage of type $\A$}\sredlabel{sredP}\\
			&\sRed{\termf_{\piType{x}{\A}\typB}\ \tm_{\A}}\negSp\!\!&:=&\sRed{\sRed{\termf_{\piType{x}{\A}\typB}}\ \sRed{\tm_{\A}}}\QQQQNegSp&&\nonumber\\& &&
			&&\text{if $\termf_{\piType{x}{\A}\typB}\ \tm_{\A}$ not beta or eta reducible}\sredlabel{sredApplP}
	\end{align}}
	In the following cases, we assume that the term $\tm_{\A}$ in $\sRed{\cdot}$ on the left of $:=$ isn't almost proper with a quasi-preimage of type $\A$: 
	{\small
		\begin{align}
			&\sRed{\tm_{\A}}&:=&\sRed{\betaEtaRed{\tm_{\A}}} \quad\negSp\text{if $\tm$ is beta or eta reducible}\sredlabel{sredBetaEtaRed}\\
			&\sRed{\s_{\A}\termEquals{\PhiAppl{\A}} \tm_{\Ap}}&:=&\sRed{\s_{\A}}\termEquals{\PhiAppl{\A}}\sRed{\tm_{\A}}\sredlabel{sredEq}\\
			&\sRed{\termF_{\bool}\impl \termC{G}_{\bool}}&:=&\sRed{\termF_{\bool}}\impl \sRed{\termC{G}_{\bool}}\QQNegSp\QQNegSp\QQNegSp\sredlabel{sredImpl}\\
			&\sRed{\lambdaFun{x}{\A}\s_{\typB}}&:=&\lambdaFun{x}{\A}\sRed{\s_{\typB}}\sredlabel{sredLam}\\
			&\sRed{\left(\termf_{\piType{x}{\A}\typB}\ \tm_{\Ap}\right)_{\Bp}}&:=&\defaultTerm{\typB} \qquad\qquad\negSp\text{if $\A\neq \Ap$ or $\typB\neq \Bp$\negSp} \sredlabel{sredAppl}\\
			&\sRed{\subst{\tm_{\A}}{\x_{\typB}}{\s_{\typB}}}&:=&\subst{\sRed{\tm_{\A}}}{\x}{\sRed{\s_{\typB}}} \sredlabel{sredSubst}\\
			&\sRed{\tm_{\A}}&:=&\tm\sredlabel{sredDefault}
	\end{align}}
Since $\sRed{\cdot}$ of a term is always almost proper by (\ref{sredP}), it follows that $\sRed{\sRed{\cdot}}$ is the same as $\sRed{\cdot}$.
\end{definition}

\begin{lemma}\label{lem:normalizingPrfTransform}
	Assume a well-typed \dhole theory $T$ and a conjecture $\ctx\dedT \termC{\phi}$ with $\ctx$ well-formed and $\termC{\phi}$ well-typed. Assume a valid HOL derivation $D$ of $\ctxT\dedPT \PhiAppl{\phi}$. 
	Choose type indices for the terms in $D$ according to the properties of the indexing lemma (Lemma~\ref{lem:indexing}). 
	Then, for any steps $S$ in $D$ we can construct a macro-step for the normalizing statement transformation replacing step $S$ s.t. after replacing all steps by their macro-steps:
	\begin{itemize}
		\item the resulting derivation is valid,
		\item all terms occurring in the derivation are almost proper (w.r.t. the quasi-preimages determined by the type indices).
	\end{itemize}
\end{lemma}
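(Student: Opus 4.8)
The plan is to prove the lemma by a case analysis on the HOL inference rule used in step $S$: for each rule I would exhibit a macro-step whose assumptions are the $\sRed{\cdot}$-images of the assumptions of $S$ and whose conclusion is $\sRed{\cdot}$ of the conclusion of $S$. Because the type indices fixed by Lemma~\ref{lem:indexing} are consistent across the whole derivation, these macro-steps chain together into a single HOL derivation in which $\sRed{\cdot}$ has been applied to every statement. The second bullet of the claim is then essentially free: by clause~\eqref{sredP} every term in the image of $\sRed{\cdot}$ is almost proper, and the recursive clauses guarantee the same for all subterms, so once validity is established the almost-properness of all occurring terms follows. Hence the entire content of the lemma is the rule-by-rule construction of valid micro-step sequences.

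Next I would dispatch the \emph{structural} rules, which form the large but routine part. For lookup, type-formation, type-equality, the implication rules, Boolean extensionality, and the congruences for abstraction and application, the transformation simply commutes with the relevant constructor: clauses~\eqref{sredEq}, \eqref{sredImpl} and \eqref{sredLam} push $\sRed{\cdot}$ through equality, implication and $\lambda$, while constants and variables are fixed by~\eqref{sredP}. In each such case the macro-step is the same HOL rule applied to the transformed premises, possibly preceded by congruence micro-steps (via \ruleRef{congAppl}, \ruleRef{congLam}, \ruleRef{congDed}) that bridge reductions already carried out inside subterms. The rule \ruleRef{nonempty} is handled by the last clause of Lemma~\ref{lem:indexing}, which lets us pick an inhabited preimage type and hence supply the default term $\defaultTerm{\A}$ to witness non-emptiness.

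The reduction rules need genuine macro-steps but are still manageable. For \ruleRef{beta} the conclusion $(\lambdaFun{x}{\A}\s)\ \tm\termEquals{\typB}\subst{\s}{x}{\tm}$ transforms via~\eqref{sredEq}, and the left-hand application, being a $\beta$-redex, is rewritten by~\eqref{sredBetaEtaRed} to $\sRed{\subst{\s}{x}{\tm}}$; since the right-hand side transforms to the same term, the transformed conclusion is a reflexivity instance derivable from the transformed typing premise, while in the proper subcase the HOL \ruleRef{beta} rule applies directly. The commutation of $\sRed{\cdot}$ with substitution recorded in~\eqref{sredSubst} is what makes both sides coincide; \ruleRef{eta} is analogous.

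The hard part will be the replacement clause~\eqref{sredAppl}, which sends an unnormalizably spurious, non-reducible application $(\termf\ \tm)_{\Bp}$ with mismatched index ($\A\neq\Ap$ or $\typB\neq\Bp$) to the default term $\defaultTerm{\typB}$. I must show that the derivation does not \emph{meaningfully} depend on such a term, so that replacing it uniformly by $\defaultTerm{\typB}$ throughout the proof preserves validity. The argument rests on three observations: a stuck spurious application is by definition neither a $\beta$- nor an $\eta$-redex, so it can never be the principal term of a structure-inspecting rule; by Lemma~\ref{lem:indexing} all of its occurrences carry the same index, so the replacement is a single uniform substitution and Lemma~\ref{lem:termwise-inj} ensures genuinely different spurious subterms are not conflated; and $\defaultTerm{\typB}$ has exactly the HOL type $\PhiAppl{\typB}$ demanded at every position, so all HOL typing side conditions survive. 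I expect the bulk of the work to lie in verifying, for each remaining rule (notably \ruleRef{congAppl} and the application-typing rule \ruleRef{appl}), that this uniform substitution leaves the step valid, the subtlety being that $\sRed{\cdot}$ first reduces and only then replaces, so the two phases must be shown to interact coherently.
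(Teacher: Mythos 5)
Your proposal follows essentially the same route as the paper's proof: induction over the derivation, macro-steps that commute $\sRed{\cdot}$ with the term constructors for the structural rules, the $\beta$/$\eta$ cases resolved by noting both sides of the transformed equation coincide (so reflexivity suffices), and the mismatched-index application case handled by the default term $\defaultTerm{\typB}$, whose correct HOL type and the reflexivity of equality between identical default terms make the \ruleRef{appl} and \ruleRef{congAppl} steps go through. The paper carries out the replacement locally inside each macro-step rather than as a global uniform substitution, but that is a presentational difference, not a different argument.
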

\begin{proof}[Proof of Lemma~\ref{lem:normalizingPrfTransform}]
	We will show this by induction on the shape of $D$.
	
	Firstly, we observe that there are no dependent types in HOL and the context and axioms contain no spurious subterms. Hence, well-formedness (of theories, contexts, types) and type-equality judgements are unaffected by the transformation. 
	So there is nothing to prove for the well-formedness and type-equality rules (those steps can be replaced by a macro step containing exactly this single step).
	
	\paragraph{Regarding the type indices:}
	We observe that the properties of the type indices provided by Lemma~\ref{lem:indexing} ensure that the smallest unnormalizably spurious terms (i.e. without unnormalizably spurious proper subterms) are function applications in which function and argument are both almost proper. Furthermore, in such a case the function is not a $\lambda$-function. 
	
	It remains to consider the typing and validity rules and to construct macro steps for the steps in the derivation using them for the normalizing statement transformation. 
	
	Since terms indexed by a type $\A$ have type $\PhiAppl{\A}$ it is easy to see from Definition~\ref{defn:normalizingPrfTransform} that the normalizing statement transformation replaces terms of type $\PhiAppl{\A}$ by terms of type $\PhiAppl{\A}$.
	
	\paragraph*{\ruleRef{const}:}
	Since constants are proper terms, there is nothing to prove.
	
	\paragraph*{\ruleRef{var}:}
	Since context variables are proper terms, there is nothing to prove.
	
	\paragraph*{\ruleRef{eqType}:}
	\begin{align}
		\NDLinePTD{\sRed{\s}_{\A}\ofT\PhiAppl{\A}}{\byAss}\label{eqTypeSR1}\\
		\NDLinePTD{\sRed{\tm}_{\A}\ofT\PhiAppl{\A}}{\byAss}\label{eqTypeSR2}\\
		\NDLinePTD{\sRed{\s}_{\A}\termEquals{\PhiAppl{\A}}\sRed{\tm}_{\A}\ofT\bool}{\ruleRef{eqType},(\ref{eqTypeSR1}),(\ref{eqTypeSR2})}\label{eqTypeSR3}\\
		\NDLinePTD{\sRed{\s_{\A}\termEquals{\PhiAppl{\A}}\tm_{\A}}_{\bool}\ofT\bool}{\ref{sredEq},(\ref{eqTypeSR3})}\nonumber
	\end{align}
	
	\paragraph*{\ruleRef{lambda}:}
	\begin{align}
		\NDLinePT{\concatCtx{\contextcolor{\Delta}}{\x_{\A}\ofT\PhiAppl{\A}}}{\sRed{\tm_{\typB}}_{\typB}\ofT\PhiAppl{\typB}}{\byAss}\label{lambdaSR1}\\
		\NDLinePTD{\left(\lambdaFun{\x_{\A}}{\PhiAppl{\A}}\sRed{\tm_{\typB}}_{\typB}\right)\ofT\PhiAppl{\A}\to \PhiAppl{\typB}}{\ruleRef{lambda},(\ref{lambdaSR1})}\label{lambdaSR2}
		\intertext{If $\sRed{t}_{\typB}$ isn't an unnormalizably spurious function application $\sRed{\termf_{\piType{y}{A'}\typB}}\ \x_{\A}$ for which $\x$ doesn't appear in $\termf$:}
		\NDLinePTD{\sRed{\lambdaFun{\x_{\A}}{\PhiAppl{\A}}\sRed{\tm_{\typB}}_{\typB}}\ofT\PhiAppl{\A}\to \PhiAppl{\typB}}{\ref{sredLam},(\ref{lambdaSR2})}\nonumber
		\intertext{Else by (\ref{sredBetaEtaRed}) we have  $\sRed{\lambdaFun{\x_{\A}}{\PhiAppl{\A}}\sRed{\tm_{\typB}}_{\typB}}=\sRed{\termf_{\piType{y}{\A}\typB}}$. By the remark about the type of $\sRed{\cdot}$ it follows that $\sRed{\termf_{\piType{y}{\A}\typB}}$ has type $\PhiAppl{\piType{y}{\A}\typB}=\PhiAppl{\A}\to\PhiAppl{\typB}$. }
		\NDLinePTD{\sRed{\termf_{\piType{y}{A'}\typB}}\ofT\PhiAppl{\A}\to \PhiAppl{\typB}}{see above}\label{lambdaSRfinal}\\
		\NDLinePTD{\sRed{\lambdaFun{\x_{\A}}{\PhiAppl{\A}}\sRed{\tm_{\typB}}_{\typB}}\ofT\PhiAppl{\A}\to \PhiAppl{\typB}}{(\ref{sredBetaEtaRed}),(\ref{lambdaSRfinal})}\nonumber
	\end{align}

	\paragraph*{\ruleRef{appl}:}
	\begin{align}
		\NDLinePTD{\sRed{\termf_{\piType{x}{\A}\typB}}\ofT\PhiAppl{\A}\to \PhiAppl{\typB}}{\byAss}\label{applSR1}\\
		\NDLinePTD{\sRed{\tm_{\Ap}}\ofT\PhiAppl{\A}}{\byAss}\label{applSR2}
		\intertext{If $\A \typeEquals \Ap$:}		
		\NDLinePTD{\sRed{\termf_{\piType{x}{\A}\typB}\ \tm_{\Ap}}\ofT\PhiB}{\ref{sredP},\ruleRef{lambda},(\ref{applSR1}),(\ref{applSR2})}\nonumber
		\intertext{If $\A \not\typeEquals \Ap$ then Lemma~\ref{lem:indexing}(6) implies that $\termf$ is not a lambda function and thus $\sRed{\termf_{\piType{x}{\A}\typB}}\ \sRed{\tm_{\Ap}}$ is not beta reducible. Thus by (\ref{sredAppl}) we have $$\sRed{\termf_{\piType{x}{\A}\typB}\ \tm_{\Ap}}=\sRed{\sRed{\termf_{\piType{x}{\A}\typB}}_{\piType{x}{\A}\typB}\ \sRed{\sRed{\tm_{\Ap}}_{\Ap}}}=\defaultTerm{\PhiAppl{\typB}}.$$ By construction we have $\defaultTerm{\PhiAppl{\typB}}\ofT\PhiAppl{\typB}$:}
		\NDLinePTD{\defaultTerm{\PhiAppl{\typB}}\ofT\PhiAppl{\typB}}{By Construction}\label{applSRC2final}\\
		\NDLinePTD{\sRed{\termf_{\piType{x}{\A}\typB}\ \sRed{\tm_{\Ap}}}\ofT\PhiB}{(\ref{sredApplP}),(\ref{sredAppl}),(\ref{applSRC2final})}\nonumber
	\end{align}
	
	\paragraph*{\ruleRef{implType}:}
	\begin{align}
		\NDLinePTD{\sRed{\termF}_{\bool}\ofT\bool}{\byAss}\label{implTypeSR1}\\
		\NDLinePTD{\sRed{\termG}_{\bool}\ofT\bool}{\byAss}\label{implTypeSR2}\\
		\NDLinePTD{\sRed{\termF}_{\bool}\impl \sRed{\termG}_{\bool}\ofT\bool}{\ruleRef{implType},(\ref{implTypeSR1}),(\ref{implTypeSR2})}\label{implTypeSR3}\\
		\NDLinePTD{\sRed{\termF_{\bool}\impl \termF_{\bool}}\ofT\bool}{\ref{sredImpl},(\ref{implTypeSR3})}\nonumber
	\end{align}
	
	\paragraph*{\ruleRef{axiom}:}
	Since translations of axioms to HOL are always proper terms and the additionally generated axioms are almost proper, there is nothing to prove here.
	
	\paragraph*{\ruleRef{assume}:}
	If the axiom is a typing axiom generated by the translation, it follows that it is almost proper. Similarly, if it is an axiom for a base type. 
	Otherwise:
	\begin{align}
		\namedass{ass}{\sRed{\termF_{\bool}}}&\text{ in $\theorycolor{\Delta}$}&&\text{\byAss}\label{assumeSR1}\\
		\NDLinePTD{\sRed{\termF_{\bool}}}{\ruleRef{assume},(\ref{assumeSR1})}\nonumber
	\end{align}
	By assumption $\sRed{\termF}$ almost proper (with a quasi-preimage of type $\bool$), so the conclusion of the rule is almost proper and there is nothing ot prove here.
	
	\paragraph*{\ruleRef{congLam}:}
	\begin{align}
		\NDLinePTD{\A\typeEquals \Ap}{\byAss}\label{congLamSR1}\\
		\NDLinePT{\concatCtx{\theorycolor{\Delta}}{\x_{\A}\ofT\PhiAppl{\A}}}{\sRed{\tm_{\typB}\termEquals{\PhiAppl{\typB}}\termtp_{\typB}}_{\bool}}{\byAss}\label{congLamSR2}\\
		\NDLinePT{\concatCtx{\theorycolor{\Delta}}{\x_{\A}\ofT\PhiAppl{\A}}}{\sRed{t_{\typB}}_{\typB}\termEquals{\PhiAppl{\typB}}\sRed{\termC{t'_{\typB}}}_{\typB}}{\ref{sredEq},(\ref{congLamSR2})}\label{congLamSR2a}\\
		\NDLinePTD{\lambdaFun{\x_{\A}}{\PhiAppl{\A}}\sRed{t_{\typB}}_{\typB}\termEquals{\PhiAppl{\A}\to \PhiAppl{\typB}}\nonumber\\&\lambdaFun{\x_{\A}}{\PhiAppl{\A}}\sRed{\termtp_{\typB}}_{\typB}}{\ruleRef{congLam},(\ref{congLamSR1}),(\ref{congLamSR2a})}\label{congLamSR3}
	\end{align}
	By assumption $\sRed{t}_{\typB}\termEquals{\PhiAppl{\typB}}\sRed{\termtp}_{\typB}$ almost proper with quasi-preimage consistent with type indices and $\A\typeEquals \Ap$, thus also $\lambdaFun{\x_{\A}}{\PhiAppl{\A}}\sRed{t}_{\typB}\termEquals{\PhiAppl{\A}\to \PhiAppl{\typB}}\lambdaFun{\x_{\A}}{\PhiAppl{\A}}\sRed{\termtp}_{\typB}$ almost proper with quasi-preimage consistent with type indices. 
	\begin{align}
		\NDLinePTD{\sRed{\lambdaFun{\x_{\A}}{\PhiAppl{\A}}\tm_{\typB}\termEquals{\PhiAppl{\A}\to \PhiAppl{\typB}}\lambdaFun{\x_{\A}}{\PhiAppl{\A}}\termtp_{\typB}}\QQNegSp}{\qquad\ref{sredLam},\ref{sredEq},(\ref{congLamSR3})}\nonumber
	\end{align} 
	
	\paragraph*{\ruleRef{congAppl}:}
	\begin{align}
		\NDLinePTD{\sRed{\tm_{\A}\termEquals{\PhiAppl{\A}}\termtp_{\A}}}{\byAss}\label{congApplSR1}\\
		\NDLinePTD{\sRed{\tm_{\A}}_{\A}\termEquals{\PhiAppl{\A}}\sRed{\termtp_{\A}}_{\A}}{\ref{sredEq},(\ref{congApplSR1})}\label{congApplSR1a}\\
		\NDLinePTD{\sRed{\termf_{\piType{x}{\Ap}\typB}\termEquals{\PhiAppl{\A}\to \PhiAppl{\typB}}\termfp_{\piType{x}{\Ap}\typB}}}{\byAss}\label{congApplSR2}\\
		\NDLinePTD{\sRed{\termf}_{\piType{x}{\Ap}\typB}\termEquals{\PhiAppl{\A}\to \PhiAppl{\typB}}\sRed{\termfp}_{\piType{x}{\Ap}\typB}}{\ref{sredEq},(\ref{congApplSR2})}\label{congApplSR2a}
		\end{align}
			Assume $\A\not\typeEquals \Ap$. By property 6. in Lemma~\ref{lem:indexing} $\sRed{\termf}$ and $\sRed{\termfp}$ are not $\lambda$-functions. 
			Consequently, the applications $\sRed{\termf}\ \sRed{\s}$ and $\sRed{\termfp}\ \sRed{\sp}$ are not beta or eta reducible. Thus, \[\sRed{\termf_{\piType{x}{\Ap}\typB}\ \tm_{\A}}=\defaultTerm{\PhiAppl{\typB}}\] and \[\mathsf{sRed}\big(\termfp_{\piType{x}{\Ap}\typB} \tm_{\A}\big)=\defaultTerm{\PhiAppl{\typB}}\] and we yield:
	\begin{align}
		\NDLinePTD{\sRed{\termf_{\piType{x}{\Ap}\typB}\ \tm_{\A}}\termEquals{\PhiB}\sRed{\termfp_{\piType{x}{\Ap}\typB}\ \termtp_{\A}}}{\ruleRef{refl}}\nonumber
	\end{align}
	Otherwise (if $\A\typeEquals \Ap$) the terms $\sRed{\termf}_{\piType{x}{\A}\typB}\ \sRed{\tm_{\A}}$ and $\sRed{\termfp}_{\piType{x}{\A}\typB}\ \sRed{\termtp_{\A}}$ are almost proper with quasi-preimages consistent with type indices. It follows: \[\sRed{\termf_{\piType{x}{\A}\typB}\ \sRed{\tm_{\A}}}=\sRed{\termf_{\piType{x}{\A}\typB}}\ \sRed{\tm_{\A}}\] and \[\sRed{\termfp_{\piType{x}{\A}\typB}\ \termtp_{\A}}=\sRed{\termfp_{\piType{x}{\A}\typB}}\ \sRed{\termtp_{\A}}\] and thus:
	\begin{align}
		\NDLinePTD{\sRed{\termf_{\piType{x}{\A}\typB}\ \tm_{\A}}\termEquals{\PhiAppl{\typB}}\sRed{\termfp}_{\piType{x}{\A}\typB}\ \termtp_{\A}}{\ruleRef{congAppl},(\ref{congApplSR1a}),(\ref{congApplSR2a})}\nonumber
	\end{align}
	In either case, we concluded \[\sRed{\termf_{\piType{x}{\A}\typB}\ \tm_{\A}}\termEquals{\PhiAppl{\typB}}\sRed{\termfp_{\piType{x}{\A}\typB}\ \termtp_{\A}}.\]
	By \ref{sredEq} this is already the desired conclusion of:
	$$\sRed{\termf_{\piType{x}{\A}\typB}\ \tm_{\A}\termEquals{\PhiAppl{\typB}}\termfp_{\piType{x}{\A}\typB}\ \termtp_{\A}}
	.$$
	
	\paragraph*{\ruleRef{refl}:}
	\begin{align}
		\NDLinePTD{\sRed{\tm_{\A}}_{\A}\ofT\PhiAppl{\A}}{\byAss}\label{reflSR1}\\
		\NDLinePTD{\sRed{\tm_{\A}}_{\A}\termEquals{\PhiAppl{\A}}\sRed{\tm_{\A}}_{\A}}{\ruleRef{refl},(\ref{reflSR1})}\label{reflSR2}\\
		\NDLinePTD{\sRed{\sRed{\tm_{\A}}_{\A}\termEquals{\PhiAppl{\A}}\sRed{t_{\A}}_{\A}}}{\ref{sredEq},(\ref{reflSR2})}\nonumber
	\end{align}
	
	\paragraph*{\ruleRef{sym}:}
	\begin{align}
		\NDLinePTD{\sRed{\tm_{\A}\termEquals{\PhiAppl{\A}}\s_{\A}}}{\byAss}\label{symSR1}\\
		\NDLinePTD{\sRed{\tm_{\A}}_{\A}\termEquals{\PhiAppl{\A}}\sRed{\s_{\A}}_{\A}}{\ref{sredEq},(\ref{symSR1})}\label{symSR2}\\
		\NDLinePTD{\sRed{\s_{\A}}_{\A}\termEquals{\PhiAppl{\A}}\sRed{\tm_{\A}}_{\A}}{\ruleRef{sym},(\ref{symSR2})}\label{symSR3}\\
		\NDLinePTD{\sRed{\s_{\A}\termEquals{\PhiAppl{\A}}\tm_{\A}}}{\ref{sredEq},(\ref{symSR3})}\nonumber
	\end{align}
	
	\paragraph*{\ruleRef{beta}:}
	\begin{align}
		\NDLinePTD{\sRed{\left(\lambdaFun{\x_{\A}}{\PhiAppl{\A}}\s_{\typB}\right)\ \tm_{\Ap}}_{\Bp}\ofT\PhiAppl{\typB}}{\byAss}\label{betaSR1}
		\intertext{
			By Lemma~\ref{lem:indexing} (6), it follows that $\A=\Ap$. 
			If $\sRed{\left(\lambdaFun{\x_{\A}}{\PhiAppl{\A}}\s_{\typB}\right)\ \tm_{\A}}_{\Bp}$ is almost proper with quasi-preimage of type $\typB\typeEquals \Bp$, then $\sRed{\left(\lambdaFun{\x_{\A}}{\PhiAppl{\A}}\s_{\typB}\right)\ \tm_{\A}}_{\Bp}=\left(\lambdaFun{\x_{\A}}{\PhiAppl{\A}}\s_{\typB}\right)\ \tm_{\A}$ and thus:}
		\NDLinePTD{\left(\lambdaFun{\x_{\A}}{\PhiAppl{\A}}\s_{\typB}\right)\ \tm_{\A}\ofT\PhiAppl{\typB}}{\ref{sredP},(\ref{betaSR1})}\label{betaSR2}\\
		\NDLinePTD{\left(\lambdaFun{\x_{\A}}{\PhiAppl{\A}}\s_{\typB}\right)\ \tm_{\A}\termEquals{\PhiAppl{\typB}}\subst{\s_{\typB}}{\x_{\A}}{\tm_{\A}}}{\ruleRef{beta},(\ref{betaSR2})}\label{betaSR3}\\
		\NDLinePTD{\sRed{\left(\lambdaFun{\x_{\A}}{\PhiAppl{\A}}\s_{\typB}\right)\ \tm_{\A}\termEquals{\PhiAppl{\typB}} \subst{\s_{\typB}}{\x_{\A}}{\tm_{\A}}}}{\ref{sredEq},\ref{sredP},(\ref{betaSR3})}\nonumber
	\end{align}
	Otherwise by \ref{sredApplP} and \ref{sredSubst}, $$\sRed{\left(\lambdaFun{\x_{\A}}{\PhiAppl{\A}}\s_{\typB}\right)\ \tm_{\Ap}}=\sRed{\subst{\s_{\typB}}{\x_{\A}}{\tm_{\Ap}}}$$ and we yield:
	\begin{align}
		\NDLinePTD{\sRed{\left(\lambdaFun{\x_{\A}}{\PhiAppl{\A}}\s_{\typB}\right)\ \tm_{\Ap}}\termEquals{\PhiAppl{\typB}}\sRed{\subst{\s_{\typB}}{\x_{\A}}{\tm_{\Ap}}}}{\ruleRef{refl},above observation}\label{betaSR5}\\
		\NDLinePTD{\sRed{\left(\lambdaFun{\x_{\A}}{\PhiAppl{\A}}\s_{\typB}\right)\ \tm_{\Ap}\termEquals{\PhiAppl{\typB}}\subst{\s_{\typB}}{\x_{\A}}{\tm_{\Ap}}}}{\ref{sredEq},(\ref{betaSR5})}\nonumber
	\end{align}
	
	\paragraph*{\ruleRef{eta}:}
	\begin{align}
		\NDLinePTD{\sRed{\tm_{\piType{x}{\A}\typB}}\ofT\PhiAppl{\A}\to \PhiAppl{\typB}}{\byAss}\label{etaSR1}\\
		&\text{$\x$ not in $\theorycolor{\Delta}$}&&\text{\byAss}\label{etaSR2}
		\intertext{Since $\sRed{\tm_{\piType{x}{\A}\typB}}$ is almost proper with quasi-preimage of type $\piType{x}{\A}\typB$, it follows that  $\lambdaFun{\x_{\A}}{\PhiAppl{\A}}\sRed{\tm_{\piType{x}{\A}\typB}}\ \x_{\A}$ is also almost proper with quasi-preimage of type $\piType{x}{\A}\typB$. It follows:}
		\NDLinePTD{\sRed{\tm_{\piType{x}{\A}\typB}}\termEquals{\PhiAppl{\A}\to \PhiAppl{\typB}}\lambdaFun{\x_{\A}}{\PhiAppl{\A}}\sRed{\tm_{\piType{x}{\A}\typB}}\ \x_{\A}}{\ruleRef{eta},(\ref{etaSR1}),(\ref{etaSR2})}\label{etaSR3}\\
		\NDLinePTD{\sRed{\tm_{\piType{x}{\A}\typB}\termEquals{\PhiAppl{\A}\to \PhiAppl{\typB}}\lambdaFun{\x_{\A}}{\PhiAppl{\A}}\tm_{\piType{x}{\A}\typB}\ \x_{\A}}}{\ref{sredApplP},\ref{sredLam},\ref{sredEq},(\ref{etaSR3})}\nonumber
	\end{align}

	\paragraph*{\ruleRef{congDed}:}
	\begin{align}
		\NDLinePTD{\sRed{\termF_{\bool}\termEqB \termC{F'}_{\bool}}}{\byAss}\label{congDedSR1}\\
		\NDLinePTD{\sRed{\termC{F'}_{\bool}}}{\byAss}\label{congDedSR3}\\
		\NDLinePTD{\sRed{\termF_{\bool}}\termEqB\sRed{\termC{F'}_{\bool}}}{\ref{sredEq},(\ref{congDedSR1})}\label{congDedSR2}\\
		\NDLinePTD{\sRed{\termF_{\bool}}}{\ruleRef{congDed},(\ref{congDedSR2}),(\ref{congDedSR3})}\nonumber
	\end{align}
	
	\paragraph*{\ruleRef{implI}:}
	\begin{align}
		\NDLinePTD{\sRed{\termF_{\bool}}\ofT\bool}{\byAss}\label{implISR1}\\
		\NDLinePT{\concatCtx{\theorycolor{\Delta}}{\namedass{ass_F}{\sRed{\termF_{\bool}}}}}{\sRed{\termC{G}_{\bool}}}{\byAss}\label{implISR2}\\
		\NDLinePTD{\sRed{\termF_{\bool}}\impl \sRed{\termC{G}_{\bool}}}{\ruleRef{implI},(\ref{implISR1}),(\ref{implISR2})}\label{implISR3}\\
		\NDLinePTD{\sRed{\termF_{\bool}\impl \termC{G}_{\bool}}}{\ref{sredImpl},(\ref{implISR3})}\nonumber
	\end{align}
	
	\paragraph*{\ruleRef{implE}:}
	\begin{align}
		\NDLinePTD{\sRed{\termF_{\bool}\impl \termC{G}_{\bool}}}{\byAss}\label{implESR1}\\
		\NDLinePTD{\sRed{\termF_{\bool}}}{\byAss}\label{implESR3}\\
		\NDLinePTD{\sRed{\termF_{\bool}}\impl\sRed{\termC{G}_{\bool}}}{\ref{sredImpl},(\ref{implESR1})}\label{implESR2}\\
		\NDLinePTD{\sRed{\termC{G}_{\bool}}}{\ruleRef{implE},(\ref{implESR2}),(\ref{implESR3})}\nonumber
	\end{align}
	
	\paragraph*{\ruleRef{boolExt}:}
	\begin{align}
		\NDLinePTD{\sRed{\p_{\bool\to\bool}\ \T_{\bool}}}{\byAss}\label{boolExtSR1}\\
		\NDLinePTD{\sRed{\p_{\bool\to\bool}\ \termF_{\bool}}}{\byAss}\label{boolExtSR2}\\
		\NDLinePTD{\sRed{\p_{\bool\to\bool}}\ \T}{(\ref{sredApplP}),(\ref{sredP}),(\ref{boolExtSR1})}\label{boolExtSR3}\\
		\NDLinePTD{\sRed{\p_{\bool\to\bool}}\ \F}{(\ref{sredApplP}),(\ref{sredP}),(\ref{boolExtSR2})}\label{boolExtSR4}\\
		\NDLinePTD{\univQuant{x}{\bool}\sRed{\p_{\bool\to\bool}}\ \x}{\ruleRef{boolExt},(\ref{boolExtSR3}),(\ref{boolExtSR4})}\label{boolExtSR5}\\
		\NDLinePTD{\sRed{\univQuant{x}{\bool}\p_{\bool\to\bool}\ \x}}{\ref{sredEq},\ref{sredLam},\ref{sredApplP},\ref{sredP},(\ref{boolExtSR5})}\nonumber
	\end{align}
	
	\paragraph*{\ruleRef{nonempty}:}
	\begin{align}
		\NDLinePTD{\sRed{\termF_{\bool}}\ofT\bool}{\byAss}\label{nonemptySR1}\\
		\NDLinePT{\concatCtx{\theorycolor{\Delta}}{\x_{\A}\ofT\PhiA}}{\sRed{\termF_{\bool}}}{\byAss}\label{nonemptySR2}\\
		\NDLinePTD{\sRed{\termF_{\bool}}}{\ruleRef{nonempty},(\ref{nonemptySR1}),(\ref{nonemptySR2})}\nonumber		
	\end{align}
\end{proof}

\subsection{Lifting admissible HOL derivations of validity statements to \dhole}\label{sec:sound}
We finally have all required results to prove the soundness of the translation from \dhole to HOL.
\begin{proof}[Proof of Theorem~\ref{thm:soundPaper}]
	As shown in Lemma~\ref{lem:normalizingPrfTransform}, we may assume that the proof of $\ctxT\dedPT \PhiAppl{\termF}$ is admissible, so it only contains almost-proper terms. 
	Consequently, whenever an equality $\s\termEquals{\PhiAppl{\A}}\tm$ is derivable in HOL and $\sp, \termtp$ are the quasi-preimages of $\s, \tm$ respectively, it follows that it's quasi-preimage $\sp\termEquals{\A}\tm$ is well-typed in \dhole and thus $\sp\ofT\A$ and $\termtp\ofT\A$. 
	Without loss of generality (adding extra assumptions throughout the proof) we may assume that the context of the (final) conclusion is the translation of a \dhole context.
	By Lemma~\ref{lem:termwise-inj} the translation is term-wise injective. 
	
	Therefore, the translated conjecture is a proper validity statement with unique (quasi)-preimage in \dhole. 
	If we can lift a derivation of the translated conjecture to a valid \dhole derivation of its quasi-preimage, the resulting derivation is a valid derivation of the original conjecture. 
	This means, that it suffices to prove that we can lift admissible derivations of a proper validity statement $S$ in HOL to a derivation of a quasi-preimage of $S$.
	
	We prove this claim by induction on the validity rules of HOL as follows:
	
	Given a validity rule $R$ with assumptions $A_1,\ldots,A_n$, validity assumptions (assumptions that are validity statements) $V_1,\ldots,V_m$, non-judgement assumptions (meaning assumptions that something occurs in a context or theory) $N_1,\ldots,N_p$ and conclusion $C$ we will show the following:
	\begin{claim}\label{soundness-inductive-claim}
		Assuming that the $A_i$ and the $N_j$ hold.
		\begin{enumerate}
			\item Assume that the conclusion $C$ is proper with quasi-preimage $C^{-1}$. Then the contexts $C_i$ of the $V_i$ are proper 
			and the quasi-preimages of the $V_i$ are well-formed. 
			\item Assume that whenever an $V_i$ is proper its quasi-preimage (where we choose the same preimages for identical terms and types with several possible preimages) holds in \dhole and that the conclusion $C$ is proper with quasi-preimage $C^{-1}$. 
			Then, $C^{-1}$ holds in \dhole.
		\end{enumerate}
	\end{claim}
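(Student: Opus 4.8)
The plan is to prove Claim~\ref{soundness-inductive-claim} by a case analysis over the finitely many HOL rules that can conclude a validity judgement, namely \ruleRef{axiom}, \ruleRef{assume}, \ruleRef{implI}, \ruleRef{implE}, \ruleRef{congDed}, \ruleRef{boolExt} and \ruleRef{nonempty}. Each of these has a direct DHOL counterpart, and the whole point of the normalization from Lemma~\ref{lem:normalizingPrfTransform} is that we may assume every term in the derivation is almost proper, so that quasi-preimages exist and are well-typed. For each rule I would discharge the two parts separately. Part~1 is a well-formedness propagation: from the properness of the conclusion $C$ (and the non-validity side conditions $A_i$ of the rule) I reconstruct the properness of the contexts $C_i$ of the validity premises and the well-formedness of their quasi-preimages. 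Part~2 is the genuine lifting: assuming the induction hypothesis that each \emph{proper} premise has a valid DHOL quasi-preimage, I apply the matching DHOL validity rule to obtain $C^{-1}$.

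For part~1 the key observations are that the context of every validity premise is either the conclusion's context or an extension of it by one assumption or one variable declaration, and that admissibility guarantees that even the auxiliary ``cut'' formulas that occur in a premise but vanish from the conclusion---such as $\termF$ in \ruleRef{implE} or $\termFp$ in \ruleRef{congDed}---are almost proper and hence have well-typed quasi-preimages. Properness of the premise contexts then follows by inspecting each rule, and well-formedness of the premise quasi-preimages follows by inversion on the DHOL typing rules: e.g.\ for \ruleRef{implI} one inverts \ruleRef{implType'} on the well-typed quasi-preimage $\termFp\implC\termGp$ of the conclusion to extract $\ctx\dedT\termFp\ofT\bool$ and $\ctx,\termFp\dedT\termGp\ofT\bool$. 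These are exactly the typing side conditions that the DHOL rules require in part~2, so the two parts interlock.

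For part~2 most cases are a one-to-one translation of the HOL rule into its DHOL analogue: \ruleRef{implI} into \ruleRef{implI'}, \ruleRef{implE} into \ruleRef{implE'}, \ruleRef{congDed} into \ruleRef{congDed'}, and \ruleRef{boolExt} into \ruleRef{boolExt'}, whose conclusion $\ctx,\x\ofT\bool\dedT\p\ \x$ matches the HOL conclusion verbatim so that no quantifier normalization is needed at the primitive-rule level. For \ruleRef{axiom} and \ruleRef{assume} I would use that the quasi-preimage of a translated axiom (resp.\ assumption) is the original DHOL axiom (resp.\ assumption), so that \ruleRef{axiom'} (resp.\ \ruleRef{assume'}) applies directly. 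The only case needing real care is \ruleRef{nonempty}: its premise introduces a fresh variable of a HOL type $\A$, and to lift it I would pick the quasi-preimage type using the last property of the indexing lemma (Lemma~\ref{lem:indexing}) together with the standing inhabitation hypothesis, so that the chosen DHOL type is provably non-empty and the modified DHOL \ruleRef{nonempty} rule applies.

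The hard part will be the interplay between part~1 and the cut rules: I must ensure that the quasi-preimages chosen for the disappearing auxiliary formulas are simultaneously well-typed and globally consistent with the quasi-preimages chosen elsewhere in the derivation, so that the induction hypothesis---which only yields validity of \emph{proper} premises---is actually applicable. This is secured by admissibility together with the term-wise injectivity of the translation (Lemma~\ref{lem:termwise-inj}), which forces the quasi-preimage to be unique once the type indices are fixed; the delicate bookkeeping lies in matching the side conditions produced in part~1 against those demanded by the DHOL rules in part~2. It is also in \ruleRef{nonempty} that the inhabitation assumption on the theory genuinely enters and, as noted, cannot currently be dispensed with.
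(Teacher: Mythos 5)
Your overall strategy---induction over the HOL validity rules, with part~1 as well-formedness propagation from the properness of the conclusion and part~2 as a rule-by-rule lifting into the corresponding DHOL rule, with the inhabitation hypothesis entering only in the nonempty case---is exactly the paper's. However, your case analysis is incomplete in a way that matters. You enumerate only the ``logical'' validity rules (axiom, assume, implI, implE, congDed, boolExt, nonempty), but in HOL a term equality $\ctx\dedT \s\termEquals{\A}\tm$ is itself a validity judgement, so the rules refl, sym, beta, eta, cong$\lambda$ and congAppl also conclude validity statements and must be covered by the induction. The paper devotes separate cases to each of these six rules, and they are among the least routine: the normalization of $\PhiAppl{\s}\termEquals{\PhiAppl{\A}}\PhiAppl{\tm}$ is the PER application $\termEqT{A}{\PhiAppl{\s}}{\PhiAppl{\tm}}$, which for function types unfolds to a quantified implication, and part~1 for, e.g., cong$\lambda$ requires recovering the well-formedness of $\concatCtx{\ctx}{\x\ofT\A}\dedT\tm\termEquals{\typB}\termtp$ from the properness of the conclusion by applying the functions to a fresh variable, beta-reducing, and using the eqTyping and sym meta-theorems---none of which is covered by your seven cases.

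A second, smaller inaccuracy: for boolExt you claim the conclusion ``matches the HOL conclusion verbatim so that no quantifier normalization is needed,'' but the paper's proof treats the conclusion as $\univQuant{x}{\bool}\p\ \x$, whose normalization is the doubly quantified implication $\univQuant{x,y}{\bool}\termEqT{\bool}{\x}{\y}\implC\ldots$, and the well-typedness of $\p\ \T$ and $\p\ \F$ is extracted by unfolding the definition of $\forall$ as an equality of $\lambda$-terms and beta-reducing. So quantifier normalization does enter even there. Your handling of the cut formulas in implE and congDed (via admissibility plus term-wise injectivity fixing the quasi-preimages once type indices are chosen) and of nonempty (via the last property of the indexing lemma) does match the paper; the fix needed is simply to extend the case analysis to all thirteen validity-concluding rules.
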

	
	Consider the first part of this claim, namely that if $C$ is proper then the $V_i$ are proper. 
	Since all formulae appearing in the derivation are almost proper, this implies that the $V_i$ themselves are proper and by construction (choice of quasi-preimage) the contexts of their quasi-preimages fit together with the context of $C^{-1}$.
	
	The translation clearly implies that if an $N_j$ holds in HOL, the corresponding non-judgement assumption ${N_j}^{-1}$ holds in \dhole (e.g. if $\PhiAppl{\termF}$ is an axiom in $\PhiAppl{\theorycolor{T}}$, then $\termF$ must be an axiom in $\theorycolor{T}$). 
	
	Since the validity judgement being derived is proper, it follows from this first part of the claim that the validity assumptions of all validity rules in the derivation are proper. 
	
	By induction on the validity rules, if given an arbitrary validity rule $R$ whose assumptions hold and whose validity assumptions all satisfy a property $P$ we can show that $P$ holds on the conclusion of $R$, then all derivable validity judgments have property $P$.
	Since all the validity assumptions and conclusions of validity rules in the derivation are proper, the property of having a derivable quasi-preimage is such a property.
	By this induction principle, it suffices to prove the claim for the validity rules in \hol. 
	
	We will therefore consider the validity rules one by one. For each rule we first prove the first part of the claim. 
	Sometimes we also need that the quasi-preimages of some non-validity (typically typing) assumptions hold, so we will prove that this also follows from the conclusion being proper.
	Then the assumption of the second part, combined with the first part implies that the quasi-preimages of the $V_i$ hold in \dhole and it is easy to prove that also $C^{-1}$ holds in \dhole.
	
	Throughout this proof we will use the notation $\quasiImage{\tm}$ to denote that $\tm$ is some quasi-preimage of $\quasiImage{\tm}$. 
	Since the translation is surjective on type-level we will only need this notation on term-level.
	
	Validity can be shown using the rules \ruleRef{congLam}, \ruleRef{eta}, \ruleRef{congAppl}, \ruleRef{congDed}, \ruleRef{beta}, \ruleRef{refl}, \ruleRef{sym}, \ruleRef{assume}, \ruleRef{axiom}, \ruleRef{implI}, \ruleRef{implE}, \ruleRef{boolExt} and \ruleRef{nonempty}.
	\subparagraph{\ruleRef{congLam}:}

	Since the conclusion is proper, it follows that the preimage
	\[
	\ctx\dedT \lambdaFun{x}{\A}\tm\termEquals{\piType{x}{\A}\typB}\lambdaFun{x}{\A}\termtp
	\] of the normalization \[\ctxT\dedPT 
	\univQuant{x}{\PhiAppl{\A}}\univQuant{y}{\PhiAppl{\A}}\termEqT{\A}{\x}{\y}\impl \termEqT{B}{\quasiImage{\tm}\ \x}{\quasiImage{\termtp}\ \y}\] of the conclusion is well-formed. 
	By rule \ruleRef{eqTyping} and rule \ruleRef{sym} we obtain $\ctx\dedT\lambdaFun{x}{\A}\tm\ofT\piType{x}{\A}\typB$ and $\ctx\dedT\lambdaFun{x}{\A}\termtp\ofT\piType{x}{\A}\typB$ in \dhole. 
	\begin{align}
		\NDLineTG{\lambdaFun{x}{\A}\tm\ofT\piType{x}{\A}\typB}{see above}\label{congApplP1.1}\\
		\NDLineTG{\lambdaFun{x}{\A}\termtp\ofT\piType{x}{\A}\typB}{see above}\label{congApplP1.2}\\
		\NDLineT{\concatCtx{\ctx}{\y \ofT\A}}{\left(\lambdaFun{x}{\A}\tm\right)\ \y \ofT\typB}{\ruleRef{appl},\ruleRef{varDed},(\ref{congApplP1.1}),\ruleRef{assume}}\label{congApplP1.3}\\
		\NDLineT{\concatCtx{\ctx}{\y \ofT\A}}{\left(\lambdaFun{x}{\A}\termtp\right)\ \y \ofT\typB}{\ruleRef{appl},\ruleRef{varDed},(\ref{congApplP1.2}),\ruleRef{assume}}\label{congApplP1.4}\\
		\NDLineT{\concatCtx{\ctx}{\y \ofT\A}}{\left(\lambdaFun{x}{\A}\tm\right)\ \y\termEquals{\typB}\subst{\tm}{x}{\y}}{\ruleRef{beta},(\ref{congApplP1.3})}\label{congApplP1.5}\\
		\NDLineT{\concatCtx{\ctx}{\y \ofT\A}}{\left(\lambdaFun{x}{\A}\termtp\right)\ \y\termEquals{\typB}\subst{\termtp}{\x}{\y}}{\ruleRef{beta},(\ref{congApplP1.4})}\label{congApplP1.6}\\		
		\NDLineT{\concatCtx{\ctx}{\x\ofT\A}}{\tm\ofT\typB}{$\alpha$-renaming,\ruleRef{congColon},(\ref{congApplP1.5})}\label{congApplP1.7}\\
		\NDLineT{\concatCtx{\ctx}{\x\ofT\A}}{\termtp\ofT\typB}{$\alpha$-renaming,\ruleRef{congColon},(\ref{congApplP1.6})}\label{congApplP1.8}\\
		\NDLineT{\concatCtx{\ctx}{\x\ofT\A}}{\tm\termEquals{\typB}\tm}{\ruleRef{eqType},(\ref{congApplP1.7}),(\ref{congApplP1.8})}\nonumber
	\end{align}
	Clearly, $\concatCtx{\ctx}{\x\ofT\A}\dedT \tm\termEquals{\typB}\termtp$ is a quasi-preimage of the validity assumption, so this proves the first part of the claim. 
	
	Regarding the second part:
	\begin{align}
		\NDLineT{\concatCtx{\ctx}{\x\ofT\A}}{\tm\termEquals{B}\termtp}{\byAss}\label{congLamC3}\\
		\NDLineTG{\A\typeEquals \A}{\ruleRef{tpEqRefl},\ruleRef{typingTp},(\ref{congLamC3})}\label{congLamC4}\\
		\NDLineTG{\lambdaFun{x}{\A}\tm\termEquals{\piType{x}{\A}\typB}\lambdaFun{x}{\A}\termtp}{\ruleRef{congLam'},(\ref{congLamC4})}\label{congLamC5}
	\end{align}
	
	\subparagraph{\ruleRef{eta}:}
	Since the rule has no validity assumption, the first part of the claim holds. 
	
	For the second part, we still need the quasi-preimage of the assumption to hold, so we will show that it follows from the conclusion being proper.
	
	Since the conclusion is proper, it follows that the preimage 
	\[
	\ctx\dedT \tm\termEquals{\piType{x}{\A}\typB}\lambdaFun{x}{\A}\tm\ \x
	\] of the normalization \[\ctxT\dedPT 
	\univQuant{x}{\PhiAppl{\A}}\univQuant{y}{\PhiAppl{\A}}\termEqT{\A}{\x}{\y}\impl \termEqT{\typB}{\quasiImage{\tm}\ \x}{\left(\lambdaFun{x}{\PhiAppl{\A}}\quasiImage{\tm}\ \x\right)\ \y}\] of the conclusion is well-formed. 
	By rule \ruleRef{eqTyping} and rule \ruleRef{sym} we obtain $\ctx\dedT \tm\ofT\piType{x}{\A}\typB$ and $\ctx\dedT\lambdaFun{x}{\A}\tm\ \x\ofT\piType{x}{\A}\typB$ in \dhole.
	Clearly, $\ctx\dedT \tm\ofT\piType{x}{\A}\typB$ is a quasi-preimage of the validity assumption, so this proves the quasi-preimage of the assumption of the rule. 
	
	Regarding the second part:
	\begin{align}
		\NDLineTG{\tm\ofT\piType{x}{\A}\typB}{see above}\label{etaC1}\\
		\NDLineTG{\tm\termEquals{\piType{x}{\A}\typB}\lambdaFun{x}{\A}\tm\ \x}{\ruleRef{etaPi},(\ref{etaC1})}\nonumber
	\end{align}
	\subparagraph{\ruleRef{congAppl}:}
	Since the conclusion is proper, it follows that the preimage 
	\[
	\ctx\dedT \termf\ \tm\termEquals{B}\termfp\ \termtp
	\] of the normalization \[\termEqT{B}{\quasiImage{\termf}\ \quasiImage{\tm}}{\quasiImage{\termfp}\ \quasiImage{\termtp}}\] of the conclusion is well-formed. 
	By rule \ruleRef{eqTyping} and rule \ruleRef{sym} we obtain $\ctx\dedT \termf\ \tm\ofT\typB$ and $\ctx\dedT \termfp\ \termtp\ofT\typB$ in \dhole.
	Obviously, $\ctx\dedT \tm\termEquals{\A}\termtp$ and $\ctx\dedT \termf\termEquals{\piType{x}{\A}\typB}\termfp$ are quasi-preimages of the validity assumptions. 
	
	Since the validity assumptions use the same context as the conclusion, it follows that they are both proper with uniquely determined context. As observed in the beginning of the proof if a proper assumption of a rule is an equality over a type $\PhiAppl{\A}$, the induction hypothesis implies that the quasi-preimage of that assumption in which the equality is over type $\A$ must be well-formed.
	Hence both $\ctx\dedT \tm\termEquals{\A}\termtp$ and $\ctx\dedT \termf\termEquals{\piType{x}{\A}\typB}\termfp$ are well-formed in \dhole, so we have proven the first part of the claim.
	
	Regarding the second part of the claim:
	\begin{align}
		\NDLineTG{\tm\termEquals{\A}\termtp}{\byAss}
		\label{congApplC7}\\
		\NDLineTG{\termf\termEquals{\piType{x}{\A}\typB}\termfp}{\byAss}
		\label{congApplC8}\\
		\NDLineTG{\termf\ \tm\termEquals{\typB}\termfp\ \termtp}{\ruleRef{congAppl},(\ref{congApplC7}),(\ref{congApplC8})}
	\end{align}
	This is what we had to show.
	
	\subparagraph{\ruleRef{congDed}:}
	Since the conclusion is proper, it follows that the preimage 
	\[
	\ctx\dedT \termF
	\] of the normalization \[\ctxT\dedPT \quasiImage{\termF}\] of the conclusion is well-formed. 
	Thus we have $\ctx\dedT \termF\ofT\bool$. 
	Since the validity assumptions use the same context as the conclusion, it follows that they are both proper with uniquely determined. As observed in the beginning of the proof if a proper assumption of a rule is an equality over a type $\PhiAppl{\A}$ (here $\A=\PhiAppl{\A}=\bool$), the induction hypothesis implies that the quasi-preimage of that assumption in which the equality is over type $\bool$ must be well-formed.
	Clearly, $\ctx\dedT \termFp\termEqB \termF$ and $\ctx\dedT \termF$ are the quasi-preimages of the two validity assumptions. 
	Since the former is a validity statement about the quasi-preimage of an equality, it follows that $\ctx\dedT \termFp\termEqB \termF$ is well-formed.
	We have already seen that $\ctx\dedT \termF$ is well-typed.
	This shows the first part of the claim.

	Regarding the second part:
	\begin{align}
		\NDLineTG{\termFp\termEqB \termF}{\byAss}\label{congDedC1}\\
		\NDLineTG{\termFp}{\byAss}\label{congDedC2}\\
		\NDLineTG{\termF}{\ruleRef{congDed},(\ref{congDedC1}),(\ref{congDedC2})}\nonumber
	\end{align}
	\subparagraph{\ruleRef{beta}:}
	Since the rule has no validity assumptions, the first part of the claim trivially holds.
	
	Since the conclusion is proper, it follows that the preimage 
	\[
	\ctx\dedT \big(\lambdaFun{x}{\A}\s\big)\ \tm\termEquals{\piType{x}{\A}\typB}\subst{\s}{\x}{\tm}
	\] of the normalization \[\ctxT\dedPT 
	\termEqT{B}{\big(\lambdaFun{x}{\PhiAppl{\A}}\quasiImage{\s}\big)\ \quasiImage{\tm}}{\subst{\quasiImage{\s}}{x}{\quasiImage{\tm}}}\] of the conclusion is well-formed. 
	By rule \ruleRef{eqTyping}, we obtain $\ctx\dedT \big(\lambdaFun{x}{\A}\s\big)\ \tm\ofT\typB$ in \dhole.
	Clearly, $\ctx\dedT \big(\lambdaFun{x}{\A}\s\big)\ \tm\ofT\typB$ is a quasi-preimage of the assumption of the rule, so we have proven that the quasi-preimage of the assumption of the rule holds in \dhole.
	
	Regarding the second part:
	\begin{align}
		\NDLineTG{\big(\lambdaFun{x}{\A}\s\big)\ \tm\ofT\typB}{see above}\label{betaC1}\\
		\NDLineTG{\big(\lambdaFun{x}{\A}\s\big)\ \tm\termEquals{\piType{x}{\A}\typB}\subst{\s}{x}{\tm}}{\ruleRef{beta},(\ref{betaC1})}\nonumber
	\end{align}
	
	\subparagraph{\ruleRef{refl}:}
	Once again the rule has no validity assumptions, so the first part of the claim trivially holds.
	
	Since the conclusion is proper, it follows that the preimage 
	\[
	\ctx\dedT \tm\termEquals{\A} \termtp
	\] of the normalization \[\ctxT\dedPT \termEqT{\A}{\quasiImage{\tm}}{\quasiImage{\tm}}\] of the conclusion is well-formed. 
	By Lemma~\ref{lem:termwise-inj} it follows that $\tm$ and $\termtp$ are identical so the quasi-preimage is $\ctx\dedT \termEquals{\A} \tm$. 
	By rule \ruleRef{eqTyping}, we obtain $\ctx\dedT \tm\ofT\A$ in \dhole, the quasi-preimage of the assumption of the rule.
	
	Regarding the second part of the claim:
	\begin{align}
		\NDLineTG{\tm\ofT\A}{see above}\label{reflC1}\\
		\NDLineTG{\tm\termEquals{\A}\tm}{\ruleRef{refl},(\ref{reflC1})}\nonumber
	\end{align}
	
	\subparagraph{\ruleRef{sym}:}
	Since the conclusion is proper, it follows that the preimage 
	\[
	\ctx\dedT \tm\termEquals{\A}\s
	\] of the normalization \[
	\ctxT\dedPT \termEqT{\A}{\quasiImage{\tm}}{\quasiImage{\s}}\] of the conclusion is well-formed. 
	By the rules \ruleRef{eqTyping} and \ruleRef{sym} both $\ctx\dedT \tm\ofT\A$ and $\ctx\dedT \s\ofT\A$ follow. 
	By rule \ruleRef{eqType} it follows that $\ctx\dedT \s\termEquals{\A}\tm$ is well-formed. 
	Clearly, $\ctx\dedT \s\termEquals{\A}\tm$ is the quasi-preimages of the validity assumption, so we have proven the first part of the claim.
	
	Regarding the second part:
	\begin{align}
		\NDLineTG{\tm\termEquals{\A}\s}{\byAss}\label{symC1}\\
		\NDLineTG{\s\termEquals{\A}\tm}{\ruleRef{sym},(\ref{symC1})}
	\end{align}
	
	\subparagraph{\ruleRef{assume}:}
	Once again, there are no validity assumption, so the first part of the claim is trivial.
	
	Since the conclusion is proper, it follows that the preimage 
	\[
	\ctx\dedT \termF
	\] of the normalization \[
	\ctxT\dedPT \quasiImage{\termF}\] of the conclusion is well-formed and thus $\ctx\dedT \termF\ofT\bool$.
	\begin{align}
		\NDLineTG{\termF\ofT\bool}{see above}\label{assumeP1.1}\\
		\NDLineTG{\Type{\bool}}{\ruleRef{typingTp},(\ref{assumeP1.1})}\label{assumeP1.2}\\
		\NDLineT{}{\Ctx{\ctx}}{\ruleRef{tpCtx},(\ref{assumeP1.2})}\label{assumeP1.3}		
	\end{align}
	
	The context assumption may be the translation of a context assumption in \dhole or a typing assumption added by the translation. 
	In the latter case, $\termF$ is of the form $\termF=\PredPhi{\A}{\x}$ for $\ctxIn{\x\ofT\A}{\ctx}$. 
	In that case, the second part of the claim $\ctx\dedT \termF$ can be concluded as follows:
	\begin{align}			
		\NDLineTG{\x\ofT\A}{\ruleRef{var''}\QQQuad}\label{assumeCT1}
	\end{align}
	\vspace{-2.5\baselineskip\baselinestretch}
	\begin{align}
		\NDLineTG{\x\termEquals{\A}\tm}{\ruleRef{refl},(\ref{assumeCT1})}\label{assumeCT2}\\
		\NDLineTG{\termF}{\byAss{} $\termF=\PredPhi{\A}{x}$,(\ref{assumeCT2})}\nonumber
	\end{align}
	
	Otherwise:
	\begin{align}
		&\ctxIn{\namedass{ass}{\termF}}{\ctx}&&\text{\byAss}\label{assumeC1}\\
		\NDLineTG{\termF}{\ruleRef{assume},(\ref{assumeC1}),(\ref{assumeP1.3})}\nonumber
	\end{align}

	\subparagraph{\ruleRef{axiom}:}
	Once again, there are no validity assumption, so the first part of the claim is trivial.
	
	Since the conclusion is proper, it follows that the preimage 
	\[
	\ctx\dedT \termF
	\] of the normalization \[
	\ctxT\dedPT \quasiImage{\termF}\] of the conclusion is well-formed and thus $\ctx\dedT \termF\ofT\bool$.
	\begin{align}
		\NDLineTG{\termF\ofT\bool}{see above}\label{axiomP1.1}\\
		\NDLineTG{\Type{\bool}}{\ruleRef{typingTp},(\ref{axiomP1.1})}\label{axiomP1.2}\\
		\NDLineT{}{\Ctx{\ctx}}{\ruleRef{tpCtx},(\ref{axiomP1.2})}\label{axiomP1.3}		
	\end{align}
	The axiom may be the translation of an axiom in $\thy$, a typing axiom added by the translation or an axiom added for some base type $\A$.
	In the first case, the second part of the claim follows by:
	\begin{align}
		&\thyIn{\namedax{ax}{\termF}}{\thy}&&\text{\byAss}\label{axiomC1}\\
		\NDLineTG{\termF}{\ruleRef{axiom},(\ref{axiomC1}),(\ref{axiomP1.3})}\nonumber
	\end{align}
	If the axiom is a typing axiom then its preimage states that some constant $\c$ of type $\A$ satisfies $\c\termEquals{\A}\tm$ which follows by rule \ruleRef{refl}.
	
	If the axiom is the PER axiom generated for some $\A$ type declared in $\thy$, then it's quasi-preimage states that equality on $\A$ implies itself which is obviously true. 
	
	\subparagraph{\ruleRef{implI}:}
	Since the conclusion is proper, it follows that the preimage 
	\[
	\ctx\dedT \termF\impl \termG
	\] of the normalization \[
	\ctxT\dedPT \quasiImage{\termF}\impl \quasiImage{\termG}\] of the conclusion is well-formed and thus $\ctx\dedT \termF\impl \termG\ofT\bool$.
	\begin{align}
		\NDLineTG{\termF\impl \termG\ofT\bool}{see above}\label{implIP1.1}\\
		\NDLineTG{\termF\ofT\bool}{\ruleRef{implTypingL},(\ref{implIP1.1})}\label{implIP1.2}\\
		\NDLineTG{\termG\ofT\bool}{\ruleRef{implTypingR},(\ref{implIP1.1})}\label{implIP1.3}\\
		\NDLineT{\concatCtx{\ctx}{\namedass{ass}{\termF}}}{\termG\ofT\bool}{\ruleRef{assDed},(\ref{implIP1.3})}\nonumber
	\end{align}
	Obviously $\concatCtx{\ctx}{\namedass{ass}{\termF}}\dedT \termG$ is a quasi-preimage of the validity assumption of the rule, so the first part of the claim is proven.
	
	Regarding the second part:
	\begin{align}
		\NDLineT{\concatCtx{\ctx}{\namedass{ass}{\termF}}}{\termG}{\byAss}\label{implIC1}\\
		\NDLineTG{\termF\impl \termG}{\ruleRef{implI},(\ref{implIP1.2}),(\ref{implIC1})}\nonumber
	\end{align}
	
	\subparagraph{\ruleRef{implE}:}
	Since the conclusion is proper, it follows that the preimage 
	\[
	\ctx\dedT \termG
	\] of the normalization \[
	\ctxT\dedPT \quasiImage{\termG}\] of the conclusion is well-formed and thus $\ctx\dedT \termC{\termG}\ofT\bool$.
	
	Since the validity assumptions use the same context as the conclusion, it follows that they are both proper and uniquely determined.
	
	Since the formula $\quasiImage{\termF}$ (where $\ctxT\dedPT \quasiImage{\termF}$ is the second validity assumption) must be almost proper, it follows that its preimage $\termF$ is well-typed i.e. $\ctx\dedT \termF\ofT\bool$.
	\begin{align}
		\NDLineTG{\termF\ofT\bool}{$\quasiImage{\termF}$ almost proper}\label{implEP1.1}\\
		\NDLineTG{\termG\ofT\bool}{see above}\label{implEP1.2}\\
		\NDLineTG{\termF\impl \termG\ofT\bool}{\ruleRef{implType'},(\ref{implEP1.1}),(\ref{implEP1.2})}\nonumber
	\end{align}
	Clearly, $\ctx\dedT \termF\impl \termG$ and $\ctx\dedT \termF$ are quasi-preimages of the two validity assumptions of the rule, so we have proven the first part of the claim.
	
	Regarding the second part:
	\begin{align}
		\NDLineTG{\termF\impl \termG}{\byAss}\label{implEC1}\\
		\NDLineTG{\termF}{\byAss}\label{implEC2}\\
		\NDLineTG{\termG}{\ruleRef{implE},(\ref{implEC1}),(\ref{implEC2})}\nonumber
	\end{align}
	
	\subparagraph{\ruleRef{boolExt}:}
	Since the conclusion is proper, it follows that the preimage 
	\[
	\ctx\dedT \univQuant{x}{\bool}\p\ \x
	\] of the normalization \[
	\ctxT\dedPT \univQuant{x}{\bool}\univQuant{y}{\bool}\termEqT{\bool}{\x}{\y}\impl \termEqT{\bool}{\left(\lambdaFun{x}{\bool}\T\right)\ \x}{\quasiImage{p}\ \y}\] of the conclusion is well-formed and thus $\ctx\dedT \univQuant{x}{\bool}\p\ \x\ofT\bool$.
	Expanding the definition of $\forall$ yields:
	\begin{align}
		\NDLineTG{\lambdaFun{x}{\bool}\T\termEquals{\piType{x}{\bool}\bool}\lambdaFun{x}{\bool}\p\ \x\ofT\bool\QNegSp}{\Quad see above}\label{boolExtP1.1}\\
		\NDLineTG{\lambdaFun{x}{\bool}\p\ \x\ofT\piType{x}{\bool}\bool}{\ruleRef{eqTyping},\ruleRef{sym},(\ref{boolExtP1.1})}\label{boolExtP1.2}\\
		\NDLineTG{\left(\lambdaFun{x}{\bool}\p\ \x\right)\ \T\ofT\bool}{\ruleRef{appl},(\ref{boolExtP1.2})}\label{boolExtP1.3}\\
		\NDLineTG{\left(\lambdaFun{x}{\bool}\p\ \x\right)\ \termF\ofT\bool}{\ruleRef{appl},(\ref{boolExtP1.2})}\label{boolExtP1.4}\\
		\NDLineTG{\p\ \T\termEqB \left(\lambdaFun{x}{\bool}\p\ \x\right)\ \T}{\ruleRef{sym},\ruleRef{beta},(\ref{boolExtP1.3})}\label{boolExtP1.5}\\
		\NDLineTG{\p\ \F\termEqB \left(\lambdaFun{x}{\bool}\p\ \x\right)\ \F}{\ruleRef{sym},\ruleRef{beta},(\ref{boolExtP1.4})}\label{boolExtP1.6}\\
		\NDLineTG{\p\ \T\ofT\bool}{\ruleRef{eqTyping},(\ref{boolExtP1.5})}\nonumber\\
		\NDLineTG{\p\ \termF\ofT\bool}{\ruleRef{eqTyping},(\ref{boolExtP1.6})}\nonumber
	\end{align}
	Since $\ctx\dedT \p\ \T$ and $\ctx\dedT \p\ \F$ are clearly quasi-preimages of the two validity assumptions of the rule, we have proven the first part of the claim.
	
	Regarding the second part:
	\begin{align}
		\NDLineTG{\p\ \T}{\byAss}\label{boolExtC1}\\
		\NDLineTG{\p\ \F}{\byAss}\label{boolExtC2}\\
		\NDLineTG{\univQuant{x}{\bool}\p\ \x}{\ruleRef{boolExt},(\ref{boolExtC1}),(\ref{boolExtC2})}
	\end{align}

	\subparagraph{\ruleRef{nonempty}:}
	Since the conclusion is well-formed, it follows that the preimage $\ctx\dedT \termF$ of the normalization \[\ctxT\dedPT \quasiImage{\termF}\] of the conclusion is well-formed. This also implies that $\concatCtx{\ctx}{\x:\A}\dedT\termF:\bool$.
	Regarding the second part:
	By choice of the type indices (last property in Lemma~\ref{lem:indexing}) we know that the type \A is non-empty in DHOL (this choice is possible since $\thy$ is partially inhabited), so there is some term \tm with $\ctx\dedT \tm:\A$. 
	\begin{align}
		\NDLineT{\concatCtx{\ctx}{\x:\A}}{\termF}{\byAss}\label{nonempty1}\\
		\NDLineTG{\tm:\A}{\byAss}\label{nonempty2}\\
		\NDLineTG{\univQuant{x}{\A}\termF}{\ruleRef{forallI},(\ref{nonempty1})}\label{nonempty3}\\
		\NDLineTG{\termF}{\ruleRef{forallE},(\ref{nonempty3}),(\ref{nonempty2})}\nonumber
	\end{align}
\end{proof}
\end{appendix}
\end{document}